\newsavebox{\ffbox}\newlength{\ffboxlen}
\newcommand{\todo}[1]{{\sbox{\ffbox}{\textbf{TODO:}\ \textit{{#1}}\ \textbf{:ODOT}}
  \settowidth{\ffboxlen}{\usebox{\ffbox}}
  \addtolength{\ffboxlen}{-5mm}
  \ifthenelse{\ffboxlen>\linewidth}{%
  \noindent\marginpar{$>>>>$}\textbf{TODO:}\ \textit{{#1}}\ \textbf{:ODOT}\marginpar{$<<<<$}}{%
  \noindent\marginpar{$>><<$}\textbf{TODO:}\ \textit{{#1}}\ \textbf{:ODOT}}}}
\DeclareSymbolFont{largesymbolsA}{U}{txexa}{m}{n}
\DeclareMathSymbol{\varprod}{\mathop}{largesymbolsA}{"10}
\tikzset{monRect/.style={draw,rectangle,minimum width=10mm,minimum height=5mm}}
\algnewcommand\algorithmicreturn{\textbf{return }}
\algnewcommand\RETURN{\State \algorithmicreturn}%
\newcommand\Algphase[1]{%
\vspace*{-.7\baselineskip}\Statex\hspace*{\dimexpr-\algorithmicindent-2pt\relax}\rule{\textwidth}{0.4pt}%
\Statex\hspace*{-\algorithmicindent}\textbf{#1}%
\vspace*{-.7\baselineskip}\Statex\hspace*{\dimexpr-\algorithmicindent-2pt\relax}\rule{\textwidth}{0.4pt}%
}
\newlength{\wordlength}
\newtheorem{claim}{Claim}
\newcommand{\midd}{\mathrel{:}}
\author{HARIS AZIZ\affil{Data61, CSIRO and UNSW Australia} \and SIMON MACKENZIE\affil{Carnegie Mellon University, USA}}
\title{A Discrete and Bounded Envy-free Cake Cutting Protocol\\ for Any Number of Agents}   
\begin{abstract}
We consider the well-studied cake cutting problem in which the goal is to find an envy-free allocation based on queries from $n$ agents.
The problem has received attention in computer science, mathematics, and economics. 
It has been a major open problem whether there exists a discrete and bounded envy-free protocol. 
We resolve the problem by proposing a discrete and bounded envy-free protocol for any number of agents.
The maximum number of queries required by the protocol is $n^{n^{n^{n^{n^n}}}}$.
We additionally show that even if we do not run our protocol to completion, it can find in at most $n^3{(n^2)}^n$ queries a partial allocation of the cake that achieves  proportionality (each agent gets at least $\nicefrac{1}{n}$ of the value of the whole cake) and envy-freeness. Finally we show that an envy-free partial allocation can be computed in at most $n^3{(n^2)}^n$ queries such that each agent gets a connected piece that gives the agent at least $\nicefrac{1}{3n}$ of the value of the whole cake. 
\end{abstract}
\keywords{Fair Division, Elicitation Protocols, Multiagent Resource Allocation, Cake cutting, Communication Complexity.}
\begin{document}
\begin{bottomstuff}
Emails: \texttt{haris.aziz@data61.csiro.au, simonm@andrew.cmu.edu} 
\end{bottomstuff}
\sloppy

\maketitle

\section{Introduction}

\begin{quote}
  ``Despite intense efforts over decades, up to this date no one has succeeded in finding a finite bounded cake cutting protocol that guarantees envy-freeness for any number of players.''--- Lindner and Rothe~\cite{LiRo15a}.
\end{quote}


The existence of a discrete and bounded envy-free cake cutting protocol has remained a major open problem for at least two decades. In this paper, we settle the problem by presenting a general discrete envy-free protocol for any number of agents.
The protocol is for the cake cutting setting that is a versatile mathematical model  for allocation of a heterogeneous divisible good among multiple agents with possibly different preferences over different parts of the cake. The main applications of cake cutting are fair scheduling, resource allocation, and conflict resolution~\cite{DQS12a}. Cake cutting has been extensively studied within computer science~\cite{Proc15a} and the social sciences~\cite{Thom07a}. Since various important divisible resources such as time and land can be captured by cake cutting, the problem of fairly dividing the cake is a fundamental one within the area of fair division and multiagent resource allocation~\cite{BrTa96a,Guo15a,Proc12a,RoWe98a,Stew99a,Su99a,Thom07a}.

A cake is represented by an interval $[0,1]$ and each of the $n$ agents has a valuation function over pieces of the cake that specifies how much that agent values a particular subinterval. The main goal is to divide the cake fairly. 
Among various fairness concepts proposed by social scientists, a prominent one is \emph{envy-freeness}.
An allocation is envy-free if no agent would prefer to take another agent's allocation instead of his own. 
Although an envy-free allocation is guaranteed to exist even with $n-1$ cuts~\cite{Su99a}\footnote{Su~\cite{Su99a} pointed out that the existence of an envy-free cake allocation can be shown via an interesting connection with Sperner's Lemma.}, \emph{finding} an envy-free allocation is a challenging problem which has been termed ``one of the most important open problems in 20th century mathematics'' by Garfunkel~\cite{Garf88a}.

\paragraph{Motivation and Contribution}
Since the valuations of agents over the cake can be complex, eliciting each agent's complete valuations function over the cake is computationally infeasible.
A natural approach in cake cutting protocols is to query agents about their valuations of different portions of the cake and based on these queries propose an allocation.
A cake cutting protocol is \emph{envy-free} if each agent is guaranteed to be non-envious if he reports his real valuations. 
If a protocol is envy-free, then an honest agent will not be envious even if other agents misreport their valuations. 
For the case of two agents, the problem has a well known solution in the form of the \emph{Divide and Choose} protocol: one agent is asked to cut the cake into equally preferred pieces and the other agent is asked to choose the preferred piece. 
For the case of three agents, an elegant and bounded protocol was independently discovered by John L. Selfridge and John H. Conway around 1960~\cite{BrTa96a}. Since then, an efficient general envy-free protocol for any number of agents has eluded mathematicians, economists, and computer scientists.


In 1995, Brams and Taylor~\cite{BrTa95a} made a breakthrough by presenting an envy-free protocol for \emph{any} number of agents~\cite{Hive95a}. Although the protocol is guaranteed to terminate in finite time, there is one  drawback of the protocol: the running time or number of queries and even the number of cuts required is unbounded even for four agents. In other words, the number of queries required to identify an envy-free allocation can be arbitrarily large for certain valuations functions. 
Procaccia~\cite{Proc15a} terms unboundedness as a ``serious flaw''.
Brams and Taylor were cognizant of their protocol's drawback and explicitly mentioned the problem of proposing a bounded envy-free protocol even for $n=4$. 
Lindner and Rothe~\cite{LiRo09a} write that ``the development of finite bounded envy-free cake cutting protocols still appears to be out of reach, and a big challenge for future research.'' 
The problem has remained open and has been highlighted in several works~\cite{BaTa95a,BrTa95a,BrTa96a,BKM05a,EdPr06a,HHA15a,KLP13a,Proc12a,Proc15a,LiRo15a,RoWe98a,SaWa09a}. 
Saberi and Wang~\cite{SaWa09a} term the problem as ``one of the most important open problems in the field''. 
Procaccia~\cite{Proc12a} goes one step further and calls it an interesting challenge within theoretical computer science: ``Since the 1940s, the computation of envy-free cake divisions has baffled many great minds across multiple disciplines. Settling this problem once and for all is an important challenge for theoretical computer science.'' 
In recent work, Aziz and Mackenzie~\cite{AzMa16a} proposed a bounded and discrete envy-free protocol for $n=4$. Despite this progress, the general problem for any number of agents had remained unaddressed. 

In this paper, \emph{we present a discrete envy-free protocol for any number of agents that requires a bounded number of queries and hence a bounded numbers of cuts of the cake} thereby closing the central open problem in the field of cake cutting. 
The bound for the number of queries is $n^{n^{n^{n^{n^n}}}}$.
Previously, no discrete protocol was known that even uses a bounded number of cuts. 
Apart from using some classic ideas in cake-cutting,   
our protocol requires some radically different techniques. 
Since only three general finite and discrete cake cutting protocols have been presented in the literature~\cite{BrTa95a,RoWe97a,Pikh00a}, our protocol is another addition to the list of finite envy-free protocols and is of interest even if boundedness is not a critical property. More generally, the protocol provides a new constructive argument for existence of envy-free allocations. Our result is surprising because leading experts in the area have conjectured that a bounded and discrete envy-free protocol does not exist: ``It is natural to ask whether the envy-free cake cutting problem is inherently difficult: Is it provably impossible to design a bounded envy-free cake cutting algorithm?''\cite{Proc15a}. Previously, mathematician Ian Stewart had commented that ``However, no discrete procedure with a bounded number cuts (however large) is known for four players, and such schemes probably don't exist''\cite{Stew06a}.

We additionally show that even if we do not run our protocol to completion, it can find in at most $n^3{(n^2)}^n$ queries a partial allocation of the cake that achieves proportionality (with respect to the whole cake) and envy-freeness. Finally, we show that even if we do not run our protocol to completion, it can be tweaked to find in at most $n^3{(n^2)}^n$ queries an envy-free partial allocation of the cake in which each agent gets a \emph{connected piece} that gives the agent $\nicefrac{1}{3n}$ of the value of the whole cake. For the additional two results, we only need one of our subprotocols called the SubCore Protocol that we  define later.
%
%

Apart from introducing new techniques, our protocol relies on some ideas from previous protocols. 
At the heart of our protocol is the idea of domination or `irrevocable advantage'~\cite{BrTa95a}. An agent $i$ dominates another agent $j$ if he is not envious of $j$ even if the unallocated cake is given to $j$. Note that if agents in subset $S\subset N$ dominate all the agents in $N\setminus S$, then agents in $S$ can simply let the agents in $N\setminus S$ worry about the remaining cake since they will not be envious if a single agent in $N\setminus S$ got all the remaining unallocated cake. 
Our protocol uses some high level ideas from our previous protocol for 4 agents: 
\begin{inparaenum}[(i)]
 \item We use a `Core Protocol' over the unallocated cake (aka residue) with one agent as the specified cutter that results in an envy-free partial allocation and a smaller residue. The Core Protocol is called repeatedly on the latest residue so as to make the residue even smaller. As the new calls to the Core Protocol  are made, the partial allocation remains envy-free. 
 \item Every time agents make trims or evaluate pieces of cake, we do not lose such information and in fact keep track of it. 
 \item We make agents exchange some parts of their already allocated pieces to obtain new dominations. The exchange was referred to as a \emph{permutation} in  \cite{AzMa16a}.
\end{inparaenum}
In some steps, our protocol also uses a simple fact: envy-freeness implies proportionality with respect to the allocated cake (i.e., each agent gets at least $\nicefrac{1}{n}$ value of the allocated cake).

Throughout the overall protocol, we have cake that has been allocated and some cake that is the unallocated residue. During the protocol, we ensure that the cake that has been allocated to the agents has been done so in an envy-free manner.    
Overall, our envy-free protocol (\emph{Main Protocol}) takes as input the cake and a set of agents and returns an envy-free allocation of the whole cake. In order to obtain the envy-free allocation, the Main Protocol  calls three other protocols. It repeatedly calls the \emph{Core Protocol} a bounded number of times. The Core Protocol is the work-horse of the overall protocol that is called repeatedly to further allocate unallocated cake in any envy-free manner. The other protocols---GoLeft and Discrepancy are used to make one set of agents dominate the others so that the problem reduces to finding an envy-free allocation for fewer agents and the Main Protocol can be called again for this sub-problem. 

After calling the Core Protocol many times, 
if there is still some unallocated cake and there is sufficient difference in the estimation of the value of a cake piece between some agents, then the Discrepancy Protocol is called by the Main Protocol. 
The goal of the Discrepancy Protocol is to either exploit the difference in valuations or else to ensure that on a very rough scale, agents have similar valuations. 
If after calling the Discrepancy Protocol, the cake still has not been allocated in an envy-free way, the GoLeft Protocol is called by the Main Protocol.
The GoLeft Protocol is used to reallocate pieces of the cake among the agents after additionally adding some crumbs from the residue. The reallocation is useful to obtain new dominations until one set of agents dominate the remaining agents which means that we have reduced the problem to allocating the remainder of the cake among the dominated agents. 
The different protocols that we present are summarized in Figure~\ref{fig:subprotocols}.

\begin{figure}[h!]
\label{fig:subprotocols}
\centering
\scalebox{1}{
\begin{tikzpicture}[%
          auto,
          block/.style={
            rectangle,
            draw=blue,
            thick,
            text width=6em,
            align=center,
            rounded corners,
            minimum height=2em
          },
          block1/.style={
            rectangle,
            draw=blue,
            thick,
            text width=6em,
            align=center,
            rounded corners,
            minimum height=2em
          },
          line/.style={
            draw,thick,
            -latex',
            shorten >=2pt
          },
          cloud/.style={
            draw=red,
            thick,
            ellipse,
            fill=red!20,
            minimum height=1em
          }
        ]
\centering
\draw ( 0,-3  ) node[block] (C)    {Core Protocol (Algorithm~\ref{algo:C})};
\draw ( 0,-5.5) node[block] (SC)   {SubCore Protocol (Algorithm~\ref{algo:SC})};
\draw (8,-3  ) node[block] (GL)   {GoLeft Protocol (Algorithm~\ref{algo:GL})};
\draw ( 4,-3  ) node[block] (DISC) {Discrepancy Protocol (Algorithm~\ref{algo:D})};
\draw ( 0,-1  ) node[block] (M)    {Main Protocol (Algorithm~\ref{algo:M})};

\path[draw,thick,->] (C) -- (SC);
\path[draw,thick,->] (M) -- (DISC);
\path[draw,thick,->] (M) -- (GL);
\path[draw,thick,->, bend left] (GL) -- (M);
\path[draw,thick,->] (M) -- (C);
\path[draw,thick,->] (DISC) -- (C);
\path[] (SC) edge [loop right,thick] node {} (SC);
\path[] (M) edge [loop left,thick] node {} (M);
\end{tikzpicture}
}
\caption{The envy-free protocol for $n$ agents relies on various protocols. A protocol points to another protocol if it calls the other one. A protocol has a self-loop if it calls itself recursively. The Main Protocol is called to get an envy-free allocation.}
\label{fig:subprotocols}
\end{figure}
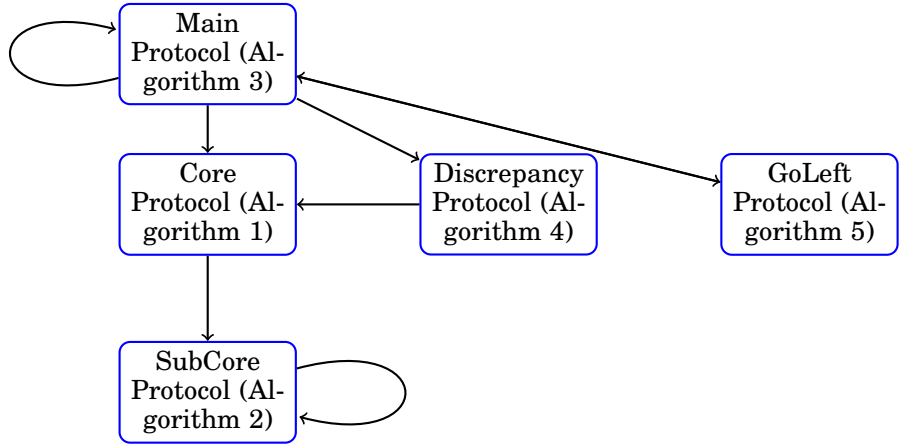

Next, we informally summarize what input each protocol takes as a blackbox and then what it outputs.
\begin{itemize}
	\item \textbf{Main Protocol}
	\begin{itemize}
		\item Input: Cake $R$ and a set of agents $N$.
 
        \item Output: An envy-free allocation that completely allocates $R$ among agents in $N$.
	\end{itemize}
	\item \textbf{Core Protocol}
	\begin{itemize}
	           \item Input: Specified cutter agent, agent set $N$ and unallocated cake $R$.
	           \item Output: An envy-free allocation of cake $R'\subset R$ for agents in $N$ and updated unallocated cake $R\setminus R'$.
	\end{itemize}
           \item \textbf{SubCore Protocol}
	\begin{itemize}
		  \item Input: Cake $R$ cut into $n$ pieces to be allocated among agents in set $N'\subset N$ with $n'=|N'|$. Additionally, each agent $i\in N'$ has a benchmark value $b_i$. 
                        \item Output: An envy-free partial allocation of $R$ for agents in $N'$ in which each agent $j\in N'$ gets a piece of value at least $b_j$.
	\end{itemize}
	\item \textbf{Discrepancy Protocol}
	\begin{itemize}
		\item Input: Residue $R$, a specific piece in $R$, a specific value given by agent $u$, a set of pieces of cake other than $R$, and a set of agents $N$.
\item Output: Possibly modified residue $R$, a Boolean value called DISCREPANCY, and a partition of $N$ into sets $D'$ and $D''$.

	\end{itemize}

	\item \textbf{GoLeft Protocol}
	\begin{itemize}
		  \item Input: A set $N$, a set of `snapshots' (disjoint envy-free allocation for the agents), a set of corresponding `extracted pieces' for each piece in the disjoint allocations, and residue $R$.
		\item Output:  A set of agents $A \subset N$ such that all agents in $N \setminus A$ dominate all agents in $A$.
	\end{itemize}
\end{itemize}

\paragraph{New Techniques and Ideas}    
    
Despite some high level similarities with the approach of the recent protocol for four agents~\cite{AzMa16a}, our general protocol requires a number of new concepts and techniques. These concepts and techniques will be formalized and explained in the later sections but we mention them briefly and informally. 

We define the notion of a \emph{significant advantage or bonus} that informally means that an agent $i$ feels his absolute advantage over another agent is of a high enough fraction of his value of the unallocated cake. 
We then make use of our Core Protocol that not only reduces the size of the unallocated residue but also guarantees that a given agent has a significant advantage over someone else. 
The Core Protocol can be used to further allocate the residue  so that an agent's significant advantage over another agent translates into dominance in a bounded number steps. 
This approach is one of the key ideas for obtaining a bounded protocol rather than simply a finite protocol.

We rely on a notion of discrepancy that is based on the idea of how differently agents view a piece of cake. 
It is well known in fair division that the more different the agents' valuations are, the easier it is to achieve fairness. 
We try to use the structure of agents that have similar valuations at a very coarse level. In case they have high enough discrepancy in valuations over two parts of the unallocated cake, then we simply break the problem into two subproblems where we divide each part in any envy-free manner among those agents who value it highly compared to the other part.


Another key idea to achieve a bounded protocol is that we run the Core Protocol enough times to ensure there are a sufficient number of Core Protocol allocation snapshots (allocation outcomes of various calls of the Core Protocol) that are \emph{isomorphic}.\footnote{The notion of isomorphic snapshots will be defined later.} We will keep track of these snapshots to later make more agents dominate each other. 

Further operations such as exchanges of allocated pieces are then implemented on such isomorphic Core snapshots by exploiting their structure. In order to achieve boundedness of the protocol, we also define and work with four carefully chosen bounds that are functions of the number of agents. The biggest of the four bounds is the maximal  number of query operations required to complete the protocol.

We use a new technique called \emph{extraction} whereby part of the residue is systematically trimmed off so that it can potentially be combined with some agent's allocation in the Core Protocol. When a particular extracted piece of cake is  given to an agent in addition to his allocated piece in a single snapshot of the Core Protocol, we refer to this as \emph{attachment}. Attachment is helpful to systematically enable more agents to exchange each others' pieces in snapshots of the Core Protocol without causing any agent to be envious. The exchange of pieces consequently allows for `permutations'  and helps achieve agents dominate each other.

In order to methodically track the possible permutations/exchanges, we keep track of a \emph{permutation graph} in which nodes correspond to agents and an agent $i$ points to agent $j$ if $i$ is willing to replace his allocation in a subset of isomorphic snapshots of the Core Protocol with $j$'s allocated piece plus a tiny bit more cake without causing envy. The tiny bit more cake to attract agent $i$ can in principle cause agents to be envious so we have to compensate the other agents elsewhere so that envy is not introduced. 
The attachments and exchanges take place in the GoLeft Protocol. The interplay between the permutation graph and the working set of isomorphic Core snapshots is technically one of the most interesting parts of the overall protocol and is the central part of the GoLeft Protocol.




\paragraph{Related Work}

Cake cutting problems originated in the 1940's when famous mathematicians such as Banach, Knaster, and Steinhaus initiated serious mathematical work on the topic of fair division.\footnote{Hugo Steinhaus presented the cake cutting problems to the mathematical and social science communities on Sep. 17, 1947, at a meeting of the Econometric Society in Washington, D.C.~\cite{RoWe98a,Stei48a}.}
Since then, the theory of cake cutting algorithms has become a full-fledged field with at least three books written on the topic~\cite{Barb05a,BrTa96a,RoWe98a}. 
The central problem within cake cutting is finding an envy-free allocation~\cite{GaSt58a,Stew99a}.

Since the earliest works, mathematicians have been interested in the complexity of cake cutting. Steinhaus~\cite{Stei48a} wrote that ``Interesting mathematical problems arise if we are to determine the minimal number of cuts necessary for fair division.''
When formulating efficient cake cutting protocols, a typical goal is to minimize the number of cuts while ignoring the number of valuations queried from the agents. In principle, the actual complexity of a problem or a protocol depends on the number of queries. 
When considering how efficient a protocol is, it is useful to have a formal query model for cake cutting protocols. Robertson and Webb~\cite{RoWe98a} formalized a simple query model in which there are two kinds of queries: Evaluation and Cut. In an Evaluation query, an agent is asked how much he values a subinterval. In a Cut query, an agent is asked to identify an interval, with a fixed left endpoint, of a particular value. 
Although, the query model of Robertson and Webb~\cite{RoWe98a} is very simple, it is general enough to capture all known protocols in the literature. Note that if the number of queries is bounded, it implies that the number of cuts is bounded in the Robertson and Webb model. The protocol that we present in this paper uses a bounded number of queries in the  Robertson and Webb model. 

There are various notions of fairness in the cake-cutting literature. Two of the most prominent ones are envy-freeness and proportionality. Proportionality requires that each agent gets at least $\nicefrac{1}{n}$ value of the whole cake. An envy-free allocation of the whole cake also satisfies proportionality.
Finding a proportional allocation is an easier task than finding an envy-free allocation. There exists a general discrete protocol for finding a proportional allocation with connected pieces that takes $O(n\log n)$ queries~\cite{EvPa84a}.

There is not too much known about the existence of a bounded envy-free protocol for general $n$ except that any envy-free cake cutting algorithm requires $\Omega(n^2)$ queries in the Robertson-Webb model~\cite{Proc09a,Proc15a}.
Also,  for $n\geq 3$, there exists no finite envy-free cake cutting algorithm that outputs \emph{contiguous} allocations in which each agent gets a connected piece with no gaps~\cite{Stro08a}. 
Brams et al. \cite{BTZ97a} and Barbanel and Brams~\cite{BaBr04a} presented envy-free protocols for four agents that require 13 and 5 cuts respectively. However, the protocols are not only unbounded but also not finite since they are \emph{continuous} protocols that require the notion of a \emph{moving knife}.
An alternative approach is to consider known bounded protocols and see how well they perform in terms of envy-freeness~\cite{LiRo09a}. Apart from the unbounded Brams and Taylor envy-free protocol for $n$ agents, there are other general envy-free protocols by Robertson and Webb~\cite{RoWe97a} and Pikhurko~\cite{Pikh00a} that are also unbounded. Gasarch~\cite{Gasa15a} compared the complexity of the three general unbounded envy-free protocols in the literature. 
Aziz and Mackenzie~\cite{AzMa16a} proposed a bounded and discrete envy-free protocol for $n=4$. 
We use similar ideas but generalizing to the case of any number of agents requires novel subroutines and significantly more abstract arguments. 

 
There are positive algorithmic results concerning envy-free cake cutting when agents have restricted valuations functions~\cite{Bran15b,CLPP11a,DQS12a} or when some part of the cake is left unallocated~\cite{HHA15a,SaWa09a}. 
There has also been work on \emph{strategyproof} cake cutting protocols for restricted valuation functions~\cite{AzYe14a,CLPP12a,MaNi12a} as well as strategic aspects of protocols~\cite{BrMi15a}. 

\section{Preliminaries}
\label{prel}
 \paragraph{Model}
	We consider a cake which is represented by the interval $[0,1]$.
A \emph{piece of cake} is a finite union of disjoint subintervals  of $[0,1]$. 
We will assume the standard assumptions in cake cutting.
Each agent in the set of agents $N=\{1,\ldots, n\}$ has his own valuation over subintervals of the interval $[0,1]$. 
We denote by $V_i(x,y)$ the value agent $i$ has for interval $[x,y]$.
The valuations are 
\begin{inparaenum}[(i)]
\item \emph{non-negative}: $V_i(Y)\geq 0$ for any subinterval $Y$; 
\item \emph{additive}: for all disjoint subintervals $Y,Y'$, $V_i(Y\cup Y')=V_i(Y)+V_i(Y')$;  and
\item \emph{divisible}, i.e., for every subinterval $Y$ and $0\leq \lambda \leq 1$, there exists $Y'\subseteq Y$ with $V_i(Y')=\lambda{V_i(Y)}$.
\end{inparaenum}
It follows from the divisibility property that the valuations are non-atomic i.e., $V_i(x,x)=0$.

	We will typically denote an allocation by $X=(X_1,\ldots, X_n)$ where $X_i$ is the cake allocated to agent $i$. Two main criteria of fairness are envy-freeness and proportionality. 
	
	\begin{definition}[Envy-free allocation]
	An allocation $X=(X_1,\ldots, X_n)$ is \emph{envy-free} if
$V_i(X_i)\geq V_i(X_j)$ for each $i,j\in N$.
	\end{definition}
	
		\begin{definition}[Proportional allocation]
		An allocation $X=(X_1,\ldots, X_n)$ is \emph{proportional} if
	$V_i(X_i)\geq \frac{1}{n}V_i([0,1])$ for each $i\in N$.
		\end{definition}
	
In order to ascertain the complexity of a protocol, Robertson and Webb presented a computational framework in which two kinds of queries can be made to the agents: (1) for given $x\in [0,1]$ and $r\in \mathbb{R}^+$, a {\sc Cut} query asks an agent to return a point $y\in [0,1]$ such that $V_i([x,y])=r$ (2) for given $x,y \in [0,1]$, {\sc Evaluate} query asks an agent to return a value $r\in \mathbb{R}^+$ such that $V_i([x,y])=r$. A cake cutting protocol specifies how agents interact with queries and cuts. All well known cake cutting protocols can be analyzed in terms of the number of queries required to return a fair allocation. A cake cutting protocol is \emph{bounded}  if the number of queries required to return a solution is bounded by a function of $n$ irrespective of the valuations of the agents.


\paragraph{Terms and Conventions}
We now define some terms and conventions that we will use in the paper. An allocation is \emph{partial} if it does not necessarily allocate the whole cake. 
Given an envy-free partial allocation $X$ of the cake and an unallocated residue $R$, we say that agent $i$ \emph{dominates} agent $j$ if $i$ does not become envious of $j$ even if all of $R$ were to be allocated to $j$:
 \[V_i(X_i)\geq V_i(X_j)+V_i(R).\]


We say that an agent gets a piece \emph{partially}, if he does not necessarily get it completely. 
In the cake cutting protocols that we will describe, an agent may be asked to trim a piece of cake so that its value equals the value of a less valuable piece.  
Agents will be asked to trim various pieces of the cake to make a given piece equal to the value of another piece.  In Figure~\ref{fig:trim}, we outline the idea of trimming a piece to equal the value of another piece.
When an agent trims a piece of cake, he will trim it from the left side: the main piece (albeit trimmed) will be on the right side. The piece minus the  trim will be called the partial main piece. 
When we say that an agent is \emph{guaranteed} to get his second/third/etc. most favoured piece, this guarantee is based on the \emph{ordinal} preferences of the agents over the pieces. By ordinal, we mean that agents simply give a weak ordering over the pieces but do not tell the exact cardinal utility difference between two pieces. If an agent is indifferent between the top three pieces, then we will still say that the agent is guaranteed to get his third most valued piece.

\begin{figure}[h!]
\centering
\begin{tikzpicture}[yscale=0.4]
\draw (0mm, 0mm) rectangle (50mm, 10mm);
\node[blue] at (60mm, 10mm) {1};
\node[red] at (65mm, 10mm) {2};
\node[green] at (70mm, 10mm) {3};

\draw[green] (20mm, 13mm) -- (20mm, -3mm);
\draw[green] (35mm, 13mm) -- (35mm, -3mm);

	[fill,red] (0mm, 5mm) rectangle (19.8mm, 10mm);
	[fill,blue] (0mm, 0mm) rectangle (19.8mm, 5mm);
\draw[dashed](0mm,0mm)--(-10mm,-10mm);
\draw[dashed](20mm,0mm)--(10mm,-10mm);
\draw (-10mm,-20mm) rectangle (10mm,-10mm);

\draw[blue] (-5.5mm, -7mm) -- (-5.5mm, -23mm);
\draw[red] (-2.5mm, -7mm) -- (-2.5mm, -23mm);
\node[blue] at (-6.4mm, -7.5mm) {1};
\node[red] at (-1.2mm, -7.5mm) {2};
\node[] at (-4mm, -15mm) {$\gamma$};

\draw[dashed](-5mm,-20mm)--(-3mm,-30mm);
\draw[dashed](10mm,-20mm)--(13mm,-30mm);
\draw (-3mm,-40mm) rectangle (13mm,-30mm);

\draw[dashed](-5mm,-20mm)--(-8mm,-30mm);
\draw[dashed](-10mm,-20mm)--(-13mm,-30mm);
\draw (-13mm,-40mm) rectangle (-8mm,-30mm);

\node[brown] at (-11mm, -35mm) {};
\node[red] at (5mm, -35mm) {2};

\draw[dashed](20mm,0mm)--(30mm,-30mm);
\draw[dashed](50mm,0mm)--(60mm,-30mm);
\draw (30mm,-40mm) rectangle (60mm,-30mm);
\draw[green](45mm,-27mm)--(45mm,-43mm);

\node[blue] at (37mm, -35mm) {1};
\node[green] at (52mm, -35mm) {3};

\draw(-2mm,-43mm)--(59mm,-43mm);
\draw((-2mm,-43mm)--((-3mm,-41mm);
\draw((59mm,-43mm)--(60mm,-41mm);
\node at(30mm,-50mm){Initially allocated cake};
\end{tikzpicture}
\caption{Example of a trim. Agents $1$ and $2$ trim their most preferred piece (the left most piece) to the value equal to that of their second most preferred piece. 
In this instance, agent $2$ trims more than agent $1$, hence his trim is to the right of agent $1$'s trim. 
Let us assume agent $1$ and $3$ each get a complete piece with $1$ getting his second most preferred piece.   
Agent $2$ is not envious of other agents if he gets the right side of the trimmed piece up till his trim. 
If agent $2$ gets the part to the right of agent $1$'s trim instead of his own trim, then he is even happier. 
Agent $1$ is still not envious of agent $2$ if $2$ gets $\gamma$. The reason is that agent $1$ thinks that his advantage over agent $2$ is of value that is equal to his value for $\gamma$. }
\label{fig:trim}
\end{figure}
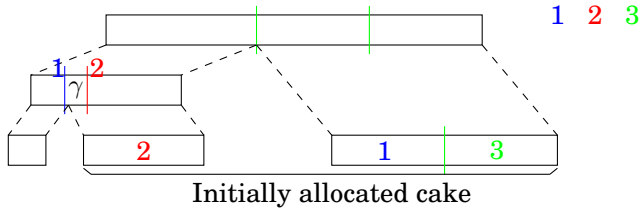

\section{The Protocol}

The overall protocol we present is essentially the Main Protocol that calls other protocols. Some protocols such as the SubCore Protocol and the Main Protocol recursively call themselves. 
The overall idea of the protocol is as follows. 
We make an agent (the cutter) divide the cake into $n$ equally preferred pieces. The cake is then allocated to the agents with each agent getting a part of one of the pieces. The subroutine will be referred to as the\emph{ Core Protocol}. The Core Protocol itself relies on recursively applying the \emph{SubCore Protocol}. When running the Core Protocol, if all the cake is already allocated, we are done. Otherwise, the Core Protocol is run repeatedly on the left-over cake. If some cake is still unallocated, we repeat it with another agent as cutter. 
After the Core Protocol has been run repeatedly, it is checked whether agents have discrepancy on a part of the cake. Discrepancy implies that agents have radically different valuations over two segments of the cake. This works to our advantage because we can run the \emph{Discrepancy Protocol} in which we let the differing agents concentrate on the segments they prefer much more which reduces our problem to envy-free cake cutting for a smaller number of agents. 
Otherwise, we exchange some parts of the cake that certain agents hold using the GoLeft Protocol in a systematic way to obtain further dominations between agents. 
After the GoLeft Protocol has been implemented, some agents dominate all other agents. 
At this point, we have decomposed the problem into a smaller problem with less number of agents. Hence the envy-free algorithm can be called recursively on the unallocated cake to divide the cake among the specified subset of agents. 
We first present the Core Protocol that is the work-horse of the overall Main Protocol. 
 
\subsection{Core Protocol}
We first present the Core Protocol that calls the SubCore Protocol. The Core Protocol is helpful in allocating additional residue in an envy-free manner. The main challenge is that just by running the Core Protocol repeatedly on the residue, there is no guarantee that the cake will be allocated completely in a bounded or even finite number of steps. Hence, we will introduce other protocols in addition to the Core protocol that help us allocate the whole cake in an envy-free manner. 
        
\begin{algorithm}[h!]
\caption{Core Protocol}
\label{algo:C}
 \begin{algorithmic}
 \scriptsize
 \REQUIRE Specified cutter (say agent $i\in N$), agent set $N$ such that $i\in N$, and unallocated cake $R$.
 \ENSURE An envy-free allocation of cake $R'\subset R$ for agents in $N$ and updated unallocated cake $R\setminus R'$.
 \end{algorithmic}
 \begin{algorithmic}[1]
  \scriptsize
  \STATE Generate the tie-breaking values that will be used to break all ties as described in Section \ref{tiearg}.
  \STATE Ask agent $i$ to cut the cake $R$ into $n$ equally preferred pieces.
  \STATE Run SubCore Protocol on the $n$ pieces with agent set $N\setminus \{i\}$ with each agent having a benchmark value as zero. 
  The call gives an allocation to the agents in $N\setminus \{i\}$ such that one of the $n$ pieces is untrimmed and unallocated.
  \STATE Give $i$ one of the unallocated untrimmed pieces from the previous step. 		
  \RETURN envy-free partial allocation (in which each agent gets a connected piece) as well as the unallocated cake. 
 \end{algorithmic}
\end{algorithm}

\begin{algorithm}[h!]
\caption{SubCore Protocol}
\label{algo:SC}
\begin{algorithmic}
\scriptsize
\REQUIRE Cake cut into $n''$ pieces (with $n''\leq n$) to be allocated among agents in set $\{1,\ldots, n'\}=N'\subseteq N$ with $n'=|N'|$ and a benchmark value $b_j$ for each $j\in N'$.
\COMMENT{We only call SubCore if the benchmark values are such that there exists an envy-free allocation of agents in $N'$ where each agent gets at most one of the pieces giving him at least the specified benchmark value.} 
\ENSURE A neat envy-free partial allocation for agents in $N'$ in which each agent $j\in N'$ gets a connected piece of value at least $b_j$.
\end{algorithmic}
 \begin{algorithmic}[1]
\scriptsize


\FOR{$m=1$ to $n'$}\label{subcore:forloop}

  \IF{the piece agent $m$ preferred at the launch of the protocol is still unallocated} 
    \STATE we tentatively give agent $m$ that piece and go to the next iteration of the `for' loop. 
  \ELSE ~ \begin{itemize}
  	\item the first $m$ agents are contesting for the same $m-1$ tentatively allocated pieces. We call them the \emph{contested} pieces. 
  	\item For each agent $j$, set $b_j'$ to be the maximum of $b_j$ and agent $j$'s value of the most preferred uncontested piece. \label{updatebench1}
	\item Then each agent in $[m]$ is asked to place a trim on all contested pieces of high enough value so that the contested piece on the right hand side of his trim is of the same value as $b_j'$\label{onlytrim}
 
  \end{itemize}
    \STATE Set $W$ to be the set of agents who trimmed most (had the rightmost trim) in some piece. \label{fwinners}
    \WHILE{$|W|<m-1$}
      \STATE Ignore the previous trims of agents in $W$ from now on and forget the previous allocation. 
      \STATE \label{step:subcore} Run SubCore Protocol on the contested pieces with $W$ as the target set of agents (with $b_j'$ as their benchmark value input) and for each contested piece, the part to the left side of the right-most trim by an agent in $[m]\setminus W$ is ignored. 
~~\COMMENT{The result of the recursive call of SubCore is an allocation that gives a (partial) contested piece to each of the agents in $W$.}

   \STATE   Take any unallocated contested piece $a$. The current left margin (beyond which the piece is ignored) is by agent $i\in [m]\setminus W$.\label{addagent}

  \[W\longleftarrow W\cup \{i\}.\]
  At this point the current allocation of agents in $W$ is tentative and not permanently made.
  \COMMENT{An agent from $m\setminus W$ has been added to $W$. For the updated $W$, each agent in $W$ gets a partial contested piece and no agent envies an unallocated piece. Recall that for each piece, the left side of the right-most trim by an agent in $[m]\setminus W$ is ignored.} 
	\STATE Update the value $b_j'$ of each agent $j$ in $W$ to equal the value of the piece that they have been tentatively allocated. \label{updatebench2}
     \ENDWHILE
     \STATE Run SubCore on all agents in $W$ and the set of contested pieces, where we ignore the part to the left of the trim made by the agent in $[m]\setminus W$. The benchmark of each $j\in W$ is $b_j'$. \label{recursive}
      ~~\COMMENT{At this point $|W|=m-1$ and each agent in $W$ has a tentatively allocated contested piece}
     \STATE  The only agent $j$ remaining in $[m]\setminus W$ is tentatively given his most preferred uncontested piece. 
   \ENDIF
  \ENDFOR                
\RETURN envy-free partial cake for agents in $N'$ (such that each agent gets a connected piece that is on the right hand side of the original piece he trimmed most)
as well as the unallocated cake.
\end{algorithmic}

\end{algorithm}

The Core Protocol asks a specified agent termed as the cutter to cut the unallocated cake into $n$ equally preferred pieces. It then calls the SubCore Protocol that allocates to each of the agents one of the pieces (possibly partially) in an envy-free manner. No agent is given cake from any other piece. 

The name of the protocol is the same as the Core Protocol used by Aziz and Mackenzie~\cite{AzMa16a} for the case of 4 agents which returns an envy-free partial allocation in which one agent cuts the cake into four equally preferred pieces and the cutter as well as at least one other agent gets one of these four pieces. The new Core Protocol can be considered as a useful generalization of Core Protocol for the four-agent algorithm of Aziz and Mackenzie~\cite{AzMa16a}.
Unlike the Core Protocol for the four-agent case, the new Core Protocol requires a more sophisticated recursive SubCore Protocol. 
The SubCore Protocol results in an allocation that satisfies some `neat' properties. 

\begin{definition}[Neat Allocation]
Consider a cake divided into $n$ pieces and $m<n$ agents. An allocation of the cake into $m$ agents is \emph{neat} if
\begin{inparaenum}[(i)]
\item each agent's allocation is  a (not necessarily whole) part of one of the pieces;
\item at least one piece is unallocated;
\item no agent prefers an unallocated piece over his allocation; and
\item no agent prefers another agent's allocation over his allocation.
\end{inparaenum}
\end{definition}

We can prove that the Core Protocol results in an envy-free partial allocation in which the cutter cuts the cake into $n$ pieces, each agent gets a part of exactly one of the pieces, at least one non-cutter agent gets a complete piece and at least one piece is unallocated which no non-cutter agent envies over his allocation (Lemma~\ref{lemma:core-neat}).
 Since two agents get full pieces, from the cutter's perspective, at least $2/n$ of the cake is allocated after an iteration of the Core Protocol. 
 
 Most of the technical work in the Core Protocol is done when it calls the SubCore Protocol. 
The SubCore Protocol takes as input agents and pieces of cake where the number of pieces is at least as much as the number of agents. Each agent is given a part of exactly one piece.  The SubCore Protocol also takes as input the benchmark value for each agent. When we call SubCore for the first time during the Core Protocol, the benchmark value of each agent is zero. 
The benchmark value of an agent indicates the minimum value an agent expects in an envy-free allocation returned by the SubCore Protocol. 

In the SubCore Protocol, the main idea is that we start from a single agent and gradually grow the number of agents in a specified order while making sure that a neat allocation exists for the growing set of agents $\{1,\ldots, m\}$ in which at least one agent gets a full piece. 
 We denote $\{1,\ldots, m\}$ by $[m]$.
 All the allocations encountered during the course of the SubCore Protocol are considered tentative until the final step. 
If the next agent $m$ in the specified order most prefers a piece that is not currently (tentatively) allocated, we can easily handle the new agent as he can get the unallocated piece without causing envy for anyone. Otherwise, we have to do more work to ensure that the previously handled agents as well as the new agent can simultaneously get a neat allocation in which at least one agent gets a full piece. This includes calling the SubCore Protocol recursively for a smaller number of agents.

In the SubCore Protocol, if the $m$-th agent is also interested in one of the $m-1$ tentatively assigned pieces, we refer to the $m-1$ allocated pieces as the contested pieces. 
Each agent in $[m]$ is asked to trim pieces among the $m-1$ contested pieces that are of higher value than his most preferred piece outside the $m-1$ allocated pieces so that the value of the former is the same as the value of the latter. 
The value is referred to as the benchmark value of the agent. 
The rationale for asking all the $m$ agents to trim up to their benchmark value is as follows. 
If $m-1$ agents each get a part of a contested piece, then one of the $m$ agents who does not get a contested piece can get a most preferred uncontested piece. 
By asking all agents to place trims according to their benchmark values, the agent $j$ who is forced to get an uncontested piece will not be envious of other agents because each other agent gets a piece to the right of $j$'s trim on that piece. 
Note that in order to set the benchmark value of an agent we not only consider the uncontested pieces in that call of the SubCore but also implicitly consider other pieces that were part of the input of the root call to the SubCore. This ensures that agents are not envious or interested in pieces that are unallocated. 
The set of agents who have the rightmost trim in some piece is referred to as $W$. Note that $W$ is the set of agents who are guaranteed to have an envy-free allocation where each agent gets a piece that he trimmed most. If $|W|=m-1$, then if agents are given the piece they win up till their trim, then the corresponding allocation is envy-free for agents in $W$ and gives each agent in $W$ their benchmark value. Nonetheless, we call SubCore recursively on $W$ and the contested pieces with the left aside of trim the agent in $[m]\setminus W$ ignored. Doing this ensures that agents in $W$ get as much of the contested pieces as possible without causing envy.
The remaining agent in $[m]\setminus W$ is considered to be `kicked out' of the contested pieces and has to make do with a piece outside the contested pieces. He is given the most preferred uncontested piece. 

We may not be lucky and the number of agents who win some piece may not be $m-1$, i.e., some agent may have the rightmost trim in multiple contested pieces. 
The protocol then increases the size of $W$ one by one. $W$ is expanded as follows. The previous trims of agents in $W$ are ignored. The  SubCore Protocol is called recursively with $W$ as the  target set of agents and for each piece, the left side of the right-most trim by an agent in $[m]\setminus W$ is ignored. Note that by ignoring the trims of agents in $W$, we have more cake that could potentially be allocated than the allocation where the trims of agents in $W$ are not ignored but each agent in $W$ gets the right hand side of the piece where he had the rightmost trim.
In case $|W|<m-1$, we can continue increasing the size of $W$ while ensuring that agents in $W$ can get an envy-free allocation by getting a partial piece each from among the contested pieces. Take any unallocated contested piece $a$ for which the current left margin (beyond which the piece is ignored) is by agent $i\in [m]\setminus W$. We can add such an $i$ to $W$.
As we increase the size of $W$, we are able to clear more space for some piece since the non-winner because of which we were ignoring the left part of some piece is now a winner, so we an ignore his previous trim.
When $|W|=m-1$, we have ensured that all the contested pieces have been partially allocated to one agent each from $W$. In that case the remaining agent in $[m]\setminus W$ can then be given the most preferred uncontested piece.

    Note that for $n=2$, the Core Protocol coincides with the well-known Divide and Choose protocol. Next, we further explain the SubCore Protocol with the help of an example. 

   \begin{example}[SubCore Protocol]
   In order to illustrate how the SubCore Protocol works, let us explain the SubCore Protocol when $n'=3$ and there are $n=4$ pieces $a,b,c,d$. 

   In the for loop, when $m=2$, if agent $2$ has the same most preferred piece as agent $1$'s tentative piece $a$, then piece $a$ is the contested piece.
Both agents place trim on that piece to equal the value of the second most preferred piece. Let us assume that 2's trim is to the right of 1's trim.
In that case the cake in $a$ to the left of $1$'s trim is ignored temporarily as if the cake to the left did not exist. 
Agent $1$ can take a full unallocated piece $b$ that is at least as preferred by him as $2$'s tentative piece. Agent 1 is considered being kicked out of his piece and gets a piece of exactly his benchmark value. 
Since there is exactly one contested piece and the piece is `won' by agent $2$, we call SubCore on piece $a$ and agent $2$ with the cake to the left of $a$'s trim ignored. Hence agent $2$ gets $a$ up till $1$'s trim.

 We now increment $m$ to handle agent $3$. If agent $3$'s most preferred piece among the four pieces is not one currently held (even partially) by 1 and 2, then we can simply give 3 that piece. Otherwise, all agents $1,2,3$ are asked to place a trim over the contested pieces among $\{a,b\}$ to make the value equal to the most preferred piece among $c$ and $d$.
Let us assume that $3$ places the rightmost trim over $b$ and over $a$.
In that case, $W=\{3\}$.
 The fact that $3$ is in $W$ means that there exists a neat allocation for the agent in $W=\{3\}$ where he can get one of the pieces which are contested.
 We now ignore the trims of agent $3$ on pieces $a$ and $b$ and ignore the left side of the rightmost trim by agents in $\{1,2\}$.
 We call SubCore Protocol on the pieces $a,b$ with the left side of $1$ and $2$'s trims ignored and with agent $W=\{3\}$ as input. It returns a neat allocation (with respect to the current left margins on pieces beyond which they are ignored) for agent $3$ where $3$ will be assigned one of the pieces $a$ or $b$ up to the non-winner (not in $W$) agent's rightmost trim. Let us say $3$ gets piece $a$. Since the allocation is neat (with respect to the current left margins), $3$ does not envy any of the other pieces up to the non-winner's trims, meaning a new agent can be allocated a contested piece. Let us say this agent is agent $1$, and that $1$ is now the rightmost trimmer of piece $b$ and his trim coincided with the current left margin of $b$ beyond which $b$ is ignored. $1$ is added to $W$ and $1$ is allocated $b$ up till the current left margin. 
Since now $|W|=m-1$, we can make one last recursive call for agents $1$ and $3$ with the left part of the trims by agent 2 is ignored. 
This allocates piece $a$ to $3$ and piece $b$ to $1$. The uncontested pieces $c$ and $d$ are untrimmed. Agent $2$ can be seen as having been `kicked out' of the contest, and therefore is left with his most preferred uncontested `benchmark' piece, let us say $c$. Piece $d$ has remained unallocated and untrimmed. At this point the allocation of the SubCore Protocol is returned.
   \end{example}

\subsubsection{Dealing with ties} \label{tiearg}

To avoid having to mention the special case of ties in each argument concerning the Core/SubCore protocol, we will deal with them in a modular way.

To this end we introduce what we will call imaginary values. Each agent has a list of such values. Each agent's values are on a different scale (for example agent $2$'s values are much smaller than agent $1$'s). Our aim is to define those values so that no agent is indifferent between $2$ pieces, and so that agents never place trims on the exact same location (where locations with different imaginary values are distinct). We do so whilst ensuring that when agents equalise pieces of cake, they still prefer the pieces that were trimmed to the one they used as a benchmark to equalise. 

To do so generate for each agent $j$ $E=n^3{(n^2)}^n$ values $\epsilon^j_1, \epsilon^j_2 \ldots \epsilon^j_{E}$ with the properties that $\epsilon^j_i << E \times \epsilon^j_{i+1}$ and $\epsilon^j_{E}<< E \times \epsilon^{j+1}_1$. Let $L_j$ be $j$'s list of these epsilon values ordered from largest to smallest.

The protocol now internally represents each piece of cake as a number with a physical part and imaginary part. All the cuts that are actually made on the cake are in reality only dependent on the physical part of the value. The imaginary part is used internally by the protocol to break ties in a consistent way. Agents only look at the imaginary value of a piece of cake when required to break a tie on the physical part. Agents do not disagree on the imaginary value of a piece. This means that if we query agent $l$ and $j$ on a piece, they might disagree on the physical value of the piece, but will agree on the imaginary value.

Let us now explain how those imaginary values are associated with the pieces generated by the protocol. When the Core protocol asks the cutter to cut $n$ equal pieces, the cutter will add the $\epsilon$ from the top of his list $L$ every-time he creates a new piece and then discard that $\epsilon$ from $L$. Every other agent also adds $\epsilon$ values to the pieces.
In the SubCore protocol, agents generate new pieces of cake  when placing trims on larger pieces to make them equal to a smaller piece. When an agent is asked to make larger pieces equal to a smaller piece $b$, he will add the epsilon value at the top of his list $L$ to the piece he least preferred before the trim (excluding $b$), then discard that value from $L$.  He then proceeds to add the second value to the second least preferred piece, and so on until he reaches the most preferred piece that he trimmed. This ensures that while the physical value of the pieces becomes the same from the agent's perspective, the actual order of the pieces remains unchanged. 
\begin{claim}\label{conserv}
	Agents have a strict order on the pieces that were generated by the cutter. The order between two pieces $a>_ib$ can only change if $i$'s valuation of the physical part of $a$ becomes strictly less than his valuation of the physical part of $b$.
\end{claim}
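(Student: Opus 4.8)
The plan is to track how the imaginary (epsilon) values are attached to pieces and show they enforce exactly the claimed tie-breaking behaviour. First I would set up notation: each piece carries a value $V_i(\cdot) = p + q$ where $p$ is the physical part (on which agents may disagree) and $q$ is the imaginary part (on which all agents agree, since the epsilon lists $L_j$ are fixed in advance and assigned deterministically by the protocol). The key structural fact to extract from the construction is that whenever an agent $i$ performs a trim to equalise several larger pieces down to a benchmark piece $b$, the imaginary increments he adds are drawn in decreasing order of magnitude and assigned in \emph{increasing} order of his pre-trim preference — least preferred (among the trimmed pieces) gets the largest remaining $\epsilon$, most preferred gets the smallest. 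So after the trim the physical parts of the trimmed pieces all equal $V_i(b)$'s physical part (by definition of a trim), but their \emph{imaginary} parts are strictly increasing in $i$'s original preference order, and each imaginary part is still much smaller than $V_i(b)$ itself (by the scale separation $\epsilon^j_k \ll E\,\epsilon^j_{k+1}$ and the fact that $b$ already carries at least one of $i$'s larger epsilons, or physical value). This gives the first half of the claim: $i$ still strictly prefers each trimmed piece to the benchmark piece $b$.

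Next I would establish the strict-order part. The total number of epsilon values $E = n^3(n^2)^n$ is chosen to exceed the total number of times any single agent is ever asked to add an imaginary increment over the whole run of the protocol (each Core call uses at most $n$ for the cutter, each trim in SubCore uses at most $n$, and the recursion depth and number of calls are bounded by the stated query bound); hence no agent ever exhausts his list, so every piece generated genuinely receives a well-defined distinct imaginary tag. Because within any one agent's list the values are strictly separated by a factor exceeding $E$ (so no sum of smaller ones can equal or overtake a larger one), and because all comparisons that the protocol actually makes involve sums of at most $E$ of agent $i$'s own epsilons plus physical parts, any two pieces that agent $i$ has ever compared have distinct total values — the imaginary parts break every physical tie. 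This yields strictness: $a >_i b$ or $b >_i a$ always holds, never indifference.

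Then I would prove the ``order can only change via the physical part'' statement. Consider two cutter-generated pieces $a, b$ with $a >_i b$ at the moment of creation (or at any point). The only operations that alter $a$ or $b$ are trims, which only ever \emph{remove} cake from the left and only ever \emph{add} imaginary increments. Crucially, by the assignment rule, whenever $i$ himself trims a collection including both $a$ and $b$, he adds a \emph{larger} imaginary increment to whichever of $a,b$ he currently prefers less — so the trim operation preserves $i$'s relative order on $\{a,b\}$ as long as their physical parts remain ordered the same way; and when $i$ is not the trimmer, the increment added is agreed-upon and, being a fresh epsilon from the trimmer's list on a different scale, is dominated by any existing imaginary discrepancy between $a$ and $b$ coming from $i$'s own earlier epsilons — wait, here I need to be careful, and this is the main obstacle (see below). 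Modulo that, the conclusion is that the ordinal comparison $a >_i b$ flips only when $i$'s physical valuation of $a$ drops strictly below his physical valuation of $b$, which is exactly the claim.

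The hard part will be controlling cross-agent imaginary increments: when an agent $j \neq i$ trims and adds one of \emph{his} epsilons to piece $a$, every agent (including $i$) sees that same imaginary increment, and I must argue it cannot overturn an order that $i$'s own imaginary bookkeeping had established. This requires the scale-separation hypothesis $\epsilon^j_E \ll E\,\epsilon^{j+1}_1$ to be used in the right direction together with a careful accounting of \emph{which} agent's epsilons can possibly be the ``least significant'' contributor to the imaginary part of a given piece at a given time — essentially an argument that the imaginary part of any piece is a sum of epsilons from finitely many agents arranged so that the lexicographically-dominant term determines all comparisons, and physical equality only ever arises from trims which (by design) leave that dominant term intact in the trimmer-independent cases. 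I would isolate this as a lemma about the structure of imaginary parts before concluding the claim, and I expect that lemma — not the final deduction — to absorb most of the genuine work.
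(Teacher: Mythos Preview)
Your proposal is far more elaborate than the paper's proof, which is two sentences: new pieces arise only when some agent $j$ equalises several pieces to a benchmark; from $j$'s own perspective the physical parts become equal and the epsilon assignment is chosen so that ``the former larger pieces are allocated bigger imaginary values'', hence $j$'s order is preserved. That is the entire argument.

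Two observations on where your approach diverges.

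First, you model the comparison as an additive $p+q$, but the paper is explicit that imaginary values are consulted \emph{only} to break ties on the physical part --- the comparison is lexicographic. Under lexicographic comparison the second sentence of the claim is nearly tautological: if $i$'s physical valuation of $a$ has not dropped strictly below that of $b$, then either it is still strictly larger (so $a>_ib$ by the physical part alone) or the two are exactly tied (and only then do imaginary parts matter). This collapses most of your argument; your cross-agent concern survives only in the exact-tie case.

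Second, the cross-agent difficulty you flag --- that an epsilon added by $j\neq i$ could overturn an imaginary ordering that $i$'s own epsilons had established --- is real, and the paper's proof does not address it. The paper implicitly handles only the physical tie that arises for the trimming agent $j$ himself, since that is the only tie the protocol actually needs to resolve. If you want the claim in the full generality you are reading into it (arbitrary $i$, arbitrary coincidental physical ties), you would indeed need the extra lemma you sketch; the paper simply does not prove that stronger version. Your instinct that there is more to say is correct, but the paper sidesteps the issue by scope rather than by argument.

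Minor: there is a sign slip in your description of the epsilon assignment. You say the least-preferred trimmed piece receives the largest remaining epsilon and then conclude that the imaginary parts are ``strictly increasing in $i$'s original preference order''; these two statements are contradictory. (The paper's own description and proof are arguably inconsistent on this point as well; the proof relies on larger pieces receiving bigger imaginary values, which is what is needed for order preservation.)
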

   \begin{proof}
   New pieces are generated in the SubCore protocol when an agent $j$ is asked to equalise pieces $c_1 \ldots c_l$ to a piece $c_z$. The pieces will now have equal physical value from the perspective of agent $j$ but since the former larger pieces are allocated bigger imaginary values, the order is preserved.
   \end{proof}
        
  \begin{claim}\label{noties}
  		When taking into account both the physical and imaginary part of the valuation of pieces an agent cannot have the same value for $2$ pieces. Moreover, $2$ agents cannot place a trim on the same point (they may place on the same physical point but the imaginary value breaks the tie). In other words, with imaginary values taken into account, ties do not happen in the protocol.
  \end{claim}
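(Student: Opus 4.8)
\textbf{Proof plan for Claim~\ref{noties}.}

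The plan is to argue in two parts, mirroring the two assertions of the claim, and in both cases to trace the imaginary values back to the disjoint per-agent lists $L_j$. First I would set up the bookkeeping: every piece of cake ever produced by the protocol carries an imaginary label that is a finite sum of distinct epsilon values, and crucially each epsilon $\epsilon^j_i$ is drawn (and then discarded) from agent $j$'s list $L_j$ exactly once, so no epsilon value is ever reused. The separation properties $\epsilon^j_i \ll E \times \epsilon^j_{i+1}$ and $\epsilon^j_{E} \ll E \times \epsilon^{j+1}_1$ mean that, from any single agent's viewpoint, the imaginary label of a piece behaves like a number written in a positional system whose ``digits'' are the indices of the epsilons used, with enough slack ($E$ is a strict upper bound on the total number of query/trim operations, hence on how many epsilons can ever be combined into one label) that no carrying or cancellation can occur. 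I would state this as a small sub-observation: a sum of at most $E$ distinct epsilons from the combined ordered list $L_1,\dots,L_n$ is uniquely determined, in sign and in ordering, by the \emph{largest} epsilon index appearing in the symmetric difference of two such sums.

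For the first assertion---no agent values two distinct pieces equally---I would suppose agent $i$ has pieces $P$ and $P'$ with equal total (physical plus imaginary) value. If their physical parts differ, then since the imaginary parts are all of magnitude at most $E \cdot \epsilon^n_1$ (vanishingly small relative to any physical discrepancy the protocol ever creates---this is exactly what the ``$\ll$'' scaling buys us), the totals cannot coincide; so the physical parts must be equal. But then the imaginary parts must be equal, i.e. two distinct pieces carry the same imaginary label. Here I invoke the sub-observation: the only way two pieces get equal imaginary labels is if the same multiset of epsilons was added to both, and I would check from the construction that this never happens---distinct pieces are always assigned epsilons from distinct positions of some list (the cutter discards after each piece; an equalizing agent discards after each trimmed piece), so two distinct pieces never share their full epsilon multiset. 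This contradiction gives the first part. A subtlety worth flagging: ``distinct pieces'' must be interpreted as distinct pieces \emph{in the protocol's internal representation}, since the whole point of imaginary values is to distinguish pieces that are physically identical; I would make this explicit so the statement is not vacuously circular.

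For the second assertion---two agents never trim at the same point---I would argue that a trim location is internally a piece boundary, hence again carries an imaginary coordinate. Agent $j$'s trims are labelled using epsilons from $L_j$ and agent $l$'s from $L_l$; since $L_j$ and $L_l$ are disjoint and every label is a sum of at most $E$ distinct epsilons, by the sub-observation the imaginary coordinates of a $j$-trim and an $l$-trim can coincide only if both are the empty sum, i.e. only if neither agent added any epsilon---but the construction always adds at least one. Even within a single agent, Claim~\ref{conserv} and the discard-after-use rule guarantee two trims by the same agent land on distinct imaginary coordinates. So no two trims, by the same or different agents, share an imaginary coordinate, and ``with imaginary values taken into account'' they are distinct points even when physically identical.

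I expect the main obstacle to be the sub-observation about sums of distinct epsilons: making precise that $E$ really does bound the number of epsilons that can be combined into any one label, and that the geometric-type separation $\epsilon^j_i \ll E \cdot \epsilon^j_{i+1}$ (together with the cross-list gap) forces such sums to be totally ordered by their top differing index with no accidental coincidences. This is essentially a statement that a ``mixed-radix with headroom'' representation is injective; it is routine in spirit but needs the operation-count bound $E$ from the protocol's boundedness analysis to be in hand, and one must be a little careful that \emph{every} place the protocol introduces a new piece or trim indeed pulls a fresh epsilon and discards it, with no path through the recursive SubCore calls that reuses a list position. Once that is nailed down, both parts of the claim are short.
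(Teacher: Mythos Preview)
Your approach is correct and is essentially an unpacking of the paper's own proof, which is a single sentence: ``Each $\epsilon$ value is unique and cannot be obtained by recombining the others through addition and subtraction, which are the only operations available to the protocol.'' The paper simply asserts the ``no accidental coincidence'' property as a consequence of the construction, while you actually derive it from the separation conditions $\epsilon^j_i \ll E\cdot \epsilon^j_{i+1}$ together with the operation-count bound $E$; your sub-observation about sums being determined by the top index in the symmetric difference is exactly the mechanism the paper leaves implicit.

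One small point worth tightening: you treat the physical and imaginary parts as if they are \emph{added} into a single real number, and then argue that the imaginary contribution is ``vanishingly small relative to any physical discrepancy.'' In the paper's setup the comparison is \emph{lexicographic}---agents compare physical values first and only consult the imaginary component to break a physical tie---so the magnitude comparison between physical and imaginary parts is never needed. This does not break your argument (if anything it simplifies it: equal total value immediately forces equal physical part and equal imaginary part, with no appeal to scaling), but as written it suggests a dependence on the $\ll$ gap that is not actually there for that step. The gap is needed only for your sub-observation about sums of epsilons, not for separating physical from imaginary.
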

  \begin{proof}
  	Each $\epsilon$ value is unique and cannot be obtained by recombining the others through addition and subtraction, which are the only operations available to the protocol.
  \end{proof}

\begin{example}
	Imagine that agent $j$ is asked to equalise piece $a$ to equal $b$. Let us have $v_j(a)=0.2$ and $v_j(a)=0.1+0.001i$ where $i$ just represents the fact that it is an imaginary value. $j$ will trim $a$ so that it is equal to $0.1+0.001i$, then add the next $\epsilon$ value $\epsilon_l$ on top of its list to piece $a$ so that $v_j(a)=0.1+(0.001+\epsilon_l)i$.
\end{example}

\begin{definition}[Snapshots]
When we run the Core Protocol we end up with an envy-free partial allocation of the cake and a residue. We call the partial allocation a \emph{snapshot} or a \emph{core snapshot}. 
\end{definition}
The algorithm will keep track of certain snapshots which we will label $p_j$ with $j \in \{1, \ldots, C'\}$. The pieces allocated in each snapshot $p_j$ are labelled $c_{jk}$ where $k \in \{1, \ldots, n\}$ indicates which agent got the piece.
Since the allocation in each snapshot is envy-free, each agent thinks he got at least as much value for his piece as any other allocated piece. Therefore each agent thinks he has some (possible zero or more) bonus value over another agent in the Core snapshot. Our protocol will make use of these bonuses. 
In our protocol, we repeatedly call the Core Protocol over the resultant residue thereby making the residue smaller until the set of snapshots have enough structure that we will exploit later.  


\subsection{Groundwork for the Main Protocol}

\paragraph{Parameters}
We will use four parameters to represent the bounds we work with. These are \[C\ll C'\ll B \ll B'.\] These are constant once $n$ is fixed, however they are dependent on $n$. Since the Main Protocol calls itself on a strict subset of the agents, we will use the notation $B'_{n-1}$ to label the bound on the Main Protocol run on $n-1$ agents.
The bounds correspond to the following concepts.
\begin{itemize}
\item  $C'$ is the number of snapshots generated by the main algorithm that we label and keep track of. All subsequent partial allocations generated by runs of the Core Protocol simply serve the purpose of making the residue smaller from the cutter's perspective. Once the main algorithm has extracted all the pieces it needs from the residue and if a discrepancy has not been successfully exploited, the GoLeft Protocol will be run. 
\item When we run GoLeft we look at a subset of $C$ isomorphic snapshots from the $C'$ total snapshots. 
\item The value $B'$ corresponds to the total number of queries required to run the whole protocol. 
\item $B$ is used to define what we call significant pieces or values. Running the Core Protocol guarantees that a piece of cake is made smaller from the perspective of the cutter. The value $B$ is a bound on the number of times the algorithm allows us to run the Core Protocol to make the residue smaller. 
\end{itemize}

For the sake of achieving our boundedness results, the following values of the parameters work. 
\begin{inparaenum}[(i)]
\item $C=n^{n^n}$
\item $C'=n^{n^{n^n}}$
\item $B=n^{n^{n^{n^n}}}$
\item $B'=n^{n^{n^{n^{n^n}}}}$.
\end{inparaenum}
We have not optimized the values of the bounds so we expect that our general algorithmic approach works for better bounds.

In the protocol we often want to make a piece of cake (the residue) smaller. To do so we run the Core Protocol on it. To make descriptions more succinct we define a function which converts the number of times we run the Core Protocol on a piece of cake to how much smaller it is from the cutter's perspective. Note that when the Core Protocol is run, at least $2/n$ value of the cake is allocated to the agents because the cutter cuts the cake into $n$ equally preferred pieces and at least two pieces are fully allocated. By running the Core Protocol $B$ times with the same cutter, the cutter thinks only $f(B)={(\frac{n-2}{n})}^B$ value of the cake is unallocated.
 

\begin{definition}[Bound function $f$]
Let   $f(B)={(\frac{n-2}{n})}^B$.
%
\end{definition}


For any piece of cake $c$ in a given snapshot, each agent has a bonus value which corresponds to how much more cake he thinks he got in that snapshot than he would have got had he been allocated $c$ instead.
\begin{definition}[Bonus value]
An agent $i$'s \emph{bonus value} on piece of cake $c_{jk}$ in snapshot $p_j$ is the value $V_i(c_{ji})-V_i(c_{jk})$, where $c_{ji}$ is the piece that was allocated to agent $i$ in $p_j$ by the Core Protocol.
\end{definition}

In a Core snapshot, a bonus value is \emph{significant} for an agent if we can make the residue smaller than that value from the agent's perspective in a bounded number of steps. Note that the significance of a piece of cake does not depend on the absolute value that an agent ascribes to it but it is relative to the value of the unallocated cake (residue).

\begin{definition}[Significant value]
An agent $i$ thinks a value is \emph{significant} if the value is more than or equal to $V_i(R)f(B)$ where $R$ is the unallocated residue. A piece is significant for an agent if it has significant value for him.
\end{definition}

Note that if an agent $i$ finds a piece of cake significant, he will still find it significant if the residue becomes smaller than before. We also observe the following about the Core Protocol and the cutter getting a significant advantage over at least one other agent.

\begin{remark}\label{remark:cutteradv}
Note that when the Core Protocol is run once, the cutter has a significant bonus over at least one agent (the agent who gets the smallest valued piece from the cutter's perspective). This bonus results in a domination of the cutter over the agent in $k=({\log n})(\frac{n-2}{n})+1$ iterations of the Core Protocol on the residue with same cutter. The reason is that the cutter $i$ has advantage at least $V_i(R)/(n-2)$ over the agent who gets the smallest valued piece from $i$'s perspective. This is the worst case when equal value of residue comes from each of the maximum of $n-2$ trimmed off pieces.  In $k$ iterations of the Core Protocol with $i$ as cutter, agent $i$'s value of the residue is $V_i(R){(\frac{n-2}{n})}^k$ which is less than $V_i(R)/(n-2)$.
\end{remark}

    \begin{figure}
    \centering
         \scalebox{1}{
    \begin{tikzpicture}[every node/.style={draw=none,fill=none,font={\small}}]
    \coordinate (0b) at (0mm,-5mm); 
    \coordinate (0t) at (  0mm, 5mm);
    \coordinate (1b)  at ($(0b)  - (50mm, 0)$); 
    \coordinate (1t)  at ($(0t)  - (50mm, 0)$);
    \coordinate (2b)  at ($(1b)  - (10mm, 0)$); 
    \coordinate (2t)  at ($(1t)  - (10mm, 0)$);
    \coordinate (3b)  at ($(2b)  - (10mm, 0)$); 
    \coordinate (3t)  at ($(2t)  - (10mm, 0)$);
    \coordinate (kb)  at ($(3b)  - (16mm, 0)$); 
    \coordinate (kt)  at ($(3t)  - (16mm, 0)$);
    \coordinate (k1b) at ($(kb)  - (10mm, 0)$);
     \coordinate (k1t) at ($(kt)  - (10mm, 0)$);
    \coordinate (k2b) at ($(k1b) - (10mm, 0)$); 
    \coordinate (k2t) at ($(k1t) - (10mm, 0)$);
    \coordinate (lb)  at ($(k2b) - (16mm, 0)$); 
    \coordinate (lt)  at ($(k2t) - (16mm, 0)$);
    \coordinate (l1b) at ($(lb)  - (10mm, 0)$); 
    \coordinate (l1t) at ($(lt)  - (10mm, 0)$);
    \coordinate (n1b) at ($(l1b) - (16mm, 0)$); 
    \coordinate (n1t) at ($(l1t) - (16mm, 0)$);
    \coordinate (nb)  at ($(n1b) - (10mm, 0)$); 
    \coordinate (nt)  at ($(n1t) - (10mm, 0)$);
    \draw (1b) -- (1t); \draw (2b)  --  (2t); \draw (3b)  --  (3t);
    \draw (kb) -- (kt); 
    \draw (k1b) -- (k1t); 

    \draw (0b) -- (k1b) -- (k1t) -- (0t) -- (0b);


    \node at ($(0t)  - (25mm, 5mm)$) {$c_{jk}$:  piece allocated to agent $k$};

    \node at (-55mm, 0mm) {$e_{jk1}$};
    \node at (-65mm, 0mm) {$e_{jk2}$};
    \node at (-90mm, 0mm) {$e_{jkl'}$};

    \node at ($(3t)  - (08mm, 4mm)$) {$\hdots$};
  
    \end{tikzpicture}
    }
    \caption{For any allocated piece in some Core snapshot, we consider extracted pieces to be attached to the piece. In the figure, piece $c_{jk}$ is allocated to agent $k$ in snapshot $p_j$. The pieces $e_{jk1}, e_{jk2},\ldots, e_{jkl'}$ are extracted from the residue and are in consideration for attachment to the piece $c_{jk}$ in the order $e_{jk1}, e_{jk2},\ldots, e_{jkl'}$.  }
    \label{fig:extractedorder}
    \end{figure}

       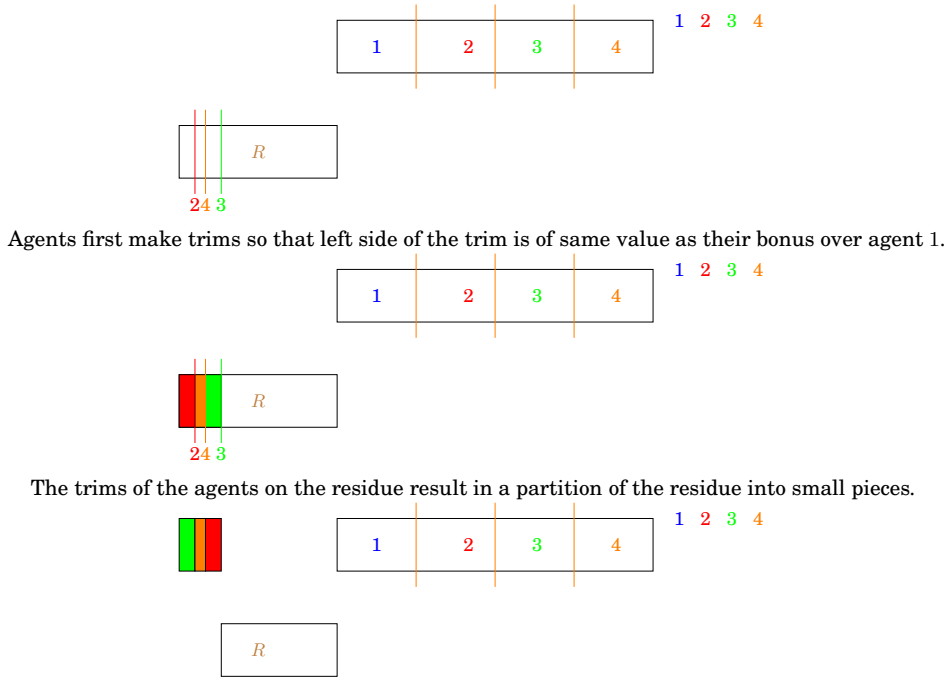
\begin{figure}[h!]

       \centering
            \scalebox{0.7}{
       \begin{tikzpicture}
       \draw (0mm, 0mm) rectangle (60mm, 10mm);
       \node[blue] at (65mm, 10mm) {1};
       \node[red] at (70mm, 10mm) {2};
           \node[green] at (75mm, 10mm) {3};
               \node[orange] at (80mm, 10mm) {4};

    \node[blue] at (7.5mm, 5mm) {1};
    \node[red] at (25mm, 5mm) {2};
     \node[green] at (38mm, 5mm) {3};
       \node[orange] at (53mm, 5mm) {4};

       \draw[orange] (15mm, 13mm) -- (15mm, -3mm);
       \draw[orange] (30mm, 13mm) -- (30mm, -3mm);
           \draw[orange] (45mm, 13mm) -- (45mm, -3mm);
    
       

        \node[brown] at (-15mm, -15mm) {$R$};
	
       \draw[] (-30mm,-20mm) rectangle (0mm,-10mm);
            \draw[red] (-27mm, -23mm) -- (-27mm, -7mm);
                \draw[orange] (-25mm, -23mm) -- (-25mm, -7mm);
                      \draw[green] (-22mm, -23mm) -- (-22mm, -7mm);
                      
                    \node[red] at (-27mm, -25mm) {2};
                    \node[orange] at (-25mm, -25mm) {4};
                     \node[green] at (-22mm, -25mm) {3};

    %
    %
    %

       \end{tikzpicture}
       }
       
         \text{\footnotesize Agents first make trims so that left side of the trim is of same value as their bonus over agent $1$.}
     
    \label{fig-extract1}

           \centering
                \scalebox{0.7}{
           \begin{tikzpicture}
           \draw (0mm, 0mm) rectangle (60mm, 10mm);
           \node[blue] at (65mm, 10mm) {1};
           \node[red] at (70mm, 10mm) {2};
               \node[green] at (75mm, 10mm) {3};
                   \node[orange] at (80mm, 10mm) {4};

        \node[blue] at (7.5mm, 5mm) {1};
        \node[red] at (25mm, 5mm) {2};
         \node[green] at (38mm, 5mm) {3};
           \node[orange] at (53mm, 5mm) {4};

           \draw[orange] (15mm, 13mm) -- (15mm, -3mm);
           \draw[orange] (30mm, 13mm) -- (30mm, -3mm);
               \draw[orange] (45mm, 13mm) -- (45mm, -3mm);
    
       
	
           \draw[] (-30mm,-20mm) rectangle (0mm,-10mm);
              \draw[fill=red] (-30mm, -20mm) rectangle (-27mm, -10mm);
               \draw[fill=orange] (-27mm, -20mm) rectangle (-25mm, -10mm);
               \draw[fill=green] (-25mm, -20mm) rectangle (-22mm, -10mm);

                \draw[red] (-27mm, -23mm) -- (-27mm, -7mm);
                    \draw[orange] (-25mm, -23mm) -- (-25mm, -7mm);
                          \draw[green] (-22mm, -23mm) -- (-22mm, -7mm);
                                  \node[brown] at (-15mm, -15mm) {$R$};

        \node[red] at (-27mm, -25mm) {2};
        \node[orange] at (-25mm, -25mm) {4};
         \node[green] at (-22mm, -25mm) {3};
           \end{tikzpicture}
           }
        
           \label{fig:isomorphic}
            \text{\footnotesize The trims of the agents on the residue result in a partition of the residue into small pieces.}
            \label{fig-extract2}

                                  \centering
                                       \scalebox{0.7}{
                                  \begin{tikzpicture}
                                  \draw (0mm, 0mm) rectangle (60mm, 10mm);
                                  \node[blue] at (65mm, 10mm) {1};
                                  \node[red] at (70mm, 10mm) {2};
                                      \node[green] at (75mm, 10mm) {3};
                                          \node[orange] at (80mm, 10mm) {4};

                               \node[blue] at (7.5mm, 5mm) {1};
                               \node[red] at (25mm, 5mm) {2};
                                \node[green] at (38mm, 5mm) {3};
                                  \node[orange] at (53mm, 5mm) {4};

                                  \draw[orange] (15mm, 13mm) -- (15mm, -3mm);
                                  \draw[orange] (30mm, 13mm) -- (30mm, -3mm);
                                      \draw[orange] (45mm, 13mm) -- (45mm, -3mm);
    
       
	
   
                                     
                                     \draw[fill=green] (-30mm, 0mm) rectangle (-27mm, 10mm);
                                      \draw[fill=orange] (-27mm, 0mm) rectangle (-25mm, 10mm);
                                      \draw[fill=red] (-25mm, 0mm) rectangle (-22mm, 10mm);

                                          \draw[] (-22mm,-20mm) rectangle (0mm,-10mm);
                                          
                                             \node[brown] at (-15mm, -15mm) {$R$};

                                  \end{tikzpicture}
                                  }
                           \text{\footnotesize The pieces resulting from the trims are extracted from the residue and associated with the corresponding piece.}
                           


           \caption{Process of extraction for piece allocated to agent 1 in the Core snapshot}
       \label{fig-extract} 
       \end{figure}

If we run the Core Protocol repeatedly with the same agent as cutter and we are lucky that the cutter has a significant advantage over each agent in some snapshot, then in $({\log n})(\frac{n-2}{n})+1$ extra iterations of the Core Protocol with the same cutter, we can make the cutter dominate each agent. This simplifies our problem, because we now only need to allocate the remaining cake among the dominated agents in an envy-free manner. In general, we may not be so lucky that the cutter dominates all other agent due to which we have to do more work to ensure that a set of agents dominates the other agents. In this additional work, the process of extraction is crucial:

\paragraph{Extraction}

For each piece allocated in each of the Core snapshots and for each agent, the Main Protocol tries to associate a piece of the cake extracted from the residue.
For the $C'$ snapshots, there are $C'n$ pieces allocated to the agents. For each of the pieces, other agents who do not get the piece \emph{extract} corresponding pieces from the residue that is the unallocated cake. Take for example, a piece $c$ allocated to some agent (say agent $1$) in a snapshot. In the snapshot in which $c$ is allocated to $1$, each agent $i$ gets a piece that is of value (according to $i$) at least as much as $c$. In the residue, each agent $i$ is asked to put a trim so that the cake from the left extreme of the residue to the trim is of value equal to $i$'s value of his piece minus his value for $c$. The trims of the agents on the residue give rise to pieces of cake corresponding to the agents' bonus value over piece $c$. These pieces are \emph{extracted} (cut away from the residue) and  \emph{associated with $c$} (kept in consideration with piece $c$). When agents make trim marks on the residue in accordance with their bonus (over a piece in a snapshot), each pair of successive trim marks gives rise to a separate piece that can be extracted. For piece $c_{jk}$ in snapshot $p_j$, we will denote by $e_{jk1}, e_{jk2}, \ldots, e_{jkl'}$ the set of pieces that are extracted in the same order with $e_{jk1}$ extracted first (see Figure~\ref{fig:extractedorder}). We say that a piece $e_{jkl}$ is extracted by agent $i$ if the right hand extreme of the piece coincided with the trim of agent $i$ on the residue.
Note that $l'\leq n-1$ because for each allocated piece in a snapshot, at most $n-1$ other agents can put trim marks on the residue so as to obtain at most $n-1$ extracted pieces. Each extracted piece has a clear corresponding allocated piece in a Core snapshot with which it is associated. Note that it can be the case that not all $n-1$ agents place a trim we allow extraction only for agents whose bonus values are not significant.

In our protocol, pieces are extracted from the residue only if each agent finds the pieces \emph{not} significant. The reason is that we want to keep sufficient residue to do further extractions on the residue as well as maintain structure. 

\begin{example}
Figure~\ref{fig-extract} illustrates the process of extraction and association. We focus on a single Core snapshot in which each of the four agents are allocated a piece. Since the Core allocation is envy-free, each of agents $2 ,3, 4$ think they got at least as much value as the piece that 1 got. Now agents $2 ,3,$ and $4$ are asked to place a trim mark each on the residue to indicate that the residue to the left of their trim mark is equal to their advantage over agent 1. The trim marks result in three slices of cake that are extracted. Among the extractions, since 2 has the leftmost trim, his piece is extracted first, then $4$ and then $3$. The extracted pieces are associated with $1$'s allocated piece in the snapshot. In case the pieces will be added to $1$'s piece, the piece extracted by agent $2$ will be attached first, then the piece by agent $4$ and then $3$.
\end{example}

\begin{figure}[h!]
\centering
\begin{tikzpicture}[yscale=0.5]
\draw (0mm, 0mm) rectangle (50mm, 10mm);
\node[blue] at (60mm, 10mm) {1};
\node[red] at (65mm, 10mm) {2};
\node[brown] at (10mm, 5mm) {$R$};

\draw[green] (20mm, 13mm) -- (20mm, -3mm);
\draw[green] (35mm, 13mm) -- (35mm, -3mm);
	
[fill,red] (0mm, 5mm) rectangle (19.8mm, 10mm);
[fill,blue] (0mm, 0mm) rectangle (19.8mm, 5mm);
\draw[dashed](0mm,0mm)--(-10mm,-10mm);
\draw[dashed](20mm,0mm)--(10mm,-10mm);
\draw (-10mm,-20mm) rectangle (10mm,-10mm);
	
\draw[green] (0mm, -7mm) -- (0mm, -23mm);
\draw[green] (5mm, -7mm) -- (5mm, -23mm);
	
\draw[dashed](0mm,-20mm)--(4mm,-30mm);
\draw[dashed](5mm,-20mm)--(10mm,-30mm);
\draw[dashed](5mm,-20mm)--(20mm,-30mm);
\draw[dashed](10mm,-20mm)--(25mm,-30mm);
\draw (4mm,-40mm) rectangle (10mm,-30mm);
\draw (20mm,-40mm) rectangle (25mm,-30mm);

\draw[dashed](0mm,-20mm)--(-10mm,-30mm);
\draw[dashed](-10mm,-20mm)--(-20mm,-30mm);
\draw (-20mm,-40mm) rectangle (-10mm,-30mm);

\node[brown] at (-5mm, -15mm) {$R$};
\node[red] at (22.5mm, -35mm) {2};
\node[blue] at (6.5mm, -35mm) {1};
	
\draw[dashed](20mm,0mm)--(30mm,-10mm);
\draw[dashed](35mm,0mm)--(65mm,-10mm);
\draw[dashed](50mm,0mm)--(80mm,-10mm);
\draw[dashed](35mm,0mm)--(45mm,-10mm);
\draw (30mm,-20mm) rectangle (45mm,-10mm);
\draw (65mm,-20mm) rectangle (80mm,-10mm);
		
\node[brown] at (-18.5mm, -35mm) {$R$};

\node[blue] at (37mm, -15mm) {1};
\node[red] at (72mm, -15mm) {2};
	
\draw[green] (-13mm, -27mm) -- (-13mm, -43mm);
\draw[green] (-17mm, -27mm) -- (-17mm, -43mm);
\draw[dashed](-13mm,-40mm)--(-7mm,-50mm);
\draw[dashed](-17mm,-40mm)--(-11mm,-50mm);
\draw[dashed](-13mm,-40mm)--(0mm,-50mm);
\draw[dashed](-10mm,-40mm)--(3mm,-50mm);
\draw[dashed](-17mm,-40mm)--(-20mm,-50mm);
\draw[dashed](-20mm,-40mm)--(-32mm,-50mm);
\draw (-11mm,-60mm) rectangle (-7mm,-50mm);
\draw (0mm,-60mm) rectangle (3mm,-50mm);
\draw (-28mm,-80mm) rectangle (-40mm,-70mm);
\node[brown] at (-30mm, -75mm) {$R$};
\node[red] at (1.5mm, -55mm) {2};
\node[blue] at (-9mm, -55mm) {1};

\node[brown] at (-26mm, -55mm) {$R$};

\draw (-20mm,-60mm) rectangle (-32mm,-50mm);

\draw[brown,fill] (-39mm,-80mm) rectangle (-40mm,-70mm);
\draw[pink,fill] (-39mm,-80mm) rectangle (-38mm,-70mm);
\draw[green,fill] (-38mm,-80mm) rectangle (-37mm,-70mm);
\draw[yellow,fill] (-36mm,-80mm) rectangle (-37mm,-70mm);
\draw[orange,fill] (-36mm,-80mm) rectangle (-35mm,-70mm);

\draw[red] (-39mm, -69mm) -- (-39mm, -81mm);
\draw[blue] (-38mm, -69mm) -- (-38mm, -81mm);
\draw[red] (-37mm, -69mm) -- (-37mm, -81mm);
\draw[blue] (-36mm, -69mm) -- (-36mm, -81mm);
\draw[blue] (-35mm, -69mm) -- (-35mm, -81mm);
\draw[red] (-32mm, -69mm) -- (-32mm, -81mm);

\draw[red] (-12mm, -49mm) -- (-12mm, -61mm);
\draw[brown,fill] (-11.9mm,-60mm) rectangle (-11mm,-50mm);

\draw[blue] (-1mm, -49mm) -- (-1mm, -61mm);
\draw[pink,fill] (-0.9mm,-60mm) rectangle (0mm,-50mm);

\draw[red] (29mm, -9mm) -- (29mm, -21mm);
\draw[fill,yellow] (29.1mm,-20mm) rectangle (30mm,-10mm);

\draw[blue] (64mm, -9mm) -- (64mm, -21mm);
\draw[fill,green] (64.1mm,-20mm) rectangle (65mm,-10mm);	

\draw[blue] (19mm, -29mm) -- (19mm, -41mm);
\draw[orange,fill] (19.1mm, -30mm) rectangle (20mm, -40mm);

\draw[red] (1mm, -29mm) -- (1mm, -41mm);

\node at (95mm, -15mm) {snapshot 1};
\node at (95mm, -35mm) {snapshot 2};
\node at (95mm, -55mm) {snapshot 3};
\node at (95mm, -75mm) {final residue};
%
\end{tikzpicture}
\caption{Snapshots with extracted pieces. The Core Protocol is called thrice resulting in three snapshots. The Core Protocol is always called on the residue---cake that has not yet been allocated. 
Since the example involves only two agents, it is not the order of trims in extraction but only whether a trim induces a significant piece or not which determines isomorphism.
Snapshot $1$ and $3$ are isomorphic since for each allocated piece in the snapshots, the other agent was able to extract a piece from the final residue to reflect his bonus/advantage. 
For the second snapshot a piece was not extracted because agent $2$'s bonus value was significant. The extracted pieces present and the order in which they were extracted is the same for snapshot $1$ and $3$, making them isomorphic.
Note that this figure is just to demonstrate isomorphism. In reality, for two agents, the Core Protocol coincides with Divide and Choose which means that there is no residue. 
}
\label{fig:isomorphic}
\end{figure}
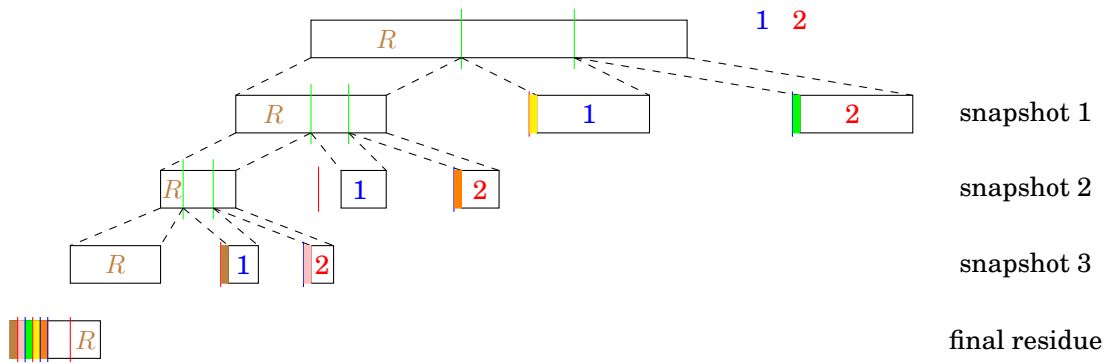

Later on, the extracted pieces may be attached to $c$ so that piece $c$ is now attractive to other agents because of the additional extracted pieces combined with $c$. 
We clarify that when pieces are extracted and associated with a given piece in a Core snapshot, such extracted pieces have not yet been  allocated to any particular agent. Extracted pieces can only be allocated after they are officially attached to their associated piece. 

For the protocol we need to restrict our focus on a subset of the snapshots where the same set of agents extracted pieces in the same order:

\begin{definition}[Isomorphic snapshots/snapshot pieces/ extracted pieces]

We call two snapshots $p_j$ and $p_{j'}$  \emph{isomorphic} to each other if for each agent $i\in N$ and for any two pieces $c_{ji}$ and $c_{j'i}$ allocated to agent $i$ in the corresponding snapshots $p_j$ and $p_{j'}$, the set of agents who extracted cake from the residue and associated to $c_{ji}$ and $c_{j'i}$ is the same and the agents in the set extracted pieces in the same order. 

We extend this notion to allocated pieces in snapshots. Two allocated pieces of cake belonging to two isomorphic snapshots are isomorphic if they were allocated to the same agent. 

We also extend the notion of isomorphism to extracted pieces. We say that for two isomorphic snapshots $p_j$ and $p_{j'}$, two extracted pieces $e_{jkl}$ and $e_{j'kl}$ are isomorphic if they are associated respectively to isomorphic pieces $c_{jk}$ and $c_{j'k}$ and that $e_{jkl}$ and $e_{j'kl}$ were extracted by the same agent.

In Figure~\ref{fig:isomorphic}, we illustrate isomorphic snapshots.
\end{definition}

Our protocol will keep track of a set of isomorphic snapshots, progressively discarding (not changing) some so that we can make manipulations on the ones we keep in an envy-free way. The pieces of cake we are working with are labelled $c_{jk}$ for allocated pieces or $e_{jkl}$ for extracted pieces. 
Suppose that the given set of snapshots is $S$.
We use the notation $c_k,S$ and $e_{kl},S$ to denote the set of pieces of cake $c_{jk}$ or $e_{jkl}$ in snapshots in $S$. Abusing the notation, we will simply say $c_k$ or $e_{kl}$ to mean the set of pieces which are in the snapshots we are working with. Note that we will generally use index $j$ for the snapshot number, $k$ for the piece number in a given snapshot, and $l$ for the $l$-th extracted piece.

\subsection{The Main Protocol}

        \begin{algorithm}
        \caption{Main Protocol}
        \label{algo:M}
        \begin{algorithmic}[h]
        \scriptsize
        \REQUIRE Cake $R$ and a set of agents $N$ with $n=|N|$.
        \ENSURE An envy-free allocation that completely allocates $R$ among agents in $N$.
        \end{algorithmic}
        \begin{algorithmic}[1]
        \scriptsize
        \Algphase{Base Case}
        \IF{$|N|\leq 4$}
        \STATE allocate the residue among the agents in $N$ in an envy-free allocation by using one of the known constant-time envy-free protocols for $|N|\leq 4$.
        \ELSE
        \Algphase{Generate Core Snapshots}
        \FOR {all $j: 1 \dots C'$} 
          \STATE For some $i\in N$ who has acted as cutter in the Core Protocol the least number of times, run Core protocol($i$, $N$, $R$); Update $R$ to the cake that is unallocated. 
        The Core Protocol allocation gives us snapshot $p_j$ and $n$ pieces of cake $c_{jk}$ with $k:1, \ldots, n$.
        \ENDFOR
	  ~\COMMENT{We have now generated $C'$ Core snapshots. After generating the first $n$ snapshots, we already have an envy-free allocation in which each agent gets $\nicefrac{1}{n}$ value of the original cake.}
          \WHILE{for  agent $i\in N$ and some piece $c_{jk}$, \label{mfor1} $V_i(R)f(B)^2<V_i(c_{ji})-V_i(c_{jk})<V_i(R)f(B)^\frac{1}{2}$} \label{mwhile1}
           \COMMENT{\emph{We ensure that the bonus values are either larger than significant or smaller than significant by a large  factor dependent on our bound $B$}.}
            \STATE Run Core Protocol($i$, $N$, $R$).
          \ENDWHILE \label{mewhile1}
          \IF{some set of agents $N\setminus A$ dominates all agents in set $A$}
          Call Main Protocol($R$, $A$) to divide the unallocated cake $R$.
          \RETURN allocation of the cake to the agents.
          \ENDIF
          \Algphase{Extraction}
          \STATE Define Boolean $a$ and set it to $0$ \COMMENT{\emph{This Boolean is used to reset the whole trimming process when running the Discrepancy Protocol causes a  \textit{bonus value} to become significant}}
          \WHILE {$a=0$} \label{mwhile2}
  
            \COMMENT{\emph{We now attempt to extract pieces corresponding to the non-significant bonus values from the residue, resetting the whole process when a discrepancy in agents' valuations has forced us to shrink the residue}}
            \STATE Set $a$ to $1$.
            \FOR{all pieces of cake $c_{jk}$} \label{mfor2}	
              \FOR{all agents $i$} \label{mfor3}
                \IF{$i$'s bonus value is not significant on piece $c_{jk}$}
                  \STATE Ask $i$ to place a \textit{trim} on the residue such that the piece stretching from the left of the residue to the trim is equal to  the agent's \textit{bonus value} for that piece, which we will label $b^i_{c_{jk}}$. \label{trim}
                \ENDIF
              \ENDFOR \label{mefor3}
              \STATE Label from left to right the pieces delimited by those \textit{trims} with labels $e_{jkl}$ with $l:1, \ldots, m$ where $m$ is the number of agents who placed a trim on the residue (did not have significant bonus) and $j$ and $k$ are used as previously \COMMENT{\emph{if two or more agents' trims coincide we break ties lexicographically and ascribe empty cake to the latter agents' extractions.}}
      
             \COMMENT{\emph{For a given piece of cake $c_{jk}$ the $m$ agents with non-significant bonus value have now delimited a piece of cake corresponding to that value from the residue, and those pieces have been labelled.}}
              \FOR{all $l:1, \ldots, m$} \label{mfor4}
                \IF{no agent thinks the piece $e_{jkl}$ is significant}
                  \COMMENT{ \emph{Everyone agrees piece is insignificant}}
                  \STATE Extract piece $e_{jkl}$ from the residue, associate it to piece $c_{jk}$ and update $R$ to $R-e_{jkl}$ \COMMENT{See Figure~\ref{fig-extract}}.
                \ENDIF
                \Algphase{Discrepancy}
                \IF{some agents think $e_{jkl}$ is significant but others do not (agents think the piece is discrepant)}
          
                      \COMMENT{\emph{There is a large discrepancy between agents' valuation, this causes a problem because some agents might think we are taking too much away from the residue. However discrepancy can be exploited or eliminated}.}
                  \STATE Run Discrepancy($e_{jkl}$,$b^u_{c_{jk}}$,$\{e_{jkl}\}$,$R$) with the discrepant piece $e_{jkl}$, the bonus value $b^u_{c_{jk}}$  of the agent $u$ who made the rightmost trim delimiting the piece, the set of previously extracted pieces of cake $\{e_{jkl}\}$ and the residue $R$ as input. This will return the Boolean value DISCREPANCY and when that value is $1$ the protocol also returns two sets $D \subset N$ and $D'\subset N$ such that $D$ and $D'$ partition $N$.
          
                   \label{mdisc1}
                \IF{DISCREPANCY$=1$}
                  \COMMENT{\emph{The discrepancy can be exploited}}
                  \STATE Main Protocol($e_{jkl}$, $D$). \label{msub1}
                            \STATE Main Protocol($R$, $D'$). \label{msub2}
                  \RETURN allocation of the cake to the agents.
                \ELSE
                   \COMMENT{\emph{The discrepancy cannot be exploited but $b^u_{c_{jk}}$ is now a significant bonus value.}}
                  \STATE Add $e_{jkl}$ to $R$ and Set $a$ to $0$.
                \ENDIF
              \ENDIF
            \ENDFOR \label{mefor4}
          \ENDFOR \label{mefor2}
        \ENDWHILE \label{mewhile2}
        \Algphase{GoLeft}
        \STATE Run GoLeft($ N, \{p_j\}$, $\{c_{jk}\}$, $\{e_{jkl}\}$, $R$) Protocol with the set of labelled snapshots $\{p_j\}$, the set of allocated pieces of cake $\{c_{jk}\}$, the set of extracted pieces $\{e_{jkl}\}$ and the residue $R$ as input. The output of GoLeft is a set $A\subset N$. \label{mgoleft1}
        \FOR{all agents $i$} \label{mfor5}
          \STATE Run Core Protocol($i$,$N$,$R$) $2B$ times each time on the updated smaller residue $R$. 
        \ENDFOR \label{mefor5}

        \COMMENT{\emph{The previous for loop converts significant bonus values into dominance with respect to the residue $R$.}}
            \Algphase{Recursion of the Main Protocol}
        \STATE  Call Main Protocol($R$, $A\subset N$), that is the Main Protocol on a subset of agents $A \subset N$ output by the GoLeft Protocol  and with $R$ as the input cake. \label{msub3}
        \ENDIF
        \RETURN allocation of the cake to the agents.
        \end{algorithmic}
        \end{algorithm}


The Main Protocol is the engine which runs our overall envy-free protocol. It is responsible for allocating the whole cake among all the agents in an envy-free manner. The Main Protocol works recursively. If the number of agents is four or less, a previously known bounded envy-free algorithm can directly be called.  Otherwise, the Main Protocol divides part of the cake in an envy-free manner and identifies a set of agents $N\setminus A$ that all dominate agents in $A$ with respect to the cake that is unallocated. The Main Protocol then recursively calls itself to divide the remaining cake among agents in $A$. 

If the number of agents is more than four, the Main Protocol (Algorithm~\ref{algo:M}) calls the Core Protocol sequentially on the updated residue so that the residue becomes smaller than before after each call of the Core Protocol. The repeated calls of the Core Protocol help generate a number of snapshots each containing $n$ pieces of cake. In each snapshot, each agent has been allocated a piece of cake. 
Envy-freeness is maintained throughout for the allocated cake, and in the Core Protocol each agent thinks he got the highest value piece. 

\begin{remark}\label{remark:waste}
When we call the Core Protocol $n$ times each time with a different cutter,
we argue that we get an envy-free allocation in which agents get at least  $\nicefrac{1}{n}$ value of the original cake.

Suppose an agent $i$ is the cutter is the $j\leq n$-th call of the Core Protocol. Then in the first $j-1$ calls of the Core Protocol, $i$ gets $1/n$ value of the allocated cake. This follows from the envy-freeness of the allocated cake. For the remaining cake, $i$ again gets $1/n$ value of the unallocated cake in the $j$-th call because $i$ is the cutter. 
Hence $i$ gets $1/n$ value of the original cake. Since the cake is always allocated via Core, it is allocated in an envy-free manner. 
Hence after the Main Protocol has made the first $n$ calls of the Core Protocol, we have an envy-free allocation in which agents get at least  $\nicefrac{1}{n}$ value of the original cake.
\end{remark}

After the first $n$ calls to the Core Protocol, the Main Protocol does not stop making calls to the Core Protocol if there is still some unallocated cake. After calling the Core Protocol $C'$ times, $C'$ Core snapshots are obtained.
The Core Protocol may be further called (in the while loop in step~\ref{mwhile1}) to make the residue even smaller. This ensures that each agent considers each piece in the first $C'$ snapshots significant or smaller than significant by a large factor dependent on our bound $B$.

For each piece of cake $c$ in the $C'$ snapshots, agents can ascribe what we refer to as a \emph{bonus value}, which corresponds to how much more value they got in that snapshot than their value for piece $c$. In the Main Protocol, for each piece of cake $c$ in the snapshots, we ask all agents with a non-significant bonus value to make a cut on the residue equal to their bonus value. The pieces obtained from these cuts are then taken from the residue and associated (but not yet attached) to piece of cake $c$. We refer to this process as extraction. A piece is extracted only if all agents find it insignificant. The extracted pieces will potentially be attached to their associated piece of cake $c$ in the GoLeft Protocol. Most of them however will be sent back to the residue or shared amongst a subset of agents in an envy-free way. The process of attaching the extracted pieces to their associated piece in the GoLeft Protocol is designed to make that piece desirable to the agent whose bonus value was used to extract the piece from the residue. 
The Main Protocol also calls the Discrepancy Protocol in case there is some piece in consideration for extraction that some agents consider significant and others do not. The goal of the Discrepancy Protocol is to exploit any such discrepancy and to ensure that when the GoLeft Protocol is called by the Main Protocol, then there is no discrepancy in how the extracted pieces are viewed, i.e., no actually extracted piece is considered significant by some agent.


\subsection{Discrepancy Protocol}

When pieces are being extracted from the residue during the Main Protocol, it may be the case that one of the pieces $e_{jkl}$ in consideration for extraction is significant for some agent.
In that case, the piece is not extracted and the Discrepancy Protocol (Algorithm~\ref{algo:D}) is called in line~\ref{mdisc1} that either exploits or `eliminates' this discrepancy. The discrepant piece $e_{jkl}$ is kept aside from the residue.
On the other hand, all previously extracted pieces are added back to the residue. 
Since the difference between a piece that is just  above significant or just below significant can be arbitrarily small, the Core Protocol is used to create a \emph{gap} so that the discrepant piece either has value at least $V_i(R)n$ or value at most $V_i(R)/n$. 
This gap is highly useful because the goal of the Discrepancy Protocol is to either ensure that (1) everyone thinks that the discrepant piece is significant or (2) the problem of finding an envy-free allocation can be broken into two sub-problems where some agents $D$ are allocated the discrepant piece and the rest $D'$ are allocated the residue. In this, we use the fact mentioned earlier that envy-freeness implies proportionality.

\begin{algorithm}
\caption{Discrepancy Protocol}   
\label{algo:D}
\begin{algorithmic}[1]
\scriptsize
\REQUIRE Residue $R$, \textit{discrepant} piece $e_{jkl}$ , bonus value $b^u_{c_{jk}}$ on $c_{jk}$ of the agent $u$ who wanted to extract the discrepant piece (but could not extract), set of extracted pieces of cake $\{e_{jkl}\midd \text{extracted pieces}\}$ , agent set $N$.
\ENSURE Possibly modified residue $R$, Boolean value called DISCREPANCY, set of agents $D$, set of agents $D'$.
\end{algorithmic}
\begin{algorithmic}[1]
\scriptsize
\STATE Take out the discrepant piece $e_{jkl}$ from $R$; Reinsert all the \emph{extracted} pieces back into the residue, relabel the aggregate piece $R$.
\Algphase{Run Core}
\STATE Run Core Protocol($i$, $N$, $R$) for every agent $i$ $B$ times iteratively on the updated residue $R$
\WHILE{for some agent $i$, it is the case that $\frac{V_i(R)}{n}\leq V_i(e_{jkl})\leq V_i(R)n$} \label{dwhile1}
 
  \COMMENT{\emph{The piece might be discrepant but agents who do not classify the piece the same might have arbitrarily close valuation of the piece if they are lying next to the bound. This while loop ensures that there is a gap between agents who think the piece is significant and others}}
    \STATE Run Core Protocol($i$, $N$, $R$) $B$ times iteratively on the updated residue $R$. 
\ENDWHILE \label{dewhile1}
\Algphase{Exploit Discrepancy}
\IF{$e_{jkl}$ still has some agents consider it significant and others not}
  \COMMENT{\emph{In this case we can exploit the discrepancy and `separate' the agents}}
  
  \STATE Set $D$ to $\{i\in N\midd V_i(e_{jkl})\geq V_i(R)n\}$.
  \STATE Set $D'$ to $\{i\in N\midd V_i(e_{jkl})\leq \frac{V_i(R)}{n}\}$.
  
  \COMMENT{\emph{ Because of the while loop line \ref{dwhile1}, we have no agents satisfying $ \frac{V_i(R)}{n}\leq V_i(e_{jkl})\leq V_i(R)n$; these inequalities come from proportionality. We want to ensure that agents sharing the residue think that even if someone else gets all of the discrepant piece (or residue), they will still be better off  by getting their proportional share of the residue (or discrepant piece). Note that $N=D\cup D'$.}}
  \STATE Set DISCREPANCY to $1$.
  \RETURN $R$, DISCREPANCY, $D$ and $D'$.
\ELSE 
\Algphase{Cannot Exploit Discrepancy --- but all agents now on the same page}
  \COMMENT{\emph{ This means that we cannot exploit a discrepancy but now all agents think that the bonus value of whoever made the trim delimiting the discrepant piece is significant, including agent $u$ who made the trim}}
  \STATE Set DISCREPANCY to $0$.
  
  \COMMENT{\emph{$b^u_{c_jk}$ is now considered significant by agent $u$}}
  \RETURN $R$ and DISCREPANCY.
\ENDIF
\end{algorithmic}
\end{algorithm} 

In case of (1), the process of extraction in the Main Protocol is reset and the discrepant piece as well as all the pieces that had been extracted  are sent back to the residue. This may appear to be a waste of work but when we do this, we have ensured that at least one agent has significant advantage over another's piece for a given snapshot. This `setback' can only happen $C'n^2$ times before each agent dominates each other agent in which case the remaining cake can be allocated arbitrarily without causing envy.

We note that Discrepancy makes the residue smaller as it calls the Core Protocol. When the residue becomes smaller, it may be that a piece of cake that was not significant for an agent becomes significant because significance is defined with respect to the residue. Hence any agent $i$ who thinks that  $\frac{V_i(R)}{n}\leq V_i(e_{jkl})\leq V_i(R)n$ will eventually think that $V_i(e_{jkl})\geq V_i(R)n$ when the Core Protocol has been run to reduce the residue.



\subsection{Groundwork for the GoLeft Protocol}

\begin{algorithm}
\caption{GoLeft Protocol}
\label{algo:GL}
\begin{algorithmic}
\scriptsize
\REQUIRE Set $N$, Set $C'$ of snapshots $p_j$, set of corresponding pieces $\{c_{jk}\}$, the set of extracted pieces $\{e_{jkl}\}$, and residue $R$.
\ENSURE A set of agents $A \subset N$ such that all agents in $N \setminus A$ dominate all agents in $A$.
\end{algorithmic}
\begin{algorithmic}[1]
\scriptsize
\Algphase{Isomorphic Snapshots and build Permutation Graph}
\STATE Select from the $C'$ snapshots a set of size $C$ of \textit{isomorphic snapshots} ( this follows by Claim~\ref{claim:find-isomorphic} which uses a pigeon-hole argument). \label{isomorphic}
\STATE Relabel the $C$ snapshots $p_j$ with $j$ now ranging from $1$ to $C$.
\STATE Declare set $S$ in which we add all $C$ snapshots $\{p_j\midd j\in \{1,\ldots, C\}\}$.
\COMMENT{\emph{Out of all the snapshots that the Main Protocol generated, we will focus on isomorphic snapshots.  The isomorphic snapshots that are selected are put in a set $S$ which will shrink in size as the algorithm proceeds.}}
\STATE Build the permutation graph with nodes corresponding to the agents. 
Throughout the algorithm, we maintain two sets: (1) $T$ (set of nodes/agents such that the isomorphic pieces held by them in $S$ have not had $n-1$ attachments) and (2) 
$T'$ (set of nodes/agents such that the isomorphic pieces held by them in $S$ have had $n-1$ attachments). $T$ and $T'$ partition the nodes. Initially, (1) all nodes are placed in $T$ so that $T'$ is empty; (2) each node points to itself in the permutation graph (3) each agent owns his originally allocated set of pieces allocated in the Core snapshots in the working set $S$. Thus each node has the following associated information: agent and his allocation in snapshots in $S$. 
\COMMENT{\emph{The permutation graph will keep track of which permutation we can do next in order to make the protocol progress. If the allocated pieces associated with a node have had $n-1$ attachments, the node is moved from $T$ to $T'$.}}
\WHILE{there is a node in $T$} \label{gwhile1}
  \Algphase{Cycle and Exchange}
   \STATE Find a cycle in the permutation graph which involves a node from $T$. \label{cycle}
     \COMMENT{\emph{Since nodes in $T$ all have in-degree exactly  $1$ and each node points to each node in $T'$, such a cycle exists as proved in Claim~\ref{claim:cycle-exists}}}.
    \STATE For the agents in the cycle, exchange the allocation of the agents in the snapshots in $S$ as follows: if $i$ points to $j$, give $i$ the pieces of cake that $j$ was currently allocated in snapshots $S$. 
          \STATE If $i$ points to $j$ in the selected cycle, do as follows. Transfer all in-edges of $j$ to $i$. Replace $j$ in set $T$ or $T'$ by $i$.   \COMMENT{\emph{ This update reflects the permutation of the allocation. At this point, each node that was in the cycle has a self-loop. }}.
\STATE Take a node/agent $i$ in the cycle that is from $T$.
\Algphase{Separation - the only way the while loop exits}
\IF{there is a node in the located cycle that is from $T$ (has had less than $n-1$ attachments) but has no extracted pieces to be attached}\label{step:separation}
\STATE \COMMENT{Found a separation of agents where one set dominates the other. The reason no extracted piece is present is because all agents who have \emph{not} had their corresponding extracted piece attached have a significant advantage over agents who already have had an extracted piece attached}. Focus on the  set of isomorphic pieces $C$ associated with the node. All agents who have been given an element of $C$ are placed in $A$. 
\RETURN $A$.
\ELSE ~we now know that there is an agent $i$ that was in the cycle from $T$ that holds pieces in $S$ that still have extracted pieces to be attached. We focus on this agent $i$ in the attachment phase.
\ENDIF
\Algphase{Attachment --- (to be done in a subset of the snapshots)}
  \COMMENT{\emph{The next part of the protocol will attach in a subset of the snapshots the set of isomorphic extracted pieces in $e_{k(l+1)}$ to the set of pieces $c_k$, thus making pieces in set $c_k$ desirable to the agent who extracted pieces in $e_{k(l+1)}$. The set of pieces $c_k$ and its current attachments are currently owned by agent $i$. At this point for each isomorphic piece $c_k$, associated pieces up till $i$'s trim have already been attached.}}
   \STATE \textbf{Index the agents in the order which they extracted the pieces associated to $c_k$} so that $1$ originally got $c_k$, $2$ made the next extraction and so on.   
 In doing so, we index agent $i$ as $l$. 
  \STATE Declare $S'\leftarrow \emptyset$.

  \Algphase{Attachment --- making it agreeable for agents from $l+1$ to $n$ }
  \FOR{agent $i$ ranging from $l+1$ to $n$} \label{gfor1}
  \COMMENT{\emph{Ask agent $i$ who does not (so far) find pieces in $c_k$ desirable to discard enough snapshots where being allocated $c_k$ instead of whichever piece he is holding would be most disadvantageous. In other words, he reserves advantage over agents in $\{1,\ldots, l\}$.}}
    \STATE Ask $i$ to choose $\frac{|S|}{n-l+1}$ snapshots from $S$ for which $i$ values the difference between his bonus value for $c_k$ and the extracted pieces currently attached to $c_k$ the most.
    \STATE Remove the chosen snapshots from $S$ and add them to $S'$.
  \ENDFOR \label{gefor1}
  \STATE All extracted pieces that have not been attached from snapshots in $S'$ are put back into the residue $R$. The new aggregate piece is labelled $R$.
  \COMMENT{ The pieces that were associated to a piece in a discarded snapshot but not attached will never be attached, we therefore send them back to the residue. The fact that they were not significant means that the size of the residue is hardly affected by this}
    \Algphase{Attachment --- making it agreeable for agents from $1$ to $l$ }
  \FOR{all agents $i$ ranging from $1$ to $l$} \label{gfor2}
    \COMMENT{ \emph{In this part we ensure that agents $i \in \{1, \ldots, l\}$ will not be envious of agent $l+1$ even if $l+1$ is allocated $c_k$  with the additional $e_{k(l+1)}$} as well as the intermediate attachments}.
    \STATE Ask $i$ to choose the $\frac{|S|n}{ln+1}$ snapshots from $S$ for which he values the piece $e_{k(l+1)}$ the most.
    \STATE Remove the chosen snapshots  from $S$ and add them to $S''$.
  \ENDFOR \label{gefor2}
  \STATE All extracted pieces that have not been attached from snapshots in $S''$ are put back into the residue $R$. The new aggregate piece is labelled $R$.
  \COMMENT{\emph{Again extracted pieces that will never be attached are sent back to the residue}}
  \STATE All pieces $e_{k(l+1)}$ in snapshots in $S''$  are aggregated into a piece $a$.
  \STATE Run Main Protocol($a$, $\{1, \ldots, l\}$).
   \label{gsub1}
 \STATE $S'' \leftarrow \emptyset$.
      \Algphase{Attachment --- now happening}
  \STATE We now attach the piece of cake $e_{k(l+1)}$ to $c_{k}$ in the snapshots still in $S$, meaning that if agent $l+1$ were to move to piece $c_k$, he would also get $e_{k(l+1)}$. In other words $c_k$ is now desirable to $l+1$ because of the attachments. However in snapshots where we keep agent $l$ as the agent allocated piece $c_k$, the piece $e_{k(l+1)}$ is sent back to the residue, as it could not be given to $l$ in an envy-free way \label{attach}
      \COMMENT{\emph{ This update reflects that we made a set of allocated pieces desirable to a new agent}}.
      \STATE Remove the self-loop of $i$. Replace it with an edge going from the agent who extracted the piece that the protocol just attached to $c_{jk}$.      
   \STATE If the pieces held by $i$ have had $n-1$ attachments, delete $i$ from $T$ and place it in $T'$ and make every node point to $i$. 
  	
\ENDWHILE \label{gewhile1}
\end{algorithmic}
\end{algorithm}            

The GoLeft Protocol is the heart of our overall protocol and is crucial to allocate the cake that is still not allocated. All the other steps in the Main Protocol can be viewed as preparing the ground for the GoLeft Protocol to work. By calling the Core Protocol a sufficient number of times in the Main Protocol, enough $C'$ Core snapshots  are obtained that are helpful to identify $C$ isomorphic snapshots in the GoLeft Protocol (these isomorphic snapshots constitute the working set of snapshots over which the GoLeft Protocol operates).
Moreover, by calling Discrepancy before GoLeft, it is ensured that all agents are on the same page: all agents consider all the extracted pieces as insignificant.

The GoLeft Protocol (Algorithm~\ref{algo:GL}) is called by the Main Protocol. 
The goal of the GoLeft Protocol is to identify a set of agents $N\setminus A$ that dominate agents in $A$. This means that the remaining residue can be allocated among agents in $A$ in an envy-free manner without worrying about agents in $N\setminus A$ envying them. 
The goal of the GoLeft protocol is achieved
by attaching extracted pieces to the pieces in the working set of snapshots in a methodical manner while maintaining envy-freeness of the allocated cake. In order to maintain envy-freeness, the working set of snapshots is modified in various ways. When all extracted pieces for the working set of snapshots have been attached, we will show that we obtain a set of agents that all dominate the other agents. 
For example, if we end up with Core snapshots in which for one isomorphic allocated piece, there are a  total of less than $n-1$ extracted pieces that have all been attached and are held by a certain agent $i$, then agents who did not manage to extract pieces corresponding to the main piece have a significant advantage over $i$ as well as all other agents who extracted pieces before $i$ for that main piece. This significant advantage translates into dominance.

The GoLeft protocol operates on a working set of isomorphic snapshots. The main operations of the GoLeft Protocol are to attach extracted pieces to the Core snapshot allocation and to implement exchanges in which there is a sequence of agents $a_o,a_1,\ldots, a_{k-1}$ where each agent $a_i$ in the sequence gets the pieces of agent $a_{i+1 \mod k}$.
When operations such as attachments and exchanges happen, the isomorphic snapshots change but remain isomorphic nonetheless.

Before we give an overview of the steps of the GoLeft Protocol, we establish some concepts and mathematical structures we will work with. The two key structures are the \emph{working set $S$ of isomorphic snapshots} and the 
\emph{permutation graph} that is defined with respect to the working set of isomorphic snapshots. During the GoLeft Protocol, each structure gets updated based on the information on the other structure. The algorithm makes progress when agents exchange their pieces with each other.

\paragraph{Working Set of Isomorphic Snapshots}
During the course of the GoLeft Protocol, the isomorphic snapshots in $S$ get changed in the following way: (1) some subset of $S$ is removed from $S$ (2) agents exchange their pieces and (3) extracted pieces are attached to pieces in the snapshots in $S$. 
 
 When we update $S$, we maintain isomorphism and other invariant properties as follows. 
 If an agent holds a piece $c_{jk}$, he holds the whole set $c_k$ of isomorphic pieces in set of snapshots $S$. Note that we are abusing notation here as $c_k$ refers to a set of pieces allocated to agent $k$ in the working set of snapshots $S$. If an agent holds an extracted piece $e_{jkl}$, he holds the whole set $e_{kl}$ of isomorphic extracted pieces associated with $S$ as well.  
 When we implement an exchange, we are essentially making $|S|$ exchanges -- one in each of the snapshots. These are exchanges that not only involve the pieces in the snapshots but also involve those extracted pieces that have been \emph{attached} to the pieces. When we attach an extraction to a piece in a snapshot in $S$, we simultaneously attach isomorphic extractions to the corresponding isomorphic allocated piece in each of the snapshots.

We also ensure that the extracted pieces are attached in the appropriate order so that each associated piece gets attached after the previous associated pieces have been attached. For example, if we were focussing on snapshot in Figure~\ref{fig-extract}, the extracted associated piece due to agent $3$ will get attached after the extracted associated piece due to agent $4$.  Note that if we attach an extracted piece $e_{jkl}$ to an allocated piece $c_{jk}$ (along with its previously attached extracted pieces) in a snapshot in the working set $S$, we perform a similar attachment for all such isomorphic extracted pieces in the set $e_{kl}$ to their corresponding pieces in set $c_{k}$. Also, if an agent $i$ currently holds an extracted piece, he also holds all earlier extracted pieces as well. 

We make a crucial point about a consistency condition that we enforce. 


\begin{remark}\label{remark:consistent}
We enforce a consistency condition whereby an agent cannot hold extracted pieces beyond the extraction he himself made. Therefore, when we attach an extracted piece   to agent $j$'s piece in a snapshot to attract agent $i$ to it, agent $j$ does not actually hold the latest attachment because it is beyond $j$'s extraction. However in an exchange, if $i$ were to get $j$'s piece along with its attachments, then $i$ will also get the latest attachments that were originally extracted by $i$ himself. 
\end{remark}

In the GoLeft Protocol, we operate on an increasingly small set of isomorphic snapshots. The goal is to reach a set of isomorphic snapshots that can be used to find a set of agents who all dominate other agents.
As the GoLeft Protocol proceeds, we discard snapshots to allow us to focus on others where extracted pieces from $E$ have been `attached' in an envy-free way to the allocated  pieces in $C$ to which they were associated. The set of isomorphic snapshots $S$ becomes smaller when a set of isomorphic extractions $e_{k(l+1)}$ are attached to set of isomorphic pieces $c_k$ in the snapshots $S$ with each particular extracted piece in $e_{k(l+1)}$ getting attached to its corresponding piece in $e_k$.  By \emph{discarding} a set of Core snapshots, we mean that these snapshots and their allocations are not further worked upon and their associated unattached extracted pieces are sent back to the residue. Intuitively, the purpose of discarding snapshots will be to preserve the remaining advantages of agents over other agents. 



The GoLeft procedure gradually attaches the extracted pieces to the allocated pieces in the working set of snapshots. 
The reason we call the protocol `go left' is because we can visualize that for each piece in the set of snapshots we work with, we want to add the next extracted pieces to the isomorphic allocated pieces that are kept to the left of the isomorphic allocated pieces (see Figure~\ref{fig:goleft}). By attaching the next extracted pieces, we are `going left'. 

\begin{figure}[h!]
\centering
     \scalebox{0.7}{
\begin{tikzpicture}[every node/.style={draw=none,fill=none,font={\small}}]
\coordinate (0b) at (0mm,-5mm); 
\coordinate (0t) at (  0mm, 5mm);
\coordinate (1b)  at ($(0b)  - (50mm, 0)$); 
\coordinate (1t)  at ($(0t)  - (50mm, 0)$);
\coordinate (2b)  at ($(1b)  - (10mm, 0)$); 
\coordinate (2t)  at ($(1t)  - (10mm, 0)$);
\coordinate (3b)  at ($(2b)  - (10mm, 0)$); 
\coordinate (3t)  at ($(2t)  - (10mm, 0)$);
\coordinate (kb)  at ($(3b)  - (16mm, 0)$); 
\coordinate (kt)  at ($(3t)  - (16mm, 0)$);
\coordinate (k1b) at ($(kb)  - (10mm, 0)$);
 \coordinate (k1t) at ($(kt)  - (10mm, 0)$);
\coordinate (k2b) at ($(k1b) - (10mm, 0)$); 
\coordinate (k2t) at ($(k1t) - (10mm, 0)$);
\coordinate (lb)  at ($(k2b) - (16mm, 0)$); 
\coordinate (lt)  at ($(k2t) - (16mm, 0)$);
\coordinate (l1b) at ($(lb)  - (10mm, 0)$); 
\coordinate (l1t) at ($(lt)  - (10mm, 0)$);
\coordinate (n1b) at ($(l1b) - (16mm, 0)$); 
\coordinate (n1t) at ($(l1t) - (16mm, 0)$);
\coordinate (nb)  at ($(n1b) - (10mm, 0)$); 
\coordinate (nt)  at ($(n1t) - (10mm, 0)$);
\draw (1b) -- (1t); \draw (2b)  --  (2t); \draw (3b)  --  (3t);
\draw (kb) -- (kt); \draw (k1b) -- (k1t); \draw (k2b) -- (k2t);
\draw (lb) -- (lt); \draw (l1b) -- (l1t); \draw (n1b) -- (n1t);
\draw (0b) -- (nb) -- (nt) -- (0t) -- (0b);
\draw (0b) -- (nb) -- (nt) -- (0t) -- (0b);

\draw[shade] (lb) -- (lt) -- (l1t) -- (l1b) -- (lb);

\node at ($(0t)  - (25mm, 4.5mm)$) {$c_{j1}$:  piece allocated to Agent $1$};

\node at (-55mm, 0mm) {$e_{j1}$};
\node at (-65mm, 0mm) {$e_{j2}$};
\node at (-90mm, 0mm) {$e_{jl}$};
\node at (-100mm, 0mm) {$e_{jl+1}$};

\node at ($(3t)  - (08mm, 4mm)$) {$\hdots$};
\node at ($(k2t) - (08mm, 4mm)$) {$\hdots$};
\node at ($(l1t) - (08mm, 4mm)$) {$\hdots$};
\node[below = 2mm of  1b] {$1$};   \node[below = 2mm of  2b] {$2$};
\node[below = 2mm of  3b] {$3$};   \node[below = 2mm of  kb] {$l$};
\node[below = 2mm of k1b] {$l+1$}; \node[below = 2mm of k2b] {$l+2$};
\node[below = 2mm of  lb] {$k+1$};   
\node[below = 2mm of l1b] {$k+2$};
\node[below = 2mm of n1b] {$n-1$}; \node[below = 2mm of  nb] {$n\vphantom{1}$};
\end{tikzpicture}
}
\caption{Illustration of the GoLeft Protocol on a particular piece of cake that is originally allocated to agent $1$ and in which the pieces were extracted in this particular case by agents in set $\{2,\ldots, k+1\}$ and in order $2, \ldots, k+1$.
The piece $c_{j1}$ was originally allocated to agent $1$ in snapshot $p_j$. 
When we refer to `extractions up till agent $l$'s extracted piece we will mean extracted pieces $e_{j1},\ldots, e_{j(l-1)}$.
 During the course of the GoLeft protocol, the pieces extracted by the agents $2,3,\ldots, k+1$ will be attached to $c_{j1}$. This attachment can be viewed as going left. Agents $k+2$ to $n$ will not go left (have corresponding pieces extracted/attached) and are the prospective dominators because they find the shaded space between the trims of $k+2$ and $k+1$ significant. 
If during the course of the GoLeft Protocol, all the extracted pieces are successfully attached, then agents in $\{k+2,\ldots, n\}$ will dominate agents in set $A=\{1,\ldots, k+1\}$.}
\label{fig:goleft}
\end{figure}
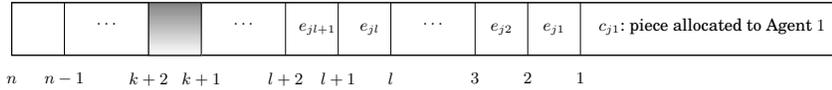



How do we know which set of isomorphic pieces in $S$ can have their next extraction? Which set of agents can exchange their currently held isomorphic pieces along with their current attachments? For this we work in tandem with the permutation graph. 

\paragraph{Permutation Graph}

The permutation graph keeps track of which agent is willing to move to which piece in the snapshots. The high level idea is that nodes of the permutation graph correspond to the agents and an agent $i$ points to another agent $j$ if he will be as happy taking $j$'s allocated pieces along with the attachments on those pieces.


\begin{definition}[Permutation graph]
The permutation graph is a directed graph where
the set of nodes correspond to the set of agents. Hence when we refer to the graph, we will use agents and nodes interchangeably.  The arcs of the permutation graph depend on the current state of the working set of isomorphic snapshots $S$. In particular, they depend on which extracted pieces have been attached to the originally allocated pieces in the isomorphic snapshots in $S$. 
Agent $i$ points to agent $j$ if $j$ holds isomorphic pieces in $S$ that have had all attachments up till $i$'s extracted pieces.
We build the initial permutation graph with each node pointing only to itself. 
Throughout the protocol, we ensure that each node in the permutation graph has in-degree at least one.
\end{definition}

The permutation graph itself gets updated when isomorphic extractions are attached to a isomorphic pieces in $S$. The process of extracted pieces being attached to an allocated piece results in the aggregated piece becoming attractive to a new agent who then wants to point to the agent holding that piece (see Figures~\ref{fig:perm1} and ~\ref{fig:perm2}). 

The permutation graph also suggests a natural way to exchange pieces. It has a similar idea as the trading graph used in top trading cycles algorithm for housing markets. If there is a cycle in the graph, we have the possibility of exchanging the allocations of agents in the cycle by giving an agent the piece of the agent he points to~\cite{SoUn10a}. The permutation graph is more intricate because updates on the permutation graph reflect simultaneous updates on the working set of isomorphic snapshots $S$. Also by Remark~\ref{remark:consistent}, an agent in an exchange does not offer what he holds but can offer more because of the additional attachments beyond his own extraction. 

\begin{figure}[h!]
    \scalebox{0.5}{
\begin{tikzpicture}

\node[circle,draw] (n1) at (0,0) {1};
\node[circle,draw] (n2) at (3,0) {2};
\node[circle,draw] (n3) at (6,0) {3};
\node[circle,draw] (n4) at (9,0) {4};

\path (n1) edge [thick,bend left=90] (0,2);
\path[->] (0,2) edge [thick,bend left=90] (n1);
\path (n2) edge [thick,bend left=90] (3,2);
\path[->] (3,2) edge [thick,bend left=90] (n2);
\path (n3) edge [thick,bend left=90] (6,2);
\path[->] (6,2) edge [thick,bend left=90] (n3);
\path (n4) edge [thick,bend left=90] (9,2);
\path[->] (9,2) edge [thick,bend left=90] (n4);

\draw (10.5,0)--(10.5,1)--(14,1);
\draw(14,0)--(10.5,0);
\draw(14,-0.5)--(14,-0);
\draw (14,1)--(14,1.5)--(15,0.5)--(14,-0.5);
\node at(12.5,0.5) {Update};
\begin{scope}[shift={(16,0)},rotate=0]
\node[circle,draw] (n1) at (0,0) {1};
\node[circle,draw] (n2) at (3,0) {2};
\node[circle,draw] (n3) at (6,0) {3};
\node[circle,draw] (n4) at (9,0) {4};

\path (n2) edge [thick,bend left=90] (3,2);
\path[->] (3,2) edge [thick,bend left=90] (n2);
\path (n3) edge [thick,bend left=90] (6,2);
\path[->] (6,2) edge [thick,bend left=90] (n3);
\path[->] (n2) edge [thick,bend left=30] (n1);
\path (n4) edge [thick,bend left=90] (9,2);
\path[->] (9,2) edge [thick,bend left=90] (n4);
\end{scope}
\end{tikzpicture}
}
 \caption{Update on the permutation graph after the leftover isomorphic pieces of cake associated with node $1$ and held by agent $1$ have had their first associated piece of cake attached, making the pieces desirable to agent $2$.}
 \label{fig:perm1}
\end{figure}
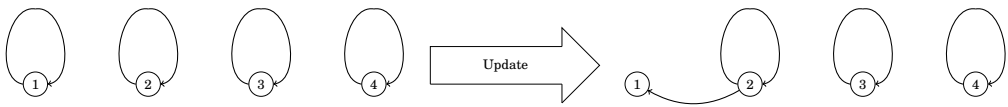

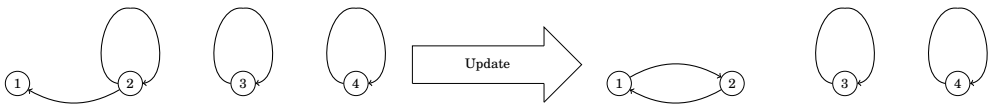
\begin{figure}[h!]
    \scalebox{0.5}{
\begin{tikzpicture}
\node[circle,draw] (n1) at (0,0) {1};
\node[circle,draw] (n2) at (3,0) {2};
\node[circle,draw] (n3) at (6,0) {3};
\node[circle,draw] (n4) at (9,0) {4};

\path (n2) edge [thick,bend left=90] (3,2);
\path[->] (3,2) edge [thick,bend left=90] (n2);
\path (n3) edge [thick,bend left=90] (6,2);
\path[->] (6,2) edge [thick,bend left=90] (n3);
\path (n4) edge [thick,bend left=90] (9,2);
\path[->] (9,2) edge [thick,bend left=90] (n4);
\path[->] (n2) edge [thick,bend left=30] (n1);

\draw (10.5,0)--(10.5,1)--(14,1);
\draw(14,0)--(10.5,0);
\draw(14,-0.5)--(14,-0);
\draw (14,1)--(14,1.5)--(15,0.5)--(14,-0.5);

\node at(12.5,0.5) {Update};
\begin{scope}[shift={(16,0)},rotate=0]
\node[circle,draw] (n1) at (0,0) {1};
\node[circle,draw] (n2) at (3,0) {2};
\node[circle,draw] (n3) at (6,0) {3};
\node[circle,draw] (n4) at (9,0) {4};

\path[->] (n1) edge [thick,bend left=30] (n2);
\path (n3) edge [thick,bend left=90] (6,2);
\path[->] (6,2) edge [thick,bend left=90] (n3);
\path[->] (n2) edge [thick,bend left=30] (n1);
\path (n4) edge [thick,bend left=90] (9,2);
\path[->] (9,2) edge [thick,bend left=90] (n4);
\end{scope}
\end{tikzpicture}
}
 \caption{Update on the permutation graph after the leftover isomorphic pieces of cake associated with node $2$ and held by agent 2  have had their first associated piece of cake attached, making the piece desirable to agent $1$. This results in a non-trivial cycle that allows for agents 1 and 2 to exchange their currently held pieces.}
 \label{fig:perm2}
\end{figure}


\subsection{The GoLeft Protocol}

Now that we have established the important ideas in the GoLeft Protocol, let us tie up the ideas and give an overview of how the protocol works. 
When the GoLeft Protocol starts, it first identifies a working set $S$ of $C$ Core snapshots from out of the $C'$ Core snapshot that we focus on. The protocol then constructs a permutation graph corresponding to the working set of isomorphic snapshots. 

In the permutation graph, each node $i$ corresponds to an agent $i$ who holds a set of isomorphic pieces along with their attached extracted pieces in the working set of isomorphic snapshots $S$.
We divide the nodes of the permutation graph into sets $T$ and $T'$. Set $T$ is the set of nodes/agents such that the isomorphic pieces held by them in $S$ have not had $n-1$ attachments).  $T'$ is the set of nodes/agents such that the isomorphic pieces held by them in $S$ have had $n-1$ attachments. 

The protocol identifies a cycle in the permutation graph that includes at least one node $i$ from $T$. Such a cycle always exists. In each of the working set $S$ of isomorphic snapshots, we implement an exchange of pieces held by agents in the cycle: each agent in the cycle is given the piece corresponding to the node that the agent points to in the cycle. After implementing the exchange, the permutation graph is updated to reflect the exchange. If in the exchange, if an agent gets an inferior piece, he always gets the additional extracted pieces associated with the inferior piece up till the agent's extractions. Hence in each snapshot in $S$, each agent's value from his allocation in the snapshot is preserved even if he gets a different piece than in the original Core snapshot.  For any agent $i$, as long as no agent gets extracted pieces beyond $i$'s extraction, $i$ will not be envious. In the GoLeft protocol, it can be the case that some agent $j$ gets extracted pieces beyond $i$'s extracted pieces but before any such attachments in the last part of the GoLeft protocol, we ensure that no envy arises. 

After implementing the cycle, we focus on a node $i\in T$ that was in the cycle.
For agent/node $i$  we know that for all snapshots in the working set $S$, agent $i$ has been allocated the original isomorphic pieces $c_k$ as well as all associated pieces up till $i$'s extracted piece. If the piece of cake agent $i$ is currently allocated in the snapshots $S$ has no more extracted pieces left to attach to it, but it has not had $n-1$ attachments, this means that all agents who have not had their corresponding piece extracted/attached have a significant advantage over agents who have had an extracted piece attached. In this case, the GoLeft Protocol returns the set of dominated agents to the Main Protocol and we are left with a smaller envy-free allocation problem because it involves less number of agents.

In case node $i$ does not lead to an exit from the GoLeft Protocol, we know that there are associated pieces that can still be attached to the isomorphic pieces held  by $i$ in the working set of Core snapshots $S$. We focus on the next set of associated pieces $e_{k(l+1)}$ that we are interested to attach to the pieces $c_k$ that have already had associated pieces $e_{k2}, e_{k_3}, \ldots, e_{kl}$ attached in their corresponding main pieces $c_k$. Additionally attaching pieces $e_{k(l+1)}$ to pieces $c_k$ is useful in making the agent who extracted them, interested in the pieces $c_k$ because of the additional $e_{k(l+1)}$ as well as the previous attachments. However, naively attaching the pieces can be problematic and spoil the envy-freeness of the allocation that we maintain. We deal with the issue as follows. 

\begin{itemize}
    \item The agents who did not extract pieces associated with the $c_k$ pieces as well as agents who extracted pieces that have not been attached are asked to `reserve' a big enough subset $S'\subset S$ of snapshots in which they 
value the difference between their bonus value for $c_k$ and the extracted pieces currently attached to $c_k$ the most.
These snapshots $S'$ are removed from $S$ and their remaining unattached associated pieces sent back to the residue. 
By maintaining the advantages in the snapshots $S'$, such agents will not be envious even if some agent in $\{1,\ldots, l\}$ additionally gets all other extracted pieces $e_{k(l+1)}$ in the remaining snapshots in $S$. The advantages reserved due to the discarded snapshots in $S'$ are crucial for the next phase where agents in $\{1,\ldots, l\}$ get extracted pieces $e_{k(l+1)}$ from subset $S''$ of the working set. 
\item The agents indexed from 1 to $l$  who have all already had their extracted pieces attached to $c_k$ are asked to choose a high enough fraction of the snapshots in $S$ in which they value the $e_{k(l+1)}$ pieces. We call these snapshots $S''$.
The $e_{k(l+1)}$ pieces from $S''$ are bunched together and the Main Protocol is called to divide this cake in an envy-free way among the agents indexed from $1$ to $l$ where $l$ is strictly less than $n$.
Since envy-freeness implies proportionality, they derive enough value that they will not be envious if the agent indexed $l+1$ gets all other pieces in set $e_{k(l+1)}$. 
The corresponding set of snapshots $S''$ are then discarded.
\end{itemize}

Hence each time we attach isomorphic extracted pieces $e_{k(l+1)}$ to isomorphic pieces $c_k$, we discard snapshots $S'\cup S''$ from the working set $S$ and still maintain an envy-free allocation. 
Note that in the snapshots that remain in $S$, agents may currently hold a different isomorphic piece than they previously did, but since they also hold the corresponding attachments associated with the isomorphic piece, each agent's total value in each isomorphic snapshot in $S$ stays the same.
In Figure~\ref{fig:dynamics}, we show the states of the permutation graph and the corresponding representative Core snapshot as well as the corresponding extracted pieces.


When the protocol attaches extracted pieces $e_{k(l+1)}$ to allocated pieces $c_k$ currently held by agent $l$, it deletes the incoming edge of node/agent $l$ and replaces it by an edge coming from agent $l+1$ who extracted pieces in  $e_{k(l+1)}$. Intuitively, $l+1$ is now willing to be allocated $c$ and its attached pieces instead of his current pieces in $S$.
We delete previous edges to ensure that until termination, nodes in $T$ have in-degree strictly $1$ which guarantees that no matter the cycle involving a node in $T$ found by the protocol, we will make progress towards termination.  
By attaching enough extracted pieces in the appropriate order, the GoLeft Protocol finally arrives at a point, where there is some isomorphic set of pieces $c_k$ in the set $S$ for which 
all possible associate pieces have been attached but there is some set of agents $N\setminus A$ who do not have associated pieces.
The reason agents in $N\setminus A$ could not extract such pieces is because they had a unanimous significant advantage over the agent indexed $1$ who got the pieces $c_k$. By gradually attaching (unanimously insignificant) associated piece to pieces $c_k$ and ensuring that all agents who did extract corresponding pieces do get some isomorphic piece in $c_k$ (along with the associated insignificant attachments), we make sure that agents in $N\setminus A$ now dominate agents in $A$. At this point, we can return from the GoLeft Protocol.

\begin{figure}
\begin{tikzpicture}[scale=0.34]

\node[circle,draw] (n1) at (0,0) {1};
\node[circle,draw] (n2) at (10,0) {2};
\node[circle,draw] (n3) at (20,0) {3};
\node[circle,draw] (n4) at (30,0) {4};
\coordinate (u1) at (0,5){};
\coordinate (u2) at (10,5){};
\coordinate (u3) at (20,5){};
\coordinate (u4) at (30,5){};
\path (n1) edge [red,thick,bend left=90] (u1);
\path[->] (u1) edge [red,thick,bend left=90] (n1);
\path (n2) edge [thick,bend left=90] (u2);
\path[->] (u2) edge [thick,bend left=90] (n2);
\path (n3) edge [thick,bend left=90] (u3);
\path[->] (u3) edge [thick,bend left=90] (n3);
\path (n4) edge [thick,bend left=90] (u4);
\path[->] (u4) edge [thick,bend left=90] (n4);

\draw[fill,blue!60] (0,-2) rectangle (4,-4);
\draw[fill,red!60] (10,-2) rectangle (14,-4);
\draw[fill,green!40] (20,-2) rectangle (24,-4);
\draw[fill,yellow] (30,-2) rectangle (34,-4);
\node at (2,-3){\textbf{1}};
\node at (12,-3){\textbf{2}};
\node at (22,-3){\textbf{3}};
\node at (32,-3){\textbf{4}};

\draw[] (-1,-2) rectangle (-1.5,-4);
\draw[] (-1.5,-2) rectangle (-2,-4);
\draw[] (-2,-2) rectangle (-2.5,-4);
\node at (-1.25,-3) {2};
\node at (-1.75,-3) {3};
\node at (-2.25,-3) {4};

\draw[] (9,-2) rectangle (8.5,-4);
\draw[] (8.5,-2) rectangle (8,-4);
\draw[] (8,-2) rectangle (7.5,-4);
\node at (8.75,-3) {3};
\node at (8.25,-3) {1};
\node at (7.75,-3) {4};

\draw[] (19,-2) rectangle (18.5,-4);
\node at (18.75,-3) {1};

\draw[] (29,-2) rectangle (28.5,-4);
\draw[] (28.5,-2) rectangle (28,-4);
\draw[] (28,-2) rectangle (27.5,-4);
\node at (28.75,-3) {2};
\node at (28.25,-3) {3};
\node at (27.75,-3) {1};

\node at (0,-6){};

\end{tikzpicture}
\begin{tikzpicture}[scale=0.34]

\node[circle,draw] (n1) at (0,0) {1};
\node[circle,draw] (n2) at (10,0) {2};
\node[circle,draw] (n3) at (20,0) {3};
\node[circle,draw] (n4) at (30,0) {4};
\coordinate (u1) at (0,5){};
\coordinate (u2) at (10,5){};
\coordinate (u3) at (20,5){};
\coordinate (u4) at (30,5){};
\path[->] (n2) edge [thick,bend left=30] (n1);
\path (n2) edge [red,thick,bend left=90] (u2);
\path[->] (u2) edge [red,thick,bend left=90] (n2);
\path (n3) edge [thick,bend left=90] (u3);
\path[->] (u3) edge [thick,bend left=90] (n3);
\path (n4) edge [thick,bend left=90] (u4);
\path[->] (u4) edge [thick,bend left=90] (n4);

\draw[fill,blue!60] (0,-2) rectangle (4,-4);
\draw[fill,red!60] (10,-2) rectangle (14,-4);
\draw[fill,green!40] (20,-2) rectangle (24,-4);
\draw[fill,yellow!50] (30,-2) rectangle (34,-4);
\node at (2,-3){\textbf{1}};
\node at (12,-3){\textbf{2}};
\node at (22,-3){\textbf{3}};
\node at (32,-3){\textbf{4}};

\draw[] (-0.5,-2) rectangle (-1,-4);
\draw[] (-1.5,-2) rectangle (-2,-4);
\draw[] (-2,-2) rectangle (-2.5,-4);
\draw (-0.5,-3) -- (0,-3)[densely dotted, thick];
\node at (-0.75,-3) {2};
\node at (-1.75,-3) {3};
\node at (-2.25,-3) {4};

\draw[] (9,-2) rectangle (8.5,-4);
\draw[] (8.5,-2) rectangle (8,-4);
\draw[] (8,-2) rectangle (7.5,-4);
\node at (8.75,-3) {3};
\node at (8.25,-3) {1};
\node at (7.75,-3) {4};

\draw[] (19,-2) rectangle (18.5,-4);
\node at (18.75,-3) {1};

\draw[] (29,-2) rectangle (28.5,-4);
\draw[] (28.5,-2) rectangle (28,-4);
\draw[] (28,-2) rectangle (27.5,-4);
\node at (28.75,-3) {2};
\node at (28.25,-3) {3};
\node at (27.75,-3) {1};

\node at (0,-6){};
\end{tikzpicture}

\begin{tikzpicture}[scale=0.34]

\node[circle,draw] (n1) at (0,0) {1};
\node[circle,draw] (n2) at (10,0) {2};
\node[circle,draw] (n3) at (20,0) {3};
\node[circle,draw] (n4) at (30,0) {4};
\coordinate (u1) at (0,5){};
\coordinate (u2) at (10,5){};
\coordinate (u3) at (20,5){};
\coordinate (u4) at (30,5){};
\path[->] (n2) edge [thick,bend left=30] (n1);

\path[->] (n3) edge [thick,bend left=30] (n2);
\path (n3) edge [red,thick,bend left=90] (u3);
\path[->] (u3) edge [red,thick,bend left=90] (n3);
\path (n4) edge [thick,bend left=90] (u4);
\path[->] (u4) edge [thick,bend left=90] (n4);

\draw[fill,blue!60] (0,-2) rectangle (4,-4);
\draw[fill,red!60] (10,-2) rectangle (14,-4);
\draw[fill,green!40] (20,-2) rectangle (24,-4);
\draw[fill,yellow!50] (30,-2) rectangle (34,-4);
\node at (2,-3){\textbf{1}};
\node at (12,-3){\textbf{2}};
\node at (22,-3){\textbf{3}};
\node at (32,-3){\textbf{4}};

\draw[] (-0.5,-2) rectangle (-1,-4);
\draw[] (-1.5,-2) rectangle (-2,-4);
\draw[] (-2,-2) rectangle (-2.5,-4);
\draw (-0.5,-3) -- (0,-3)[densely dotted, thick];
\node at (-0.75,-3) {2};
\node at (-1.75,-3) {3};
\node at (-2.25,-3) {4};

\draw[] (9,-2) rectangle (9.5,-4);
\draw[] (8.5,-2) rectangle (8,-4);
\draw[] (8,-2) rectangle (7.5,-4);
\draw (9.5,-3) -- (10,-3)[densely dotted, thick];
\node at (9.25,-3) {3};
\node at (8.25,-3) {1};
\node at (7.75,-3) {4};

\draw[] (19,-2) rectangle (18.5,-4);
\node at (18.75,-3) {1};

\draw[] (29,-2) rectangle (28.5,-4);
\draw[] (28.5,-2) rectangle (28,-4);
\draw[] (28,-2) rectangle (27.5,-4);
\node at (28.75,-3) {2};
\node at (28.25,-3) {3};
\node at (27.75,-3) {1};

\node at (0,-6){};
\end{tikzpicture}

\begin{tikzpicture}[scale=0.34]

\node[circle,draw] (n1) at (0,0) {1};
\node[circle,draw] (n2) at (10,0) {2};
\node[circle,draw] (n3) at (20,0) {3};
\node[circle,draw] (n4) at (30,0) {4};
\coordinate (u1) at (0,5){};
\coordinate (u2) at (10,5){};
\coordinate (u3) at (20,5){};
\coordinate (u4) at (30,5){};
\path[->] (n2) edge [red,thick,bend left=30] (n1);
\path[->] (n3) edge [red,thick,bend left=30] (n2);
\path[->] (n1) edge [red,thick,bend left=30] (n3);
\path (n4) edge [thick,bend left=90] (u4);
\path[->] (u4) edge [thick,bend left=90] (n4);

\draw[fill,blue!60] (0,-2) rectangle (4,-4);
\draw[fill,red!60] (10,-2) rectangle (14,-4);
\draw[fill,green!40] (20,-2) rectangle (24,-4);
\draw[fill,yellow!50] (30,-2) rectangle (34,-4);
\node at (2,-3){\textbf{1}};
\node at (12,-3){\textbf{2}};
\node at (22,-3){\textbf{3}};
\node at (32,-3){\textbf{4}};

\draw[] (-1,-2) rectangle (-0.5,-4);
\draw[] (-1.5,-2) rectangle (-2,-4);
\draw[] (-2,-2) rectangle (-2.5,-4);
\draw (-0.5,-3) -- (0,-3)[densely dotted, thick];
\node at (-0.75,-3) {2};
\node at (-1.75,-3) {3};
\node at (-2.25,-3) {4};

\draw[] (9,-2) rectangle (9.5,-4);
\draw[] (8.5,-2) rectangle (8,-4);
\draw[] (8,-2) rectangle (7.5,-4);
\draw (9.5,-3) -- (10,-3)[densely dotted, thick];
\node at (9.25,-3) {3};
\node at (8.25,-3) {1};
\node at (7.75,-3) {4};

\draw[] (19,-2) rectangle (19.5,-4);
\draw (19.5,-3) -- (20,-3)[densely dotted, thick];
\node at (19.25,-3) {1};

\draw[] (29,-2) rectangle (28.5,-4);
\draw[] (28.5,-2) rectangle (28,-4);
\draw[] (28,-2) rectangle (27.5,-4);
\node at (28.75,-3) {2};
\node at (28.25,-3) {3};
\node at (27.75,-3) {1};

\node at (0,-6){};
\end{tikzpicture}

\begin{tikzpicture}[scale=0.34]

\node[circle,draw] (n1) at (0,0) {1};
\node[circle,draw] (n2) at (10,0) {2};
\node[circle,draw] (n3) at (20,0) {3};
\node[circle,draw] (n4) at (30,0) {4};
\coordinate (u1) at (0,5){};
\coordinate (u2) at (10,5){};
\coordinate (u3) at (20,5){};
\coordinate (u4) at (30,5){};
\path (n1) edge [red,thick,bend left=90] (u1);
\path[->] (u1) edge [red,thick,bend left=90] (n1);
\path (n2) edge [thick,bend left=90] (u2);
\path[->] (u2) edge [thick,bend left=90] (n2);
\path (n3) edge [thick,bend left=90] (u3);
\path[->] (u3) edge [thick,bend left=90] (n3);
\path (n4) edge [thick,bend left=90] (u4);
\path[->] (u4) edge [thick,bend left=90] (n4);

\draw[fill,green!40] (0,-2) rectangle (4,-4);
\draw[fill,blue!60] (10,-2) rectangle (14,-4);
\draw[fill,red!60] (20,-2) rectangle (24,-4);
\draw[fill,yellow!50] (30,-2) rectangle (34,-4);
\node at (2,-3){\textbf{1}};
\node at (12,-3){\textbf{2}};
\node at (22,-3){\textbf{3}};
\node at (32,-3){\textbf{4}};

\draw[] (0,-2) rectangle (-0.5,-4);
\node at (-0.25,-3) {1};

\draw[] (10,-2) rectangle (9.5,-4);
\draw[] (8.5,-2) rectangle (8,-4);
\draw[] (8,-2) rectangle (7.5,-4);
\node at (9.75,-3) {2};
\node at (8.25,-3) {3};
\node at (7.75,-3) {4};

\draw[] (20,-2) rectangle (19.5,-4);
\draw[] (18.5,-2) rectangle (18,-4);
\draw[] (18,-2) rectangle (17.5,-4);
\node at (19.75,-3) {3};
\node at (18.25,-3) {1};
\node at (17.75,-3) {4};

\draw[] (29,-2) rectangle (28.5,-4);
\draw[] (28.5,-2) rectangle (28,-4);
\draw[] (28,-2) rectangle (27.5,-4);
\node at (28.75,-3) {2};
\node at (28.25,-3) {3};
\node at (27.75,-3) {1};

\node at (0,-6){};
\end{tikzpicture}

\caption{
    Permutation graph along with corresponding state of an allocation snapshot representative of the working set of isomorphic snapshots. 
 }
\label{fig:dynamics}
\end{figure}
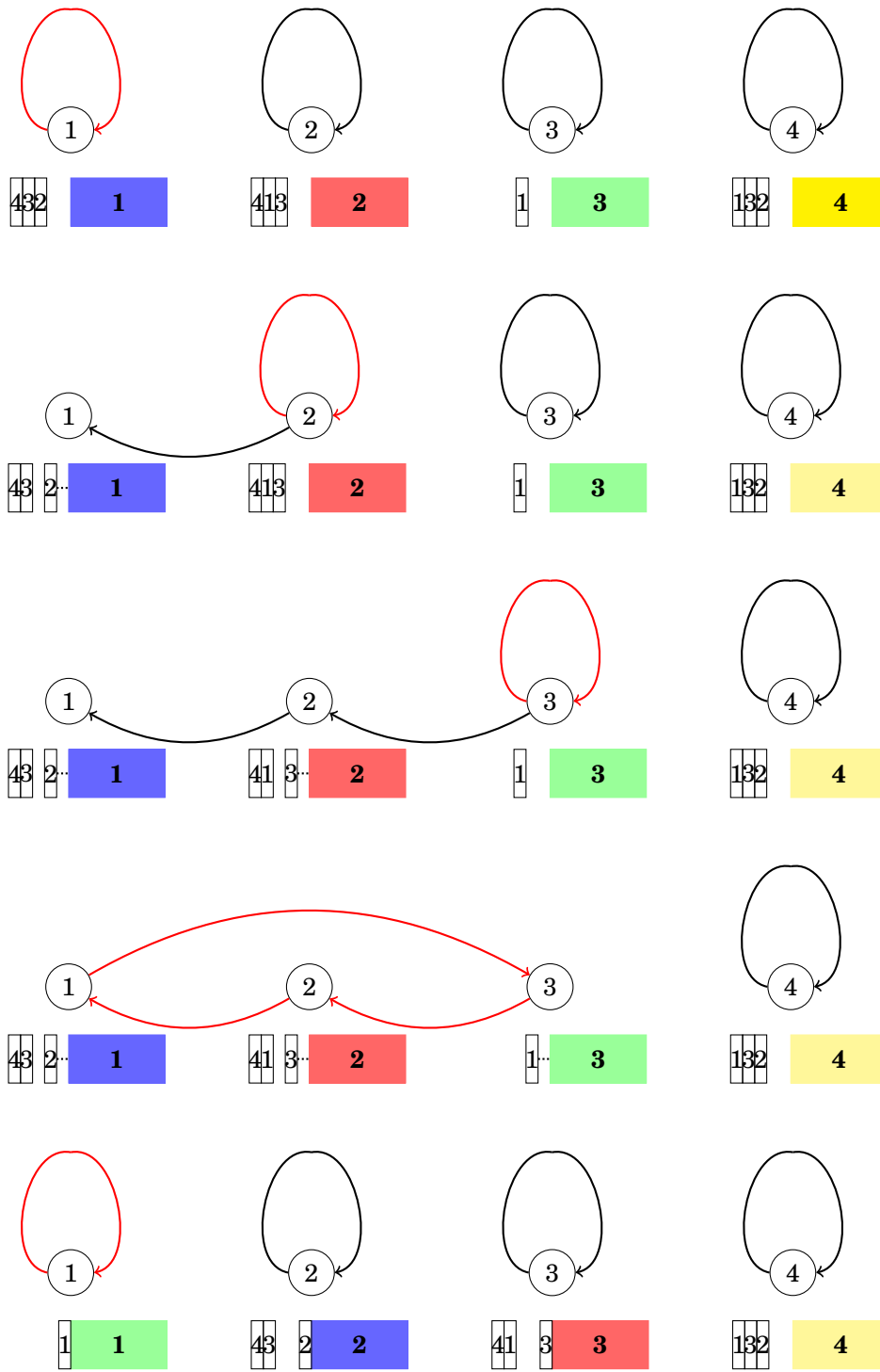

\begin{example}[GoLeft Protocol]
	In Figure~\ref{fig:dynamics}, we demonstrate how the permutation graph along with the  working set of isomorphic snapshots changes in the GoLeft Protocol. Note that even when the representative snapshot changes, there still exist snapshots isomorphic to the previous representative snapshots but these snapshots have been discarded from the working set of snapshots.
    The coloured/shaded pieces represent the pieces given by the Core Protocol to each player. The small pieces on the left of the coloured pieces are extracted pieces, each labelled by the agent who extracted it. At first the extracted pieces are associated with a specific allocated piece. Then they are attached to it (represented by the dotted lines). Finally, when a coloured/shaded piece is reallocated to a new agent, the  extracted pieces attached to it are also allocated to the new agent (in the diagram we now aggregate the extracted piece to the main piece). 
 In the second state of the isomorphic snapshot, agent $2$ points to agent $1$ 
because the piece extracted by agent $2$ has been attached to $1$'s held piece. 
    In the third state of the isomorphic snapshot, agent $3$ points to agent $2$ 
because the piece extracted by agent $3$ has been attached to $2$'s held piece. 
    In the fourth state of the isomorphic snapshot, agent $1$ points to agent $3$ 
because the piece extracted by agent $1$ has been attached to $3$'s held piece.  
In the fifth state, the agents $1,2,3$ exchange their currently held piece and are allocated cake up to their extracted piece. 
   In the fifth (last) state of the isomorphic snapshot, agent $1$ holds a piece up till his extraction but neither agent $2$ or $4$ extracted pieces for the piece that agent $1$ holds.  This means that agents $2$ and $4$ have a significant advantage over agent $1$. Initially the piece was held by $3$ and still is in discarded isomorphic snapshots. This implies significant advantage of $2$ and $4$ over $3$. Therefore agent $2$ and $4$ can be made to dominate $1$ and $3$.

	\end{example}

\section{Argument for Boundedness and Envy-freeness}
In the previous section, we presented the Main Protocol that relies on a few other protocols. We will now argue why it is bounded and envy-free.

\subsection{Argument for Boundedness}
We first show that the Main Protocol is bounded. 

\begin{lemma} \label{Coreb}
  The Core Protocol is bounded by ${(n^2)}^{n+1}$.
\end{lemma}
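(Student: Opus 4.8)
The plan is to charge the cost of the Core Protocol to a single call of the SubCore Protocol, and then bound SubCore by a recurrence in the number of agents. First I would observe that, apart from generating the tie-breaking (imaginary) values, which is done internally by the protocol and costs no queries, the Core Protocol issues exactly $n-1$ {\sc Cut} queries (the cutter divides $R$ into $n$ equally preferred pieces) and then makes one call of the SubCore Protocol on the agent set $N\setminus\{i\}$ (at most $n-1$ agents) with $n$ pieces; the final step, handing the cutter an untrimmed unallocated piece, costs nothing more. So it suffices to bound the queries of a SubCore call on at most $n-1$ agents and at most $n$ pieces.

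Next I would set up a recurrence. Let $Q(k)$ be the maximum number of queries used by a SubCore call with at most $k$ agents and at most $n$ pieces (the number of pieces only decreases down the recursion, so $n$ stays a valid cap). Consider such a call. Its outer loop runs at most $k$ times. In iteration $m$ the queries made directly (outside recursive SubCore calls) number $O(n^2)$: eliciting the agents' rankings of the $\le n$ pieces and their benchmark values $b'_j$ costs at most $mn\le n^2$ {\sc Evaluate} queries, and each of the at most $m$ agents places at most $m-1$ trims on the contested pieces, i.e.\ at most $m(m-1)\le n^2$ {\sc Cut} queries; identifying $W$, the rightmost trims, the ignored left margins and the kicked-out agent is purely combinatorial and uses no queries. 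Iteration $m$ also triggers recursive SubCore calls: the inner loop enlarges $W$ by exactly one agent per pass, starting from $|W|\ge 1$ and halting once $|W|=m-1$, so it runs at most $m-2$ times and makes at most $m-2$ recursive calls, to which we add the single recursive call made after the loop — at most $m-1$ recursive calls in iteration $m$, each on $|W|\le m-1\le k-1$ agents. Hence, for an absolute constant $c$,
\[
Q(k)\ \le\ \sum_{m=1}^{k}\Bigl(c\,n^2+(m-1)\,Q(k-1)\Bigr)\ \le\ c\,n^3+\binom{k}{2}\,Q(k-1),
\]
with base case $Q(1)\le n$, and the recursion is well founded because each recursive call strictly decreases the agent count, so its depth is at most $n$.

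Unrolling finishes the proof. Since $\binom{k}{2}\le n^2/2$ for all $k\le n$, iterating gives $Q(n)\le c\,n^3\sum_{j=0}^{n-1}(n^2/2)^j+(n^2/2)^{n-1}Q(1)$; the geometric-type sum is $O\!\bigl((n^2/2)^{n-1}\bigr)$, so $Q(n)=O\!\bigl(n^{2n+1}/2^{n}\bigr)$. Adding the $n-1$ {\sc Cut} queries of the Core Protocol and using the $2^{-\Theta(n)}$ factor to absorb $c$ and the extra factor of $n$, the total is well below $n^{2n+2}=(n^2)^{n+1}$; the handful of small $n$ (where the Core Protocol is anyway only Divide-and-Choose or a constant-size routine) are checked directly.

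The main obstacle is the bookkeeping rather than any single estimate: reading off the SubCore pseudocode the per-iteration query count $O(n^2)$ and, above all, the bound of at most $m-1$ recursive calls per iteration of the outer loop, which is exactly what keeps the recursion's branching polynomial and its depth linear. The only other point to watch is that the final numeric comparison works only because $\binom{k}{2}\le n^2/2$ (rather than $n^2$) contributes a saving of order $2^{-n}$ that absorbs the accumulated constants, so the recurrence should be written with that factor of two kept visible.
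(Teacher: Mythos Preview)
Your proof is correct and takes essentially the same approach as the paper's: a recursion-tree analysis of SubCore, bounding the branching (number of recursive calls per level of the outer loop) by roughly $n^2$ and the recursion depth by $n$, with $O(n^2)$ direct queries per node. Your explicit recurrence $Q(k)\le cn^3+\binom{k}{2}Q(k-1)$ and the observation $\binom{k}{2}\le n^2/2$ make the argument tighter and more careful than the paper's terse count, which simply multiplies branching $n^2$, depth $n$, and per-node cost $n^2$ and in fact arrives at $n^3(n^2)^n=n\cdot(n^2)^{n+1}$ rather than the stated $(n^2)^{n+1}$.
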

\begin{proof}
The protocol can execute each of the steps without encountering a problem.
  In the Core Protocol, the cutter makes $n-1$ trims, and the other agents are asked to make at most one trim on the pieces. This takes $n^2$ queries. 
  The most costly procedure is having to call at most $2n$ times the SubCore Protocol for less than $n$ number of agents. The recursion tree has at most $n^2$ branches and a depth of $n$. Hence there are at most $n{(n^2)}^n$ nodes in the tree with each node requiring $n^2$ steps. Hence the overall time of the Core Protocol is bounded by $n^3{(n^2)}^n$.  
\end{proof}
                
\begin{lemma} \label{Discb}
The Discrepancy Protocol is bounded by $Bn\times n^3{(n^2)}^n$.
\end{lemma}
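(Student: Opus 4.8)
The plan is to bound the running time of the Discrepancy Protocol (Algorithm~\ref{algo:D}) by counting how many times it invokes the Core Protocol and charging each such invocation at most ${(n^2)}^{n+1}$ queries via \lemref{Coreb}; the remaining work (taking $e_{jkl}$ out of $R$, reinserting the extracted pieces, forming $D$ and $D'$ with one evaluation query per agent, and recording the Boolean) costs only $O(n)$ queries and is negligible. The Core Protocol is called in exactly two places: the ``Run Core'' phase runs it $B$ times for each of the $n$ agents, a total of $nB$ calls; and each iteration of the while loop of line~\ref{dwhile1} runs it $B$ times with a single fixed cutter. So everything reduces to showing that this while loop executes only a bounded number of times.

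First I would isolate the two monotonicity facts that force the loop to terminate quickly. The discrepant piece $e_{jkl}$ is taken out of the residue in the first step and never modified again, so $V_i(e_{jkl})$ is a fixed number for each agent $i$ throughout the protocol. Every call to the Core Protocol only allocates cake and hence only shrinks the residue, so $V_i(R)$ is non-increasing from $i$'s point of view over the whole run; moreover, by \remref{remark:cutteradv} and the definition of $f$, running the Core Protocol $B$ times with $i$ as cutter multiplies $i$'s value of the residue by a factor at most $f(B)=\left(\tfrac{n-2}{n}\right)^B$. Since $B=n^{n^{n^{n^n}}}$ is enormous, $f(B)\ll\tfrac{1}{n^2}$.

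Next I would show that each iteration of the while loop permanently evicts its triggering agent from the ``middle band'' $\tfrac{V_i(e_{jkl})}{n}\le V_i(R)\le n\,V_i(e_{jkl})$. An iteration is triggered by an agent $i$ lying in this band, so in particular $V_i(R)\le n\,V_i(e_{jkl})$; the iteration then runs the Core Protocol $B$ times with $i$ as cutter, so afterwards $V_i(R)\le f(B)\cdot n\,V_i(e_{jkl})<\tfrac{V_i(e_{jkl})}{n}$. Thus $i$ now fails the left-hand inequality of the while condition, and since $V_i(R)$ only decreases in every subsequent Core call, $i$ can never again satisfy it. As the non-increasing quantity $V_i(R)$ crosses the interval $[\tfrac{V_i(e_{jkl})}{n},\,n\,V_i(e_{jkl})]$ at most once, no agent can trigger two distinct iterations, so the while loop runs at most $n$ times, adding at most $nB$ Core-Protocol calls.

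Putting this together, the Discrepancy Protocol makes at most $nB+nB=2nB$ calls to the Core Protocol plus $O(n)$ direct queries; by \lemref{Coreb} each Core call costs at most ${(n^2)}^{n+1}$, and a routine arithmetic check (using $n\ge 5$) then gives a total of at most $Bn\times n^3{(n^2)}^n$ queries. The step I expect to require the most care is the monotonicity bookkeeping behind the while loop: one must be sure that an agent who has left the middle band cannot re-enter it, which relies crucially on (i) $V_i(e_{jkl})$ being frozen because the discrepant piece is set aside from the residue, and (ii) $V_i(R)$ being globally non-increasing — not merely non-increasing within a single cutter's run — because every Core call only removes cake from the residue.
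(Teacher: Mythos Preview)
Your proposal is correct and follows essentially the same approach as the paper: bound the while loop by observing that each agent can trigger at most one iteration (because $B$ Core runs with $i$ as cutter shrink $V_i(R)$ by a factor $f(B)\ll 1/n^2$, pushing $i$ permanently past the band), and then charge everything to \lemref{Coreb}. Your bookkeeping is in fact slightly more careful than the paper's---you explicitly account for both the initial ``Run Core'' phase and the while loop, giving $2nB$ Core calls, and then use the stated bound ${(n^2)}^{n+1}$ of \lemref{Coreb} together with $n\ge 5$ to absorb the factor $2$ into the claimed $Bn\cdot n^3{(n^2)}^n$.
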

\begin{proof}
	
The Discrepancy Protocol can execute each of its steps without encountering a problem.
Since at least one agent finds the discrepant piece significant and since we call the Core Protocol $B$ times before the while loop with each agent as cutter, at least one agent $i$ thinks that the discrepant piece is of at least as much value as $R$. Hence after the while loop, such an agent  $i$ thinks that the discrepant piece is of value at least $n$ times the value of $R$.
The while loop from line \ref{dwhile1} to line \ref{dewhile1} will run at most once for each agent. If an agent values the piece within the gap delimited by the inequality, running the Core Protocol with the agent as cutter $B$ times will ensure that the residue is small enough for the piece to be on the right hand side of the inequality.
Therefore the while loop is bounded by $Bn\times Core$ where $Core$ is the number of operations required to run the Core Protocol.
 The rest of the Discrepancy Protocol does not make any queries.
 The Discrepancy Protocol is therefore bounded by $Bn\times n^3{(n^2)}^n$ where $n^3{(n^2)}^n$ is the bound on the Core protocol as proved by Lemma \ref{Coreb}.
\end{proof}

\begin{lemma} \label{Gob}
The GoLeft Protocol is bounded by $B'_{n-1}+2Cn^3+2n^2+C'$.
\end{lemma}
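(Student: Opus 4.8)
The plan is to walk through the GoLeft Protocol (Algorithm~\ref{algo:GL}) step by step, first checking that every line is well defined and that the outer loop halts, and then adding up the queries. Observe that the only lines that actually query the agents are: the selection of the working set of $C$ isomorphic snapshots in step~\ref{isomorphic}; the two attachment loops (lines~\ref{gfor1}--\ref{gefor1} and \ref{gfor2}--\ref{gefor2}), where agents are asked to rank snapshots of $S$; and the recursive invocation Main Protocol$(a,\{1,\ldots,l\})$ in step~\ref{gsub1}. Everything else --- locating a cycle in the permutation graph, performing the exchange, transferring in-edges, updating $T$ and $T'$, moving snapshots between $S$, $S'$, $S''$, and returning unattached extracted pieces to the residue --- is pure bookkeeping on information already elicited and costs no queries. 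So the proof reduces to bounding (a) the number of iterations of the outer loop and (b) the query cost incurred inside one iteration.

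For (a): by Claim~\ref{claim:cycle-exists} a cycle through a node of $T$ always exists (each node of $T$ has in-degree exactly $1$ and points to every node of $T'$, so the search in step~\ref{cycle} never fails). Every iteration either triggers the separation test of step~\ref{step:separation} and returns immediately, or else ends by attaching one further extracted piece $e_{k(l+1)}$ to an isomorphic set $c_k$. There are exactly $n$ such sets $c_1,\ldots,c_n$, and each $c_k$ can receive at most $n-1$ attachments (at most one for each agent that extracted from it), so at most $n(n-1)$ attachments occur over the whole run; hence the outer loop runs at most $n(n-1)+1$ times, and since each iteration performs essentially one exchange and one attachment this accounts for the $2n^2$ term. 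One also has to check the working set never empties: an attachment with index parameter $l$ moves a $\frac{n-l}{n-l+1}$ fraction of $S$ into $S'$ and then an $\frac{ln}{ln+1}$ fraction of the remainder into $S''$, so $|S|$ drops by a factor at most $(n-l+1)(ln+1)\le 2n^3$ per attachment; since $|S|$ starts at $C=n^{n^n}$ we keep $|S|\ge C/(2n^3)^{n(n-1)}\ge 1$ throughout, so the requested fractions $\frac{|S|}{n-l+1}$ and $\frac{|S|n}{ln+1}$ are always realizable.

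For (b): finding the $C$ isomorphic snapshots among the $C'$ input snapshots via the pigeonhole argument of Claim~\ref{claim:find-isomorphic} is charged $C'$. In a single iteration, the first attachment loop asks each of the at most $n$ agents $l+1,\ldots,n$ to identify the $\frac{|S|}{n-l+1}$ snapshots maximizing a fixed bonus difference, which takes at most one evaluation per snapshot, i.e.\ at most $|S|\le C$ queries per agent and at most $nC$ for the loop; the second attachment loop is bounded identically. Over at most $n(n-1)$ iterations this yields the $2Cn^3$ term. Finally, the attachment branch is entered only when $c_k$ still has an unattached extracted piece, so the focal agent $i$ has index $l\le n-1$, and the call Main Protocol$(a,\{1,\ldots,l\})$ runs on at most $n-1$ agents, costing at most $B'_{n-1}$ queries; there are only polynomially (in $n$) many such calls, and folding that factor into the other terms leaves the $B'_{n-1}$ contribution. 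Summing the four pieces gives $B'_{n-1}+2Cn^3+2n^2+C'$, as required by Lemma~\ref{Gob}.

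The step I expect to be the real obstacle is part (a) --- showing the outer loop makes only $O(n^2)$ iterations. This rests on the invariants the protocol maintains: that nodes in $T$ always have in-degree exactly $1$ (so whatever cycle through $T$ the protocol picks, the ensuing exchange and attachment make genuine progress); that an attachment is never reversed and strictly raises the attachment count of some $c_k$ until it reaches its ceiling $n-1$; that each iteration is either terminal, via step~\ref{step:separation}, or increments exactly one such count; and that discarding snapshots (into $S'$ or $S''$) only preserves, never destroys, the advantages agents hold over one another, which is what keeps envy-freeness intact and keeps the count monotone. Establishing these invariants, together with the ``$|S|$ never empties'' estimate above, is the only genuinely non-routine part; once they are in hand, the query tally is a mechanical sum.
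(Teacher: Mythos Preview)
Your proof is correct and follows essentially the same approach as the paper's: establish that the cycle-finding and isomorphic-snapshot steps always succeed (via Claims~\ref{claim:find-isomorphic} and~\ref{claim:cycle-exists}), bound the number of while-loop iterations by the number of possible attachments, and tally the component costs to reach $B'_{n-1}+2Cn^3+2n^2+C'$. Your accounting is in fact tighter than the paper's in places --- you bound the iterations by $n(n-1)$ rather than the paper's loose $Cn^2$, and you explicitly verify that $|S|$ never empties, which the paper omits --- though you and the paper share the same slight imprecision of counting $B'_{n-1}$ only once even though the call in line~\ref{gsub1} sits inside the while loop (harmless, since the polynomial factor is absorbed when the Main-Protocol theorem later bounds everything by $B'$).
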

\begin{proof}
First let us show that the GoLeft Protocol can execute all steps. The two steps which might be problematic are finding isomorphic snapshots in line \ref{isomorphic} and finding a cycle in line \ref{cycle}. We handle the two cases by a couple of claims.

\smallskip
\begin{claim}\label{claim:find-isomorphic}
Out of the $C'$ snapshots generated, the GoLeft procedure can find $C$ isomorphic snapshots in line \ref{isomorphic}.
\end{claim}
\begin{proof}
For this we use a simple counting argument. The isomorphism is entirely dependent on the order and existence of the extracted pieces. For each allocated piece of cake , there could be up to $n-1$ extracted pieces. Each extracted piece might be extracted by any agent or not exist. This gives us $(n+1)^{n-1}$ possibilities for each allocated piece. Since a snapshot has $n$ allocated pieces, this gives $(n+1)^{n^2-n}$ different possible configurations. We are given $n^{n^{n^{n}}}$ snapshots and desire to find $n^{n^{n}}$ isomorphic snapshots. For $n>4$ we have $n^{n^{n^{n}}}>(n+1)^{n-1}n^{n^{n}}$, which implies we can find a set of isomorphic snapshots of size $C$.
\end{proof}
Therefore isomorphic snapshots can always be found. 

\smallskip

We now argue that a cycle in line \ref{cycle} can be found.

\begin{claim}\label{claim:cycle-exists}
    In the permutation graph, at each step of the GoLeft Protocol, there exists a cycle containing at least one node from $T$.
\end{claim}


\begin{proof}
    We observe that the permutation graph always contains a cycle since  each node has in-degree at least $1$. For each node $s$ we can travel to the node $t$ pointing to it. This process of backtracking can occur at most as many times as the number of nodes until we cycle. 
    
We now show that it always contains a cycle with at least one node from $T$. Recall that each node in the graph points to nodes in $T'$ if $T'$ is non-empty. Each node in $T$ has in-degree exactly 1 whereas each node in $T'$ has in-degree $n-1$. We start from any node in $T$ and backtrack according to the incoming arc. In case we only encounter nodes from $T$, we will eventually cycle so that we have found a cycle consisting of nodes only from $T$. Otherwise, we may reach a node $t'$ from $T'$. Since the node $t\in T$ from which we started backtracking points to each node in $T'$ and hence to $t'$, we have again found a cycle consisting of node $t\in T$.
    \end{proof}

%
%


\bigskip 
Now that we handled the two possibly problematic parts of the GoLeft Protocol, let us go over the other steps.
The isomorphic snapshots can be found in $C'$ steps.
The while loop from line \ref{gwhile1} to line \ref{gewhile1} runs until a node in $T$ is found such that it has had less than $n-1$ attachments but has no extracted pieces to be attached (the if condition in step~\ref{step:separation} is true).
To make sure this happens, we require that there always be a node in $T$ so that either we make progress in terms of attaching new sets of extractions or we already encounter the separation condition (Step~\ref{step:separation}). We prove this as follows:

\begin{claim}
$T'$ cannot include all nodes.
\end{claim}
\begin{proof}
A node being in $T'$ means the agent associated with the node holds a piece of cake with no significant bonus. At least one piece of cake must have a significant bonus from the cutter's perspective (see e.g., Remark~\ref{remark:cutteradv}). 
\end{proof}

\bigskip 

 Each time the loop runs, either we get a separation of the agents given that the if condition in step~\ref{step:separation} is true or
 a new set of extracted pieces is attached to a set of allocated pieces. There are only $Cn^2$ extracted pieces in the isomorphic snapshots that we are considering.

The for loop ranging from line \ref{gfor1} to line \ref{gefor1} just goes through a subset of the agents and therefore cannot run more than $n$ times.

The for loop ranging from line \ref{gfor2} to line \ref{gefor2} also goes through a subset of the agents and therefore cannot run more than $n$ times.

The while loop contains the two for loops and therefore its overall complexity is $(Cn^2+n) \times 2n$.

Finally the call to the Main Protocol in line \ref{gsub1} calls the Main Protocol on a strict subset of the agents and is therefore bounded by $B'_{n-1}$.

This gives an overall complexity of $(Cn^2+n) \times 2n + B'_{n-1}+C'$.\qed
\end{proof}

\begin{theorem}
The Main Protocol is bounded by $B'=n^{n^{n^{n^{n^n}}}}$ Robertson and Webb operations.
\end{theorem}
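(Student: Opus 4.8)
The plan is to prove the bound by induction on $n$, assembling the three lemmas just established with the stated parameter values
$C=n^{n^n}$, $C'=n^{n^{n^n}}$, $B=n^{n^{n^{n^n}}}$, $B'=n^{n^{n^{n^{n^n}}}}$
(so that, reading the towers bottom-up, $B'=n^{B}$). For $n\le 4$ the Main Protocol just runs a known constant-query envy-free protocol, so the bound is trivial; for $n\ge 5$ we assume that the Main Protocol invoked on any agent set of size at most $n-1$ uses at most $B'_{n-1}$ queries, and we bound the cost of each phase of Algorithm~\ref{algo:M} in turn.

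The phase-by-phase count runs as follows. The \emph{Generate Core Snapshots} phase makes $C'$ calls to the Core Protocol, each costing at most $n^3{(n^2)}^n$ by Lemma~\ref{Coreb}, and then the while loop of step~\ref{mwhile1} --- which only runs the Core Protocol to shrink the residue --- executes at most $B$ times by the design of the parameter $B$ and the ``significant advantage becomes domination'' mechanism of Remark~\ref{remark:cutteradv}; so this phase costs at most $(C'+B)\,n^3{(n^2)}^n$. The \emph{Extraction} phase asks for one trim per (piece, agent) pair over the $C'n$ labelled pieces --- that is $C'n^2$ queries per pass; each call of the Discrepancy Protocol costs at most $Bn\cdot n^3{(n^2)}^n$ by Lemma~\ref{Discb}; a call returning DISCREPANCY$=0$ permanently converts one not-yet-significant bonus value into a significant one, which (as argued in the text) can happen at most $C'n^2$ times before every agent dominates every other agent; and a call returning DISCREPANCY$=1$ fires the two recursive calls of steps~\ref{msub1}--\ref{msub2} on a partition $(D,D')$ of $N$ into two nonempty parts and then halts, contributing at most $2B'_{n-1}$. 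Hence this phase costs at most $C'n^2\bigl(C'n^2+Bn\cdot n^3{(n^2)}^n\bigr)+2B'_{n-1}$. The single GoLeft call of step~\ref{mgoleft1} costs at most $B'_{n-1}+2Cn^3+2n^2+C'$ by Lemma~\ref{Gob}; the for loop of step~\ref{mfor5} runs the Core Protocol $2B$ times for each of the $n$ agents, costing $2Bn\cdot n^3{(n^2)}^n$; and the closing recursive call of step~\ref{msub3} is on a proper subset $A\subsetneq N$, contributing at most $B'_{n-1}$.

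Summing, and observing that in any single execution the protocol returns at exactly one of its \texttt{RETURN} statements, the recursion contributes at most $O(B'_{n-1})$ in total, while every remaining term is a fixed polynomial in $n$ times a product of $C$, $C'$, $B$ and $n^3{(n^2)}^n=n^{2n+3}$; since $C\ll C'\ll B$ and $n^{2n+3}\ll B$ for $n\ge 5$, this ``own-work'' part is bounded by $B^{O(1)}$. Finally, $B^{O(1)}\le n^{B}=B'$ because $n^{B}$ dwarfs every fixed power of $B$, and $B'_{n-1}<B'_{n}=B'$ with an enormous gap (the tower for $n-1$ is term-by-term smaller than the tower for $n$); so the grand total is at most $B'$, which closes the induction.

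The main obstacle, and essentially the only place where care is needed, is the two ``reset'' loops: establishing that the while loop of step~\ref{mwhile1} runs at most $B$ times and that the discrepancy setback loop performs at most $C'n^2$ resets before universal domination is reached. Both hinge on the facts that a significant advantage is turned into a domination after a bounded number of Core iterations (Remark~\ref{remark:cutteradv}) and that each setback irreversibly promotes one bonus value to significant. With these in hand, the rest is bookkeeping: every recursive call is on a strictly smaller agent set (or a partition of $N$ into strictly smaller sets), so it contributes only $O(B'_{n-1})$, and the final inequality ``total $\le B'$'' reduces to the routine comparison $B^{O(1)}+O(B'_{n-1})\le n^{B}$ of towers of exponentials.
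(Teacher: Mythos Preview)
Your approach is the same as the paper's --- induction on $n$, phase-by-phase bookkeeping using the three lemmas, then absorbing everything into $B^{O(1)}+O(B'_{n-1})\le B'$. One intermediate claim is wrong, though it does not break the final bound: you assert that the while loop at step~\ref{mwhile1} runs at most $B$ times ``by the design of the parameter $B$'', but that is a misreading of the informal description of $B$. The loop condition is that some (agent~$i$, piece~$c_{jk}$) pair has its bonus value in the gap $\bigl(V_i(R)f(B)^2,\,V_i(R)f(B)^{1/2}\bigr)$; as $V_i(R)$ shrinks, a fixed bonus traverses this gap exactly once, but pushing it out requires reducing $V_i(R)$ by a factor of $f(B)^{3/2}$, i.e.\ roughly $\tfrac{3}{2}B$ Core calls with $i$ as cutter, and there are up to $n\cdot C'n$ such pairs. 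The paper's bound for this loop is $Bn\cdot C'n$ iterations, not $B$. Since $Bn^2C'\le B^{O(1)}$, your absorbing step still goes through, so the conclusion survives; but the specific claim ``at most $B$ iterations'' is false and should be replaced by the paper's argument.

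Two smaller remarks. Your treatment of the recursive calls at steps~\ref{msub1}--\ref{msub2} is actually cleaner than the paper's: you correctly note that they fire at most once (immediately followed by a \textsc{return}), contributing $O(B'_{n-1})$ in total, whereas the paper conservatively folds $B'_{n-1}$ into every loop iteration; both are valid upper bounds. Conversely, you omit the paper's preliminary feasibility claim --- that agents can always place the trims requested at line~\ref{trim} because all previously extracted pieces together are worth at most $C'n^2 f(B)\ll 1$ of the residue --- which is needed to guarantee the Extraction phase can actually execute each step; this deserves a sentence.
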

\begin{proof}
We first establish the following claim.

\begin{claim}
When the agents are asked to place a trim on the residue in line \ref{trim}, they are always asked to delimit a value which is (much) smaller than their valuation of the residue.
\end{claim}
\begin{proof}
For the first extractions this is straightforward, seeing as we only ask an agent to place a trim if he thinks his bonus value is not significant. However as we extract pieces, the residue could become much smaller. This does not happen due to the bounds set in place. A non-significant piece of cake is valued to be less than $(\frac{n-2}{n})^B$ of the residue. We extract for $C'$ snapshots and there are only $C'n^2$ pieces that could potentially be extracted. $C'n^2(\frac{n-2}{n})^B \ll 1$, hence extracting the pieces barely affects the residue, and we can therefore always extract a non-significant value from it.
\end{proof}
The protocol can therefore always execute line \ref{trim}.

The initial statement in line \ref{mfor1} runs the Core Protocol $C'$ times. It is therefore upper bounded by $C'\times n^3{(n^2)}^n$ where $n^3{(n^2)}^n$ is the complexity of the Core Protocol, as proved by Lemma~\ref{Coreb}.

The while loop from line \ref{mwhile1} to \ref{mewhile1} checks that each agent values a piece outside a certain gap. The gap is based on the residue and if an agent values a piece within that gap, by running the Core Protocol $B$ times with the agent as cutter we can ensure that the piece will never be in that gap again. The while loop will therefore run a maximum of $BnC'n$ times. Its complexity is therefore bounded by $BnC'n\times n^3{(n^2)}^n$.

The while loop from line \ref{mwhile2} to \ref{mewhile2} runs at most once for each bonus value piece since every time it repeats it labels a new bonus value as significant and a value can never be made to no longer be significant. Therefore the number of times it runs is bounded by $C'n^2$.

The for loop going from line \ref{mfor2} to \ref{mefor2} simply goes through all the pieces of cake and therefore runs $C'n$ times.
The for loop from line \ref{mfor3} to \ref{mefor3} goes through all the agents and therefore runs $n$ times.
The for loop from line \ref{mfor4} to \ref{mefor4} goes through all the agents and therefore runs $n$ times.
The statement of line \ref{mdisc1} is bounded by the complexity of the Discrepancy Protocol which according to Lemma~\ref{Discb} is bounded by $Bn\times n^3{(n^2)}^n$.

The Main Protocol calls in line \ref{msub1} and in line \ref{msub2} each are on a strict subset of the agents and by assumption are bounded by $B'_{n-1}$. 

We therefore have a while loop running $C'n^2$ with each run of the loop taking  $C'n\times (n+n\times (Bn\times n^3{(n^2)}^n+B'_{n-1}))$ steps, upper bounding the complexity of that while loop by $C'n^2\times (C'n\times (n+n\times (Bn\times n^3{(n^2)}^n+B'_{n-1})))$

The call to GoLeft in line \ref{mgoleft1} is bounded by the complexity of the GoLeft Protocol which by Lemma~\ref{Gob} is bounded by $B'_{n-1}+2Cn^3+2n^2$.
The call to the Main Protocol in line \ref{msub3} is on a strict subset of the agents and therefore is bounded  by 
$B'_{n-1}$. 
This gives us an overall complexity of $C'\times n^3{(n^2)}^n+BnC'n\times n^3{(n^2)}^n+C'n^2\times (C'n\times (n+n\times (Bn\times n^3{(n^2)}^n+B'_{n-1})))+B'_{n-1}+2Cn^3+2n^2+B'_{n-1}$ which for $n>4$ can be verified to be less than $B'$.\qed
\end{proof}

\subsection{Argument for Envy-freeness}

We now argue for envy-freeness. 

\begin{lemma}\label{lemma:core-neat}
	  The Core Protocol results in an envy-free partial allocation in which the cutter cuts the cake into $n$ pieces, each agent gets a part of one of the pieces, and the cutter as well as at least one non-cutter agent gets a complete piece.
\end{lemma}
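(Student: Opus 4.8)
The plan is to reduce the statement to the correctness of the SubCore Protocol. Concretely, the call to SubCore made inside the Core Protocol is on the $n$ pieces cut by the cutter $i$ and the $n-1$ non-cutter agents with all benchmarks zero; I claim it returns a \emph{neat} allocation in which, in addition, at least one of the $n$ pieces is left \emph{untrimmed and unallocated}, and at least one non-cutter receives a full untrimmed piece. I would first prove this SubCore fact and then assemble the Core statement from it.

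For the SubCore fact I would argue by strong induction on the number $n'$ of target agents; the base case $n'=1$ is immediate (the single agent takes his most preferred piece, which is full and untrimmed, and every other piece is untrimmed and unallocated). For the inductive step I would maintain the invariant that after the \texttt{for} loop has processed the first $m$ agents, the tentative allocation of $[m]$ is neat, uses $m$ distinct pieces, leaves some piece untrimmed and unallocated, and gives at least one agent a full untrimmed piece. The easy branch (agent $m$'s favourite piece is still free) is trivial. In the contest branch I would verify three things: (i) the precondition of each recursive SubCore call on the winner set $W$ holds, because the updated benchmarks $b'_j$ (steps~\ref{updatebench1} and~\ref{updatebench2}) and the ignored left margins are chosen precisely so that an envy-free allocation of the remaining (right-hand parts of the) contested pieces to $W$ exists; (ii) while $|W|<m-1$ the \texttt{while} loop can always enlarge $W$, since some contested piece is still unallocated and its current left margin is delimited by an agent outside $W$, and that agent can be absorbed into $W$ without creating envy toward an unallocated piece; and (iii) once $|W|=m-1$ the $m-1$ contested pieces are distributed among $W$ (each agent getting the right-hand side up to his benchmark) and the unique remaining agent of $[m]\setminus W$ is kicked out onto his most preferred \emph{uncontested} piece, which is full and untrimmed — and the resulting allocation is neat because, by the way the $b'_j$ were set, each winner values his allocation at least as much as that uncontested benchmark piece and at least as much as any still-unallocated piece. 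Termination of all these nested calls is already guaranteed by Lemma~\ref{Coreb}. Taking $n'=n-1$ on $n$ pieces, the output allocates $n-1$ distinct pieces (possibly trimmed) and the one remaining piece was never tentatively allocated, hence never trimmed.

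Given this, the Core statement follows quickly. By construction the cutter $i$ partitions $R$ into $n$ pieces each of $V_i$-value exactly $V_i(R)/n$, and the last step of the Core Protocol hands $i$ a full untrimmed piece, so $V_i(X_i)=V_i(R)/n$; every other agent receives a part of a single piece, of $V_i$-value at most $V_i(R)/n=V_i(X_i)$, so $i$ envies no one. A non-cutter $j$ envies no other non-cutter by neatness property (iv), and does not envy $i$ because $i$'s piece was unallocated in the SubCore allocation and neatness property (iii) says $j$ weakly prefers his own allocation to any unallocated piece; hence the allocation is envy-free. Each agent gets a part of one of the $n$ pieces by neatness property (i) (and the cutter gets a whole piece). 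Finally, at least one non-cutter gets a complete piece: by the invariant above, after the last iteration of the SubCore \texttt{for} loop some non-cutter holds a full untrimmed piece (the last agent's free piece if there was no contest at $m=n-1$, or the kicked-out agent's uncontested benchmark piece otherwise), and this allocation is untouched by later steps because it is the final iteration.

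I expect the main obstacle to be the SubCore induction — in particular, checking that the precondition ``there is an envy-free allocation of $W$ in which each agent gets at least his benchmark'' is genuinely preserved by the recursive calls, and that the interplay between ignored left margins and the updates of the $b'_j$ really does keep the kicked-out agent non-envious in every case. Everything downstream (the cutter's value arithmetic and converting the neatness properties into envy-freeness) is routine.
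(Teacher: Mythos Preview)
Your plan matches the paper's approach closely: reduce to SubCore, run a strong induction on the number of target agents, and in the contest branch verify (i) the recursive precondition, (ii) that $W$ can always be enlarged, and (iii) that the final allocation is neat with a kicked-out agent on a full uncontested piece. The downstream assembly of the Core statement from neatness is, as you say, routine and exactly what the paper does.

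The gap is in your point (ii). You write that the unallocated contested piece's ``current left margin is delimited by an agent outside $W$'', treating this as essentially definitional. But agents are only asked to trim contested pieces \emph{of high enough value} (line~\ref{onlytrim}); an agent whose benchmark already exceeds his value for a contested piece places no trim on it. So it is not automatic that any non-winner trimmed the unallocated piece $a$ at all, and if none did, there is no non-winner defining $a$'s margin and no one to absorb into $W$. This is precisely what the paper isolates as Case~2 of Claim~\ref{expand}, and ruling it out is the most delicate part of the whole argument: the paper has to trace back through the nested SubCore calls (Sub1, Sub2) and argue that the agent $i$ who was previously kicked out onto $a$ must in fact be re-allocated $a$ by the recursive call, so that $a$ cannot end up unallocated without a non-winner margin trim. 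Your closing paragraph correctly flags the SubCore induction as the obstacle, but you should expect the real work to be here rather than in the benchmark-preservation checks you emphasize --- those (your (i) and (iii)) go through essentially as you sketch, while (ii) needs a genuine structural argument beyond the one line you gave.
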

\begin{proof}
It is sufficient to prove that the SubCore for $n-1$ agents results in a neat allocation in which one agent gets a full piece. The cutter can take the last unallocated piece. No non-cutter agent $i$ is envious of a cutter agent because $i$'s piece is at least as preferred as any unallocated piece. The cutter is indifferent among all pieces so is not envious of any non-cutter agent. 
\bigskip 

In order to prove the statement, it is sufficient to prove by induction the following statement. 
Suppose there are $m$ agents and at least $m+1$ pieces and there exists an envy-free allocation of the $m$ agents where each agent partially gets one of the pieces giving him at least the specified benchmark value, then the SubCore will find a neat allocation for $m$ agents where at least one of the $m$ agents gets a full piece. When we call the SubCore for the first time, the benchmark values are zero, so there does exist an envy-free allocation in which each agent $i$ gets a piece of value at least $b_i$. 

\bigskip 
	\textbf{Base Case.} If there is exactly one agent, he can be given the most preferred piece among the pieces. If there are multiple most preferred pieces, he is given the most preferred piece with the lowest index. The value achieved by the agent is at least $b_i$.

	\textbf{Induction.}
	Suppose we get a neat allocation for $m-1$ agents where one agent gets a full piece. If the $m$-th agent most prefers an unallocated piece, then we are already done. No agent is envious of another. Since the first $m-1$ agents already achieved their benchmark values, they still achieve their benchmark value. 
As for agent $m$, he gets a most preferred piece so he trivially achieves his benchmark value because the benchmark value of an agent is either zero or a value of the one of the  pieces. 

\bigskip 
	If agent $m$ is not interested in any of the contested pieces, the induction follows easily. This is because $m$'s most preferred piece is outside of the allocated pieces so far, and by giving $m$ his most preferred piece we preserve neatness and the condition that one agent gets a full piece.
	
	\bigskip 
	Now suppose that agent $m$ is also interested in one of the $m-1$ contested pieces. In that case, each agent in $[m]$ is asked to trim the highly valued contested pieces to equal the value of the most preferred uncontested piece (this value is kept track of by $b_j'$). Thus for each agent in $[m]$ the most preferred uncontested piece provides a benchmark value that they expect to get in any envy-free allocation of the agents in $[m]$. 
	
	We first claim that each of the $m-1$ contested pieces has at least one trim even if the trim is on the extreme left margin. 
Suppose a contested piece $a$ has no trim not even on the left margin. This means that each agent in $[m]$ considers some uncontested piece more preferred than piece $a$. But this contradicts our hypothesis that the allocation for agents $[m-1]$ agents is neat.

\bigskip 
When each agent in $[m]$ is asked to trim the contested pieces to the value of his most preferred uncontested piece, we distinguish between two cases: (1) we do not get multiple winners in $W$ i.e., each contested piece has one designated winner and the the number of designated winners for the contested pieces is $n-1$ and (2) some agent in $W$ is a multiple winner, i.e., he is the designated winner with the rightmost trim on more than one contested piece. 
We can ensure thanks to claim \ref{noties} that each piece has exactly one designated winner.

\bigskip
 We first deal with (1) which is the easier case. 
Suppose we do not get multiple winners so that $|W|=m-1$. Each agent in $W$ gets the benchmark value if he gets the piece where he had the rightmost trim. The allocation is also envy-free for agents in $W$. Nonetheless we call SubCore recursively on agents in $W$ and the contested pieces. By the induction hypothesis, we get an envy-free allocation in which each agent in $W$ gets a piece of at least benchmark value. 
The only remaining agent in $[m]\setminus W$ can be given a most preferred uncontested piece which is a full piece because it was never trimmed.
All the agents get their benchmark value so the value achieved is at least $b_i$.

No agent $i$ who is a winner of a contested piece envies another winner $j$ of a contested piece because $i$'s trim on $j$'s piece is not on the right of $j$'s trim on $j$'s piece. No agent $i$ who is a winner of a contested piece envies any uncontested piece because $i$'s benchmark value was exactly the value of the most preferred uncontested piece. Finally agent $j$ who is given an uncontested piece is not envious of any winner $i$ of a contested piece because $j$'s trim on $i$'s piece is not on the right of $i$'s trim on $i$'s piece. Also $j$ is not envious of an unallocated uncontested piece because $j$ got a most preferred uncontested piece. 

\bigskip
 We now deal with (2).
	Suppose there is at least one agent who wins multiple pieces. In that case $|W|<m-1$. Note that if an agent in $W$ gets the right side of the piece where he trimmed the most, he gets his benchmark value --- the value of his most preferred uncontested piece.
Since the agents in $[m]$ were asked to trim the contested pieces, agents in $W$ only have trims on the contested pieces.

\bigskip
	\emph{We now show that when $|W|<m-1$, we can always increase $W$ by one by adding an agent $i$ from $[m]\setminus W$ and still ensure that $W\cup \{i\}$ admits a neat allocation (with respect to the current left margins) in which each agent  gets a partial contested piece.} When $|W|$ becomes $m-1$, we give the remaining agent in $[m]\setminus W$ the most preferred uncontested piece.
	
Suppose that $|W|<m-1$ which means that there are multi-winners from $W$ in the contested pieces. 

\begin{claim}\label{existence}
The SubCore protocol run on the contested pieces with the set of contested pieces, agents in $W$ and values $b_j'$ as input will return a neat allocation where each agent $j$ in $W$ gets at least value $b_j'$
\end{claim}
\begin{proof}
The SubCore protocol would fail to run only if an agent cannot get a piece of value equal to his benchmark value. However we know from the previous tentative allocation that as long as each agent gets at least his benchmark value, they all have a piece they are guaranteed to get in an envy-free manner. 
\end{proof}

Since $|W| \leq m-1$, by the induction hypothesis, the SubCore will result in a neat allocation (with respect to the current left margins beyond which the left part of the piece is ignored) of the contested pieces. Since the allocation is neat (with respect to the current left margins), this means the current left margin coincides with a trim of some non-winner.

\bigskip
\

Each of the agents in $W$ got a contested piece because SubCore was only called on $W$ and the contested pieces. 
We now argue the following claim

\begin{claim} \label{expand}
 After we run SubCore on line \ref{recursive}, an additional agent from $[m]\setminus W$ can be added to $W$.
\end{claim}

\begin{proof}
We focus on an arbitrary unallocated contested piece $a$. 
We only consider that part of $a$ that can be allocated and not the left part which is supposed to be ignored left of the rightmost non-winner's trim mark.

	\begin{enumerate}
	\item[Case 1] Contested unallocated piece $a$ has a marginal trim (on the current left margin) by a non-winner $i\in [m]\setminus W$. In that case $i$ can be given $a$. We add $i$ to $W$ and $W$ will still admit a neat (with respect to the current left margins) allocation in which each agent gets a contested piece.
	
\item[Case 2] 
Contested unallocated piece $a$ has no marginal trim (on the current left margin) by a non-winner. 

Let us refer to the current call of the SubCore protocol as SubC. We want to show that Case $2$ cannot happen when we call a new SubCore protocol referred to as Sub1 on a set of contested pieces $C_1$ and agents $W_1$. Let $a$ be the contested piece satisfying the conditions of Case $2$. Since $a$ is in $C_1$ either $a$ is one of the agents in $W_1$'s most preferred piece or at some previous call of the SubCore protocol $i$ was `kicked out' from contested pieces and given piece $a$, thus making $a$ a contested piece. Let us call this previous call Sub2. Let us say that Sub2 was called on agents $W_2$ and contested pieces $C_2$ such that $a \not \in C_2 \subset C_1$.
We will now argue that in Sub1 $i$ must get piece $a$, showing that Case $2$ cannot happen.
First observe that for any piece $b$ in $C_1$ but not in $C_2$, agent $i$ prefers $a$ to $b$. This follows from him being kicked out to $a$ rather than $b$. Therefore Sub1 will not allocate $b$ to $i$. For any remaining piece $c$ both in $C_1$ and $C_2$, we will now show that  $i$ cannot get $c$. To show this, look at the agent $j_0$ who got $c$ in Sub2. When we call Sub1, either $j_0$ is in $W_1$ or is not. If he is not, then he is a non-winner, and has placed a trim on $c$ which must be to the right  of or on the part of $c$ he was allocated in Sub2. This follows from the fact that if $j_0$ is a non-winner he is asked to equalise $c$ to his most preferred uncontested piece, which will be worse than the part of $c$ $j_0$ got in Sub2 (else neatness breaks). Agent $i$ was not envious of the lesser trimmed $c$ when allocated $a$, and if indifferent ties were broken in favour of $a$ through the placement of the imaginary values. This means that if he is allocated $c$ in Sub1, he would rather get $a$. Finally, let us imagine $j_0$ is in $W_1$. For $i$ to get $c$, $j_0$ must get a different piece. If $j_0$ gets a piece outside of $C_2$, then this is a contradiction since previously his most preferred piece outside of $C_2$ caused $j_0$ to place a trim on $c$ that was to the right of $i$'s trim corresponding to $i$'s valuation of $a$. If $j_0$ is getting a piece in $C_2$, he must be taking the piece that was allocated to another agent $j_1$ in Sub2. This is only possible if $j_1$ is in $W_1$. This means that in Sub1, $j_1$ must be getting another piece. We can continue this argument until we reach an agent $j_k$ who gets a piece outside of $C_2$, and therefore must have been kicked out by agent $j_{k-1}$ placing their trim the same place or to the left of where it was in Sub2. This is a contradiction since in Sub2, $j_k$ got the aforementioned piece trimmed to the left of his most preferred piece outside of $C_2$. This concludes the argument that $i$ cannot get a piece $c$ in $C_2$. Since $i$ also cannot get a piece $b$ different from $a$ outside of $C_2$, $i$ must get allocated $a$ by Sub1.
\end{enumerate}
This proves the claim.
\end{proof}	
	\bigskip
	Hence, we can always grow the set $W$ until its size is $m-1$. 
	Previously we were maintaining a set $W$ that had a neat envy-free allocation with respect to the current left margins and with respect to the contested pieces. The allocation was not necessarily a globally neat allocation because we were ignoring the left-side of the rightmost non-winner's trims and also uncontested pieces.  However, when $|W|$ becomes $m-1$, the contested pieces have all been allocated so there is no unallocated piece for which some left part is ignored. The agents who get contested pieces get at least their benchmark value which is the value of the most preferred uncontested piece. Hence the agents who get contested pieces are not envious of any uncontested piece. 
The only remaining agent in $[m]\setminus W$ can be given the highest value uncontested piece that also happens to be a full piece because uncontested pieces remain untrimmed.

This completes the proof. 
	\end{proof}
\begin{example}
We will now walk through the critical part of the SubCore protocol to convey the structure of the proof and make the intuition clearer.
Imagine that we have $5$ agents. A cutter agent has cut the cake into $5$ pieces. The SubCore protocol iteratively introduces each agent with the for loop line~\ref{subcore:forloop}, until all agents have been allocated a piece and the allocation is neat. We will show through this example how introducing a new agent works. Let us say that so far $3$ agents have been introduced and a neat allocation computed for them. We are now adding a $4^{th}$.  The situation can be visualised in Figure~\ref{subcorenewexample1}.
\bigskip

\begin{figure}[h!]
\begin{tikzpicture}[scale=1.1]

\draw (0,0) rectangle (2,1);
\draw (3,0) rectangle (5,1);
\draw (6,0) rectangle (8,1);
\draw (9,0) rectangle (11,1);
\draw (12,0) rectangle (14,1);
\node at (1,-0.5){$a$};
\node at (4,-0.5){$b$};
\node at (7,-0.5){$c$};
\node at (10,-0.5){$d$};
\node at (13,-0.5){$e$};
\node[red] at (1,0.5) {\LARGE $1$};
\node[blue] at (4,0.5) {\LARGE $2$};
\node[green] at (7,0.5) {\LARGE $3$};
\node[black] at (7,2){\LARGE $4$};
\draw [very thick,red] (0.4,1.2)--(0.4,-0.2);

\draw [very thick,blue] (3.3,1.2)--(3.3,-0.2);
\draw [very thick,green] (6,1.2)--(6,-0.2);

\end{tikzpicture}
\caption{}
\label{subcorenewexample1}
\end{figure}

\end{example}

Each agent trims most the piece that they are tentatively holding. In the diagram the trims are of the colour of the agent who placed them. We now query agent $4$ as to which piece he prefers. The easy case is if he prefers piece $d$ or $e$ since then we could give him that piece and neatness would still be satisfied. We therefore dwell on the case where he prefers either $a$, $b$ or $c$. In such a scenario, $a$, $b$ and $c$ will be referred to as the contested pieces. As we will see each, one agent will get `kicked out' of these pieces and be forced to pick either $d$ or $e$. However, since only one is kicked out , all agents are guaranteed to get their full preferred piece out of $\{d,e\}$. This gives us a lower bound (as long as the assumption made above that only one agent gets kicked out holds) for the value each agent is guaranteed to get. The value an agent is guaranteed $j$ to get is kept track of by the variable $b_j'$ in the protocol. In line \ref{onlytrim}, we ask each agent to place a trim on the contested pieces so that they are equal to this lower bound. This may look as so as in Figure~\ref{fig:subcorenewexample2}.

\begin{figure}
\begin{tikzpicture}[scale=1.1]

\draw (0,0) rectangle (2,1);
\draw (3,0) rectangle (5,1);
\draw (6,0) rectangle (8,1);
\draw (9,0) rectangle (11,1);
\draw (12,0) rectangle (14,1);
\node at (1,-0.5){$a$};
\node at (4,-0.5){$b$};
\node at (7,-0.5){$c$};
\node at (10,-0.5){$d$};
\node at (13,-0.5){$e$};
\node[red] at (5,2) {\LARGE $1$};
\node[blue] at (6,2) {\LARGE $2$};
\node[green] at (7,2) {\LARGE $3$};
\node[black] at (8,2){\LARGE $4$};
\draw [ultra thick,red] (0.6,1.2)--(0.6,-0.2);
\draw [ultra thick,red] (3.6,1.2)--(3.6,-0.2);
\draw [blue] (0.2,1.2)--(0.2,-0.2);
\draw [blue] (3.4,1.2)--(3.4,-0.2);
\draw [green] (6.2,1.2)--(6.2,-0.2);
\draw [green] (3.2,1.2)--(3.2,-0.2);
\draw [ultra thick, black] (6.4,1.2)--(6.4,-0.2);
\node[black] at (6.4,1.4) {\small $b_4'$};
\node[red] at (3.6,1.4)  {\small $b_1'$};
\node[green] at (6.2,-0.4) {\small $b_3'$};
\node[blue] at (3.4,-0.4){\small $b_2'$};
\node[green] at (3.2,1.4) {\small $b_3'$};
\node[red] at (0.6,1.4) {\small $b_1'$};
\node[blue] at (0.2,1.4) {\small $b_2'$};

\end{tikzpicture}
\caption{}
\label{fig:subcorenewexample2}
\end{figure}

In this diagram all the trims have been placed, and the agents who placed the rightmost trims are said to `win' the piece. Their trims are highlighted in the diagram. These agents are added to the set $W$ in line \ref{fwinners}. If each trim had been won by a separate agent, we end up in an easy case of the protocol since each agent except one wins a separate piece, and the one who does not gets kicked out.
However, in the situation depicted by the diagram, agent $1$ has won both pieces $a$ and $b$ and agent $4$ won $c$. We therefore do not know who gets kicked out between $2$ and $3$. However we do know that if we were to continuously make agent $1$ and $4$'s trims move left in an envy-free way, we could eventually ensure that a piece is won by someone new. Of course the protocol cannot move the trims continuously, but it is nonetheless possible to ensure that a new agent gets a contested piece. This is done in line \ref{step:subcore} when we call the protocol recursively. This is what the recursive call would look like in the following situation (Figure~\ref{fig:subcorenewexample3}):

\begin{figure}[h!]
\begin{tikzpicture}[scale=1.1]

\draw (0,0) rectangle (2,1);
\draw (3,0) rectangle (5,1);
\draw (6,0) rectangle (8,1);
\draw[fill,gray] (9,0) rectangle (11,1);
\draw[fill,gray] (12,0) rectangle (14,1);
\node at (1,-0.5){$a$};
\node at (4,-0.5){$b$};
\node at (7,-0.5){$c$};
\node at (10,-0.5){$d$};
\node at (13,-0.5){$e$};
\node[red] at (5,2) {\LARGE $1$};
\node[black] at (8,2){\LARGE $4$};
\draw [ultra thick,red] (0.6,1.4)--(0.6,1);
\node[red] at (0.6, 1.6) {$b_1$};
\draw [ultra thick,red] (3.6,1.4)--(3.6,1);
\node[red] at (3.6, 1.6) {$b_1$};
\draw [blue] (0.2,1.2)--(0.2,-0.2);
\draw[fill,gray] (0,0)rectangle (0.2,1);
\draw[fill,gray] (3,0)rectangle (3.4,1);
\draw [blue] (3.4,1.2)--(3.4,-0.2);

\draw[fill,gray] (6,0)rectangle (6.2,1);
\draw [green] (6.2,1.2)--(6.2,-0.2);
\node[black] at (6.4, 1.6) {$b_4$};
\draw [ultra thick, black] (6.4,1.4)--(6.4,1);

\end{tikzpicture}
\caption{}
\label{fig:subcorenewexample3}
\end{figure}

The greyed out pieces of cake are not called as part of the input. Only the winner agents are called as part of the input (agent $1$ and $4$). We keep track of the value they are guaranteed to get through the inputs $b_1$ and $b_4$. This is important since pieces $d$ and $e$ are no longer part of the input. Now there are two concerns which we may have from this recursive call: 1) Does it actually manage to return a neat allocation, 2) Does the allocation returned allow us to expand the set $W$ of winners. 1) is addressed by Claim \ref{existence}. 2) is addressed by claim \ref{expand}. 
Once the protocol returns, the situation may look like  (Figure~\ref{fig:subcorenewexample4}).
\begin{figure}[h!]
\begin{tikzpicture}[scale=1.1]

\draw (0,0) rectangle (2,1);
\draw (3,0) rectangle (5,1);
\draw (6,0) rectangle (8,1);
\draw(9,0) rectangle (11,1);
\draw (12,0) rectangle (14,1);
\node at (1,-0.5){$a$};
\node at (4,-0.5){$b$};
\node at (7,-0.5){$c$};
\node at (10,-0.5){$d$};
\node at (13,-0.5){$e$};
\node[red] at (5,2) {\LARGE $1$};
\node[blue] at (6,2) {\LARGE $2$};
\node[green] at (7,2) {\LARGE $3$};
\node[black] at (8,2){\LARGE $4$};
\draw [ultra thick,red] (0.2,1.4)--(0.2,1);
\node[red] at (0.2, 1.6) {$b_1'$};
\draw [ultra thick,red] (3.3,1.4)--(3.3,1);
\node[red] at (3.3, 1.6) {$b_1'$};
\draw [blue] (0.2,1.2)--(0.2,-0.2);
\draw [ultra thick,blue] (3.4,1.2)--(3.4,-0.2);

\draw [green] (6.2,1.2)--(6.2,-0.2);
\node[black] at (6.2, 1.6) {$b_4'$};
\draw [ultra thick, black] (6.2,1.4)--(6.2,1);

\end{tikzpicture}
\caption{}
\label{fig:subcorenewexample4}
\end{figure}

The recursive call allocated $a$ to $1$ and $c$ to $4$, leaving $b$ unallocated and with agent $2$'s trim to the right of agent $1$'s updated $b_i'$ value. This means that we can now add agent $2$ to the set of winners.

\begin{lemma}
Assuming envy-freeness is preserved the Core Protocol, envy-freeness is preserved in the Discrepancy Protocol.
\end{lemma}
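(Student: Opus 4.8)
The plan is to observe that the Discrepancy Protocol never commits any new allocation of cake on its own: the only steps that change which cake is held by which agent are the repeated invocations of the Core Protocol (in the ``Run Core'' phase and inside the while loop of line~\ref{dwhile1}), and everything else merely reshuffles \emph{unallocated} cake. Concretely, the first step takes the discrepant piece $e_{jkl}$ out of $R$ and reinserts all previously extracted pieces into $R$; but an extracted piece was only \emph{associated} with an allocated piece and never \emph{attached}, hence never allocated, so this operation changes only the residue and leaves every $X_i$ untouched. Likewise, the ``Exploit Discrepancy'' and ``Cannot Exploit Discrepancy'' branches only set the Boolean DISCREPANCY and, in the exploitable case, compute the partition $N=D\cup D'$, and in the non-exploitable case add $e_{jkl}$ back to $R$; no agent's bundle is modified in either branch. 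So it suffices to show that each Core Protocol call preserves envy-freeness of the committed partial allocation.

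For this I would use a composition argument based on additivity. Suppose the committed allocation just before a Core call is an envy-free partial allocation $X=(X_1,\dots,X_n)$ with residue $R$, and the Core Protocol returns an envy-free allocation $(Y_1,\dots,Y_n)$ of some $R'\subseteq R$ together with the new residue $R\setminus R'$. By the hypothesis of the lemma the Core Protocol preserves envy-freeness, so $V_i(Y_i)\ge V_i(Y_j)$ for all $i,j\in N$. Since $X_i$ and $Y_i$ are disjoint and $X$ is envy-free, additivity gives $V_i(X_i\cup Y_i)=V_i(X_i)+V_i(Y_i)\ge V_i(X_j)+V_i(Y_j)=V_i(X_j\cup Y_j)$, so the updated allocation $(X_i\cup Y_i)_{i\in N}$ is again envy-free. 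Iterating over all the Core calls made inside the Discrepancy Protocol (finitely many, by Lemma~\ref{Discb}), the committed partial allocation stays envy-free throughout, which is exactly the claim.

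Finally I would add a short remark on the partition $D,D'$ returned when DISCREPANCY $=1$. Although the actual division of $e_{jkl}$ among $D$ and of $R$ among $D'$ is performed by the recursive calls in the Main Protocol (and is therefore part of the Main Protocol's envy-freeness analysis, not this lemma), the while loop of line~\ref{dwhile1} has created a gap: every $i\in D$ has $V_i(e_{jkl})\ge nV_i(R)$ and every $i\in D'$ has $V_i(e_{jkl})\le V_i(R)/n$. Combined with the fact that envy-freeness implies proportionality, an agent $i\in D$ who later receives a proportional share of $e_{jkl}$ obtains value $\ge\frac1n V_i(e_{jkl})\ge V_i(R)$, so $i$ does not envy anyone holding (a part of) $R$, and symmetrically for $D'$; hence separating the agents along $D,D'$ is envy-safe. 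I do not expect a genuine obstacle in this proof: it is essentially bookkeeping, the only subtle points being the recognition that extracted-but-unattached pieces are unallocated (so moving them between $R$ and the side stash is allocation-neutral) and the elementary additivity argument for composing envy-free allocations.
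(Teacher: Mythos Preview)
Your proof is correct and follows the same approach as the paper, which simply notes that the Discrepancy Protocol makes no allocations of its own beyond calling the Core Protocol and otherwise merely distinguishes cases. Your version is considerably more explicit---spelling out why moving extracted-but-unattached pieces is allocation-neutral and giving the additivity argument for composing envy-free partial allocations---but the underlying idea is identical.
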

\begin{proof}
Discrepancy does not make allocations but simply distinguishes between $2$ cases. Therefore as long as envy-freeness is preserved in the Core Protocol it is also preserved in the Discrepancy Protocol.
\end{proof}

\begin{lemma}
Assuming envy-freeness is preserved in the Core Protocol and the Main Protocol with $n-1$ or less agents, envy-freeness is preserved in the GoLeft Protocol.
\end{lemma}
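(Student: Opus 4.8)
The plan is to track envy-freeness through each operation that the GoLeft Protocol performs on the working set of isomorphic snapshots, showing that each one preserves the invariant that the currently allocated cake (snapshot pieces together with attached extracted pieces) is envy-free. First I would fix notation: at any point each agent holds a set of isomorphic pieces $c_k$ in $S$ together with the extracted pieces attached to them up to (but not beyond, by Remark~\ref{remark:consistent}) his own extraction; I would argue that every agent's total value over the snapshots in $S$ is unchanged from the value he received in the original Core snapshots, because whenever an agent is handed an ``inferior'' piece in an exchange he simultaneously receives the attached extracted pieces that exactly compensate the bonus he gave up, up to his own trim. This running value-conservation claim, combined with the fact that each original Core snapshot is envy-free (Lemma~\ref{lemma:core-neat}), is the backbone of the argument.

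Next I would go operation by operation. (1) \emph{Exchange along a cycle:} if $i$ points to $j$ in the permutation graph, then $j$'s pieces have all attachments up to $i$'s extracted piece, so by definition $i$ values $j$'s current holding at least as much as his own; after the cyclic exchange everyone is weakly happier with his new bundle and nobody has received attachments beyond his own extraction, so envy-freeness and value-conservation are maintained. (2) \emph{Discarding snapshots in $S'$ (loop \ref{gfor1}--\ref{gefor1}):} each agent who has not yet had his extraction attached reserves the $|S|/(n-l+1)$ snapshots where his advantage over agents $\{1,\dots,l\}$ on the piece $c_k$ minus current attachments is largest; since we discard a $1/(n-l+1)$ fraction to each such agent, an averaging argument shows that across the discarded snapshots that agent has accumulated enough reserved advantage to cover the bonus value of \emph{all} remaining $e_{k(l+1)}$ pieces in $S$ — hence he will not envy an agent in $\{1,\dots,l\}$ even if that agent later also receives those pieces. (3) \emph{Dividing the bunched $e_{k(l+1)}$ pieces in $S''$ among $\{1,\dots,l\}$ (loop \ref{gfor2}--\ref{gefor2} and line \ref{gsub1}):} here I invoke the inductive hypothesis that the Main Protocol on $\le n-1$ agents is envy-free, plus the fact (stated in the excerpt) that envy-freeness implies proportionality, so each of agents $1,\dots,l$ gets at least a $1/l$ share of the aggregated piece $a$; by choosing $S''$ large enough — the $|S|n/(ln+1)$ fraction — this proportional share dominates what agent $l+1$ gets from the $e_{k(l+1)}$ pieces remaining in $S$, so agents $1,\dots,l$ do not envy $l+1$. (4) \emph{Actual attachment (line \ref{attach}):} in snapshots still in $S$ the piece $e_{k(l+1)}$ is attached to $c_k$; in the snapshots where agent $l$ keeps $c_k$, that piece is sent back to the residue rather than given to $l$ (since $l$'s extraction is earlier, the consistency condition of Remark~\ref{remark:consistent} forbids giving it to him), so no new envy is introduced. (5) \emph{Pieces returned to the residue:} these were unanimously insignificant, so returning them does not make any agent envious of the residue or change anyone's bundle value appreciably.

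Finally I would handle the exit condition (Step~\ref{step:separation}): when a node $i\in T$ in the located cycle has had fewer than $n-1$ attachments but has no extracted piece left to attach to its pieces $c_k$, then by construction the agents who did not extract a corresponding piece (the agents $\{k+2,\dots,n\}$ in Figure~\ref{fig:goleft}) had a unanimously \emph{significant} bonus over the agent originally allocated $c_k$, and therefore over every agent who did extract and now holds an isomorphic piece of $c_k$ (here I use that all attached extracted pieces are insignificant, guaranteed because the Discrepancy Protocol was run first, plus Remark~\ref{remark:consistent}). A significant bonus with respect to the \emph{current} residue is exactly what the subsequent $2B$ calls of the Core Protocol in the Main Protocol (loop \ref{mfor5}--\ref{mefor5}) convert into domination, in the spirit of Remark~\ref{remark:cutteradv}; so the set $A$ returned by GoLeft — the agents holding an element of $c_k$ — is dominated by $N\setminus A$, and envy-freeness is preserved since domination means the remaining residue can be allocated to $A$ without any agent in $N\setminus A$ becoming envious.

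The main obstacle I expect is step (2)--(3): getting the counting right so that the \emph{reserved} advantage in the discarded snapshots $S'$, and the \emph{proportional} share extracted in $S''$, are each provably at least the total value that the yet-to-be-attached pieces $e_{k(l+1)}$ contribute in the snapshots that survive in $S$. This is where the precise fractions $|S|/(n-l+1)$ and $|S|n/(ln+1)$ matter, and where one must be careful that discarding snapshots for one agent does not destroy the reservation already made for another — the order of the two for-loops and the fact that $S$ only shrinks are what make this work, but spelling it out cleanly is the delicate part of the proof.
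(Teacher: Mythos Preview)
Your proposal is correct and follows essentially the same approach as the paper's proof: both establish that the two discarding phases (your steps (2)--(3)) guarantee that neither agents in $\{l+1,\ldots,n\}$ nor agents in $\{1,\ldots,l\}$ become envious when $e_{k(l+1)}$ is attached, and that cycle exchanges preserve each agent's per-snapshot value. One remark: your final paragraph on the exit condition and significant bonuses actually argues what the paper separates out as Lemma~\ref{domin} (domination of $A$ by $N\setminus A$ after $2B$ further Core calls); the present lemma concerns only envy-freeness preservation during GoLeft's operations, which your steps (1)--(5) already cover.
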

\begin{proof}
  GoLeft performs a series of operations allowing us to attach a set of extracted piece $e_{kl}$ to a set of allocated piece $c_k$ in a subset $S$ (which gets smaller and smaller as the algorithm progresses) of the snapshots and then permutes the allocation on these snapshots. The first part of the protocol uses the permutation graph to find what permutation is permissible in an envy-free way, and what extracted piece to attach next to allow further permutations. It updates the permutation graph to reflect what permutations will be allowed and desirable after the operation of attaching $e_{kl}$ to $c_k$ has been executed. 
Let us first show that the operations to attach the extracted piece to the allocated piece do not break envy-freeness, and second that the permutation graph is correct, that is that on the subset of snapshots where the piece was attached, we can permute agents in a cycle of the permutation graph without breaking envy-freeness.
    
\begin{enumerate}
\item Assume we are trying to attach pieces $e_{k(l+1)}$ to the corresponding pieces in $c_k$ which implies that pieces $e_{km}$ with $m\leq l$ have already been attached to $c_k$. To do this we take the following two steps. 

\begin{enumerate}

  \item  
Make sure that for each agent $o \in \{l+1, \ldots, n\}$, even if all remaining extraction up till $o$'s
extractions $e_{k(l+1)},\ldots, e_{ko}$ are given to some agent  $z\in \{1,\ldots, l+1\}$, $o$ will not be envious of $z$.
We do so in phase `Attachment --- making it agreeable for agents from $l+1$ to $n$' of the GoLeft Protocol.

Note that if all $e_{l(l+1)}$ pieces are given to agent $l+1$, agents in $\{l+2, \ldots, n\}$ will not be envious of $l+1$ in any case because the pieces reflect a part of their advantage over $l+1$. Essentially, we want to ensure something weaker: that agent $o \in \{l+1, \ldots, n\}$ will not be envious if the remaining  $e_{l(l+1)}$ pieces are given to some agent $z\in \{1,\ldots, l\}$. 

We achieve the above goal by asking all agents $o$ to choose $\frac{|S|}{n-l+1}$ snapshots where they value the difference between their bonus value and the extracted pieces which have so far been attached to $c_k$ (let us call this value $b$) most. This leaves $\frac{|S|}{n-l+1}$ in $S\setminus S'$(which is relabelled $S$). In any snapshot $j$ and piece of cake $c_{jk}$, for any agent $o$, $b$ is greater than the sum of all $V_o(e_{jkp})$ with $p \in \{l+1, \ldots, o\}$. Since we ask agent $o$ to discard as many snapshots with highest value $b$ as are snapshots left in $S\setminus S'$ , the advantage accumulated in the discarded snapshots amounts to more than the sum of all extracted pieces $e_{kp}$ in the remaining snapshots. Hence, even if some agent $z\in \{1,\ldots, l\}$ takes all the $e_{k(l+1)}$ pieces, the discarding agent will not be envious.  
  \item 
  Ensure that for each agent $r \in \{1, \ldots, l\}$, even if the remaining pieces in the set of pieces $e_{k(l+1)}$ are given to some agent $z\in \{l+1, \ldots, n\}$, $r$ will not be envious of $z$. We do so in phase `Attachment --- making it agreeable for agents from $1$ to $l$' of the GoLeft Protocol. 
 To do this we ask $r$ to choose $\frac{|S|n}{ln+1}$ snapshots $j$ for which  $V_j(e_{k(l+1)})$ is greatest and add them to a piece $a$ which will then be shared amongst all agents in $\{1, \ldots, l\}$. This is possible in an envy-free way because agents in $\{l+1, \ldots, n\}$ are willing to give all pieces $e_{k(l+1)}$ to agent $r$ due to the above argument. These snapshots are discarded and once all agents in $\{1, \ldots, l\}$ have chosen we are left with $\frac{|S|}{ln+1}$ snapshots in $S \setminus S''$. Since $r$ gets at least $\nicefrac{1}{n}$ value of piece $a$, and that the $e_{jk(l+1)}$ pieces he values most were added in $a$, even if agent $l+1$ gets  all pieces $e_{k(l+1)}$ in $S \setminus S''$, $r$ will not be envious of him as these do not sum up to what he got from $a$. This means we can attach $e_{k(l+1)}$ to $c_k$ and that agent $l+1$ can be allocated $c_k$ in the snapshots in $S \setminus S''$(which will be relabelled $S$) without envy-freeness being broken. When $e_{k(l+1)}$ pieces are attached to $c_k$, then $c_k$ is equally  desirable to $l+1$ as $l+1$'s current allocation in $S$.
  \end{enumerate}
\item 
Each edge in the permutation graph corresponds to moving some agent $i$ from $c_k$ to $c_k'$ without changing his value in any of the isomorphic snapshots in $S$.
Hence, when an exchange happens, each agent in the cycle gets exactly as good an allocation as before. Some agent $j$ may possibly get envious because some other agent got extra attached pieces beyond $j$'s
extraction. However, we have already argued above why agents are not envious because of additional extractions that are attached.
 \end{enumerate}

These arguments ensure that we end up with snapshots where some extracted pieces have been attached or shared by a subset of  agents and for which agents' allocations have been shifted around, but for which envy-freeness is always preserved.
\end{proof}

\begin{lemma} \label{domin}
The subset of agents $A$ returned by GoLeft can be made to be dominated with respect to $R$ by agents in $N \setminus A$ in $2B$ steps. 
\end{lemma}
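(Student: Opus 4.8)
The plan is to combine the structural guarantee that GoLeft's termination condition hands us with the residue-shrinking power of the Core Protocol. Recall that GoLeft returns a set $A$ only through the separation branch (Step~\ref{step:separation}): at that moment there is a class $c_k$ of isomorphic allocated pieces in the (nonempty) working set $S$ such that $A$ is exactly the set of agents ever allocated a member of $c_k$ during the protocol, while no agent of $N\setminus A$ ever extracted a piece associated with $c_k$. Since a piece is extracted from the residue precisely when the corresponding bonus value is \emph{not} significant, each $i\in N\setminus A$ had a significant bonus over $c_k$; moreover, by the gap maintained in the Main Protocol (Step~\ref{mwhile1}), a significant bonus is in fact at least $V_i(R)f(B)^{1/2}$ of the residue, which stays at least $V_i(R)f(B)^{1/2}$ for $R$ the residue at GoLeft's return (the residue only shrinks). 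On the other hand, each of the at most $n-1$ extracted pieces that have been attached to a member of $c_k$ is, for $i$, below the significance threshold at the time it was extracted, hence together negligible next to $V_i(R)f(B)^{1/2}$. So in any of the $C$ isomorphic snapshots in which some agent of $A$ holds the fully attached $c_k$-descendant, $i$'s value for the piece he holds exceeds his value for that descendant by at least $V_i(R)f(B)$.

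The next step is to upgrade this per-snapshot surplus to a surplus over the whole bundles. Because exchanges inside GoLeft preserve every agent's value in every isomorphic snapshot (Remark~\ref{remark:consistent} and the accompanying invariants) and the allocated cake stays envy-free throughout, $V_i(X_i)-V_i(X_j)$ splits, over the Core runs, into per-run terms that are nonnegative by envy-freeness, together with the contributions of the $C$ isomorphic snapshots. For each $j\in A$ one identifies a block of those $C$ snapshots in which $j$ holds a $c_k$-descendant --- either snapshots still in $S$ at termination, or discarded snapshots whose allocation was frozen while $j$ still held $c_k$ (such snapshots exist because the first discarding that follows the moment $j$ is handed $c_k$ freezes precisely snapshots of this kind). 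On that block $i\in N\setminus A$ beats $j$ by at least $V_i(R)f(B)$ per snapshot by the previous paragraph, while on every other snapshot the contribution is negative by at most an insignificant amount; summing, and using $Cn\cdot f(B)\ll f(B)^{1/2}$ for our choice of $B$, we obtain $V_i(X_i)\ge V_i(X_j)+V_i(R)f(B)$ for all $i\in N\setminus A$ and $j\in A$.

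Finally, I would invoke the post-GoLeft Core calls. Running the Core Protocol with $i$ as cutter $2B$ times keeps the allocated cake envy-free (Lemma~\ref{lemma:core-neat} together with the fact that repeated Core calls preserve envy-freeness), never decreases $i$'s surplus over any other agent (each Core call is envy-free from $i$'s viewpoint, so $i$ weakly gains on everyone), and leaves a residue $R'$ with $V_i(R')\le V_i(R)f(2B)=V_i(R)f(B)^2$. Doing this with every agent as cutter --- the batch of $2B$ Core calls per cutter that the Main Protocol runs immediately after GoLeft --- yields $R'$ with $V_i(R')\le V_i(R)f(B)^2\le V_i(R)f(B)$ for every $i$, while the surplus above still holds; hence $V_i(X_i)\ge V_i(X_j)+V_i(R)f(B)\ge V_i(X_j)+V_i(R')$ for all $i\in N\setminus A$, $j\in A$. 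This is exactly domination of $A$ by $N\setminus A$ with respect to the residue, obtained in $2B$ Core calls per cutter.

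The step I expect to be the main obstacle is the bookkeeping in the second paragraph: verifying that, despite all the exchanges and the repeated discarding of snapshot subsets that GoLeft performs, every $j\in A$ can still be tied to a block of the $C$ isomorphic snapshots witnessing the significant advantage of each $i\in N\setminus A$ over it, and that discarding --- which only returns unattached, unanimously insignificant pieces to the residue --- neither destroys these witnessing blocks nor disturbs the preserved per-snapshot values. Once this is in place, the rest is a direct combination of the definition of significance, the Main Protocol's gap condition, and the elementary inequalities $f(2B)=f(B)^2\le f(B)$ and $Cn\cdot f(B)\ll f(B)^{1/2}$.
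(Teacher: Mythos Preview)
Your high-level plan matches the paper's: the significant bonuses that agents in $N\setminus A$ hold over the $c_k$ pieces survive everything GoLeft does, because all extra cake handed out during GoLeft (attachments and the pieces shared via the recursive Main Protocol call in line~\ref{gsub1}) is far below a significant value from every agent's viewpoint, and $2B$ Core calls then shrink the residue below the surviving advantage.

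Two points of divergence are worth noting. First, your assertion that ``the residue only shrinks'' is not correct: GoLeft repeatedly sends unattached extracted pieces back to the residue when snapshots are discarded. The paper handles this with a separate preliminary claim bounding the growth factor of the residue by $1+\tfrac{1}{n}$; this is harmless for the final inequalities but must be stated, since your comparison of the significant bonus against $V_i(R)$ at GoLeft's return depends on it. Second, the bookkeeping you flag as the main obstacle --- locating, for each $j\in A$, a block of frozen snapshots where $j$ holds a $c_k$-descendant, and then summing per-snapshot contributions with possibly negative terms elsewhere --- is bypassed in the paper by a direct global count: every extracted piece ever attached or shared is worth at most $V_d(R)f(B)$ to a dominator $d$ (by unanimous non-significance), and there are only polynomially-in-$(C,n)$ many such pieces, so the \emph{total} extra cake any agent can receive during GoLeft is dwarfed by a single significant bonus $\ge V_d(R)f(B)^{1/2}$. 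Combined with the facts that exchanges preserve $d$'s own value in every snapshot and that the pre-GoLeft allocation is per-snapshot envy-free, this yields the required global surplus without tracing which discarded snapshots witness which $j$. Your per-snapshot decomposition also works, but the paper's shortcut is exactly what dissolves the obstacle you anticipated.
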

\begin{proof}
    In order to prove the lemma, we prove a couple of claims.
\begin{claim}
For any agent $i$, once the residue has reached a value $R$, the protocol will never add cake to the residue in such a way that it becomes of value bigger than $(1+\frac{1}{n})R$.
\end{claim}
\begin{proof}
All extracted pieces which are attached to a piece of cake in one of the snapshots are seen as non-significant by all agents. This means that for all agents $i$ and each extracted piece $e_{jkl}$, we have $v_i(e_{jkl})\leq f(B)v_i(R)$. 
In the protocol, once the residue is made smaller (either through discrepancy or in the main protocol after running GoLeft), all pieces that had previously been extracted from it can no longer be reattached to it (in the case of discrepancy they are reattached before making the residue smaller, and in the second case the protocol no longer reattaches anything extracted from the residue). This means that when pieces that were extracted from the residue are reattached to it, each agent considers the pieces to satisfy the above inequality. Since there are at most $Cn^3$ such pieces, then for each agent $i$, $\sum_{e_{jkl}}{v_i(e_{jkl})}\leq f(B)v_i(R)Cn^3\leq \frac{v_i(R)}{n}$.
\end{proof}

\begin{claim}\label{cons}
Once an agent $i$ has been allocated a piece $c_{jk}$ by the GoLeft procedure, agents with a significant bonus value on $c_{jk}$ can be made to dominate $i$ with respect to $R$ in $2B$ calls to the Core protocol. 

In other words the significant advantage obtained is maintained.
\end{claim}
\begin{proof}
Before the GoLeft procedure works on the working set of isomorphic snapshots they are all envy-free. When the GoLeft procedure gives some of the extracted cake to an agent (either by attaching it to an allocated piece of cake (line \ref{attach}) or by sharing it amongst a subset of agents  (line \ref{gsub1}), the extra amount of cake given to the agent is much smaller than a significant value from the perspective of all agents (line \ref{mwhile1} in the Main Protocol). The values are such that when an agent has a significant bonus value over another, no matter what happens in the GoLeft procedure, the significant bonus cannot be diminished enough for the protocol to not be capable of making the residue smaller than that value in $2B$  calls to the Core protocol. To be more precise, the gap enforced by line \ref{mwhile1} ensures that the following inequality holds:  $n$(extracted pieces)$\leq nV_i(R)f(B)^2<V_i(R)f(B)^\frac{1}{2} \leq$ significant bonus.\qed
\end{proof}

The GoLeft routine terminates when the permutation graph is deleted. This means that there exists a set of isomorphic pieces of cake $c_k$ such that all agents who extracted a piece from the residue and attached it to $c_k$ have been allocated an element of $c_k$ at least once, whilst at least one other agent had a significant bonus on $c_k$. According to Lemma~\ref{cons}, the agents with a significant bonus on elements of $c_k$ can be made to dominate any agent having been allocated a member of $c_k$ in $2B$ steps.\qed
\end{proof}

\begin{lemma}\label{godom}
Assuming the Main Protocol with $n-1$ or less agents is envy-free, that the Core protocol, the Discrepancy Protocol and the GoLeft Protocol are envy-free, envy-freeness is preserved in the Main Protocol.
\end{lemma}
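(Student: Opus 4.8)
The statement is conditional: it assumes the Core, Discrepancy and GoLeft Protocols are envy-free and that the Main Protocol on $n-1$ or fewer agents is envy-free, so the proof is not an induction itself but a phase-by-phase walk through the Main Protocol, checking that every phase either makes no allocation at all (hence cannot create envy) or produces an allocation that, concatenated with the envy-free partial allocation built so far, stays envy-free. Throughout I would keep the invariant that at every point the cake splits into (i) pieces already allocated to all $n$ agents in an envy-free way, and (ii) an unallocated residue $R$ together with some set-aside extracted pieces that have not been handed to anyone.

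\textbf{Base case, snapshot generation, extraction.} For $|N|\le 4$ the claim is immediate from the known constant-time envy-free protocols. For $|N|>4$, the repeated Core-Protocol calls in the snapshot-generation phase and in the while loop of Step~\ref{mwhile1} each produce an envy-free allocation of a sub-piece of the current residue (Core-Protocol hypothesis, cf. Lemma~\ref{lemma:core-neat}); since Core only allocates cake taken from the residue and never touches already-allocated cake, concatenating these allocations preserves envy-freeness. The extraction phase removes pieces from the residue and merely \emph{associates} them to allocated pieces without giving anything to any agent, so envy-freeness of the allocated part is untouched and invariant (i)--(ii) is maintained.

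\textbf{Domination shortcut, Discrepancy.} Whenever a set $N\setminus A$ dominates $A$ with respect to the current residue $R$, the recursive call Main Protocol$(R,A)$ is envy-free by hypothesis ($|A|<n$), and by the definition of domination no agent of $N\setminus A$ envies any agent of $A$ even after all of $R$ goes to $A$; combined with envy-freeness of the already-allocated part this is globally envy-free. For the Discrepancy Protocol: it is envy-free by hypothesis and makes no allocation itself; when it returns $\text{DISCREPANCY}=1$ with $N=D\cup D'$, the protocol allocates $e_{jkl}$ among $D$ and $R$ among $D'$ (all previously extracted pieces having been returned to $R$), each call envy-free by hypothesis since $|D|,|D'|<n$. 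Cross-envy between $D$ and $D'$ is ruled out using the gap $V_i(e_{jkl})\ge nV_i(R)$ for $i\in D$ and $V_i(e_{jkl})\le V_i(R)/n$ for $i\in D'$ enforced by the Discrepancy while loop, together with the fact that envy-freeness implies proportionality: an agent $i\in D$ gets at least $\tfrac{1}{|D|}V_i(e_{jkl})\ge \tfrac{1}{n}V_i(e_{jkl})\ge V_i(R)$, which is at least whatever any $D'$-agent holds out of $R$, and symmetrically for $i\in D'$; since each agent was already non-envious of every other agent's snapshot piece, adding these extra shares keeps the total allocation envy-free. When $\text{DISCREPANCY}=0$ nothing is reallocated (the discrepant piece goes back to $R$), so there is nothing to check.

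\textbf{GoLeft and final recursion; main obstacle.} The GoLeft Protocol is envy-free by hypothesis; it returns $A$ and, by Lemma~\ref{domin}, the subsequent $2B$ Core-Protocol calls per agent on the shrinking residue make the agents of $N\setminus A$ dominate those of $A$ with respect to the final residue $R$ (these Core calls being themselves envy-free). The concluding call Main Protocol$(R,A)$ is envy-free by hypothesis on $|A|<n$ agents, and domination says exactly that no agent of $N\setminus A$ becomes envious when $A$ receives all of $R$; with envy-freeness of everything allocated earlier, the final allocation is envy-free. I expect the Discrepancy case to be the only genuinely delicate point, since it is the one place where an allocation that is envy-free merely \emph{within} a subgroup must be shown compatible with the global allocation over all $n$ agents, and that compatibility rests precisely on the $V_i(e_{jkl})\ge nV_i(R)$ versus $V_i(e_{jkl})\le V_i(R)/n$ separation produced by the Discrepancy Protocol's while loop.
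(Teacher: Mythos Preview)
Your proposal is correct and follows essentially the same approach as the paper: both arguments observe that the Main Protocol never allocates cake except through sub-protocols assumed envy-free, and then isolate the two places where it recurses on a strict subset of agents (after GoLeft using Lemma~\ref{domin} and domination, and after a successful Discrepancy using the $nV_i(R)$ versus $V_i(R)/n$ gap plus proportionality to rule out cross-group envy). Your write-up is in fact more detailed than the paper's own proof, which is quite terse, but the structure and the key inequalities are identical.
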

\begin{proof}
The Main Protocol does not make allocations without calling a sub-protocol. Assuming that all the sub-protocols it calls are envy-free, the fact that it calls them on all agents preserves overall envy-freeness. The Main Protocol also calls itself on a subset of the agents sharing either the residue or some discrepant piece. The two possible instances where this may occur are 
\begin{enumerate}
\item  The GoLeft Protocol has terminated, the Core Protocol has been run enough times to ensure that significant bonus values have been converted into domination on the residue $R$, and a subset of agents $N\setminus A$ dominate agents in subset $A$ with respect to the residue $R$. We then run the Main Protocol  with the residue and $A$ as input. Since according to Lemma~\ref{domin} agents in $N \setminus A$ dominate agents in $A$, they are willing to give all the residue to either agents in $A$, therefore no matter how agents in $A$ share $R$, agents in $N \setminus A$ will not be envious.
\item The Discrepancy Protocol has been called and a discrepant piece can be exploited. In that case the Main Protocol will call $2$ instances of itself. First on the discrepant pieces and agents who think it is significant (set $D$). Second on the residue and agents who think the discrepant piece is small (set $D'$). Each agent $i \in D$ thinks that the discrepant piece $e_{jkl}$ is at least $n$ times bigger than $R$. This means that if they think an agent $s$ gets all of $R$ but none of $e_{jkl}$, they will not be envious of him since $i$ is guaranteed $\frac{1}{n}$ value of $e_{jkl}$. Therefore the agents sharing $e_{jkl}$ are not envious of those sharing $R$. The same argument applies the other way around.
\end{enumerate}
\end{proof}

Based on the lemmata above, we have given the overall argument for envy-freeness. In each recursive call of the Main Protocol, the number of agents among which the cake is to be allocated decreases by at least one. Hence when the number of agents is four or less, we can use a known bounded envy-free protocol to allocate the residue in an envy-free manner. 

\begin{theorem}
  The Main Protocol allocates all the cake in an envy-free manner.
\end{theorem}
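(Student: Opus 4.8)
The plan is to prove the statement by strong induction on the number of agents $n=|N|$, bundling together the envy-freeness lemmas established above in the correct dependency order and adding an argument that the protocol in fact terminates with \emph{every} piece of cake allocated. The two things to establish at each level of the induction are (a) that the allocation returned is envy-free, and (b) that it is complete, i.e., the whole input cake is handed out.

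For the base case $n\le 4$, the Main Protocol simply invokes a known bounded envy-free protocol (Divide-and-Choose for $n=2$, Selfridge--Conway~\cite{BrTa96a} for $n=3$, and the protocol of~\cite{AzMa16a} for $n=4$), each of which allocates the whole residue in an envy-free way, so there is nothing to prove. For the inductive step, assume the statement holds for every instance with at most $n-1$ agents. Then: (i) the Core Protocol is envy-free unconditionally by Lemma~\ref{lemma:core-neat}; (ii) the Discrepancy Protocol is envy-free because it makes no allocations of its own and merely distinguishes two cases on top of the Core Protocol; (iii) the GoLeft Protocol is envy-free by the corresponding lemma, whose hypotheses are exactly ``Core is envy-free'' and ``the Main Protocol on $\le n-1$ agents is envy-free'', both now available --- the latter by the induction hypothesis, since every recursive Main call made \emph{inside} GoLeft (line~\ref{gsub1}, on $\{1,\ldots,l\}$ with $l<n$) is on a proper subset; and (iv) the Main Protocol on $n$ agents is therefore envy-free by Lemma~\ref{godom}, whose hypotheses are precisely (i)--(iii) together with the induction hypothesis. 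The dependency chain Core $\to$ Discrepancy $\to$ GoLeft $\to$ Main, with Main feeding back only into strictly smaller instances, is acyclic, so the induction closes.

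It then remains to argue completeness. Every recursive call of the Main Protocol is on a strictly smaller set of agents: after GoLeft returns $A$ we have $A\subsetneq N$, because $N\setminus A$ contains every agent who retained a significant bonus on the relevant isomorphic pieces and this set is non-empty (at least one allocated piece carries a significant bonus from some cutter's viewpoint, cf. Remark~\ref{remark:cutteradv} and the claim that $T'$ cannot contain all nodes); and when the Discrepancy Protocol succeeds it returns a partition $N=D\cup D'$ with both parts non-empty, since it is only entered when some agents find the discrepant piece significant and others do not. Hence the recursion is well-founded and terminates in the base case, where the residue is fully allocated. At each level the cake is the disjoint union of (a) the portion already handed out by the envy-free partial allocations --- the Core snapshots plus the pieces distributed during GoLeft --- and (b) the residue $R$ (respectively, the discrepant piece together with $R$), and every extraction, attachment, and ``return to the residue'' operation merely moves cake between (a) and (b) without discarding any; so once the recursive call allocates (b), nothing is left over. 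The agents who receive none of (b) at that level remain non-envious: in the GoLeft branch this is Lemma~\ref{domin} (after the extra $2B$ Core rounds, agents in $N\setminus A$ dominate $A$ with respect to $R$), and in the Discrepancy branch it is the proportionality-from-envy-freeness argument --- each agent in $D$ is guaranteed at least a $1/|D|$ share of the discrepant piece, which he values at $\ge n\,V(R)$, so he does not envy whoever receives all of $R$, and symmetrically for $D'$.

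The main obstacle I anticipate is not the envy-freeness bookkeeping but making the completeness/termination half airtight: one must verify that GoLeft always exits through the separation step~\ref{step:separation} rather than looping, which rests on the invariant that nodes in $T$ have in-degree exactly $1$ while every node points to every node of $T'$ (so a suitable cycle always exists, Claim~\ref{claim:cycle-exists}) together with the fact that only $Cn^2$ attachments can ever occur; and one must track, across extraction and the various ``put back into the residue'' steps of GoLeft and Discrepancy, that cake is conserved, so that the residue passed to the recursive call really is the \emph{entire} remaining cake. The boundedness theorem proved above already guarantees that all of this happens in finitely (indeed boundedly) many steps, so there is no danger of infinite descent.
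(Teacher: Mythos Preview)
Your proposal is correct and follows essentially the same approach as the paper: strong induction on $n$, with the base case handled by known protocols for $n\le 4$ and the inductive step assembled from the envy-freeness lemmas for Core, Discrepancy, GoLeft, and Main in the dependency order you describe. The paper's own proof is in fact much terser than yours---it simply says ``based on the lemmata above'' and notes that each recursive call strictly reduces the number of agents---so your explicit treatment of completeness (cake conservation across extraction, attachment, and return-to-residue operations) and of why $A\subsetneq N$ and $D,D'\neq\emptyset$ fills in details the paper leaves implicit.
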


\section{Partial allocations}

In Remark~\ref{remark:waste}, we gave an argument that the first $n$ calls of the Core Protocol in the Main Protocol ensure that we obtain a partial allocation that is envy-free and gives each agent $\nicefrac{1}{n}$ value of the whole cake.
As a result we answer an open problem posed by Segal-Halevi et al.~\cite{HHA15a} whether there exists a bounded algorithm that returns a proportional and envy-free partial allocation. 
We also note the following.
\begin{lemma}
If the SubCore Protocol runs in time $f(n)$, then there exists an algorithm that returns a proportional and envy-free partial allocation, and takes $O(nf(n))$ time.
\end{lemma}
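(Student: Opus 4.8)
The plan is to let the algorithm be simply the prefix of the Main Protocol consisting of its first $n$ calls to the Core Protocol, each made with a distinct agent acting as the cutter, and then halting. First I would bound its running time. A single invocation of the Core Protocol does the following: the cutter places $n-1$ cuts on the current residue; each of the $n-1$ non-cutter agents is asked for at most one trim; the SubCore Protocol is invoked once, on the resulting $n$ pieces with the $n-1$ non-cutter agents as the target set; and the cutter is handed one untouched leftover piece. The cuts and trims cost $O(n^2)$ Robertson--Webb queries, and the generation of the imaginary tie-breaking values of \secref{tiearg} issues no such query (or, if one insists on charging for it, only $O(nf(n))$ of those values are ever consumed across the $n$ Core calls and so can be produced on demand within budget). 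Hence each Core invocation costs $O(n^2+f(n))$, and since SubCore run on $n-1$ agents may in the worst case have each of up to $n-1$ agents trim each of up to $n-1$ contested pieces we have $f(n)=\Omega(n^2)$, so the $n$ successive invocations total $O(nf(n))$.

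Next I would certify the output using only facts already in hand. By \lemref{lemma:core-neat} every single call to the Core Protocol returns an envy-free partial allocation, and since each later call operates only on the cake left unallocated by the earlier ones, the union of the $n$ partial allocations is still envy-free; this is exactly the reasoning recorded in \remref{remark:waste}. For proportionality with respect to the whole cake I would repeat the argument of \remref{remark:waste}: fix an agent $i$ and let $j\le n$ index the call in which $i$ is the cutter. Envy-freeness of the allocated cake forces $i$ to receive at least $\frac{1}{n}$ of his own value of all cake allocated in the first $j-1$ calls; in the $j$-th call $i$ is the cutter, so he receives at least $\frac{1}{n}$ of his own value of whatever cake is still unallocated at the start of that call; adding the two contributions gives $i$ at least $\frac{1}{n} V_i([0,1])$.

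The only point that needs genuine care is checking that the overhead of a Core invocation beyond the SubCore call really is dominated. The $n-1$ cuts and the at-most-$n-1$ trims are plainly $O(n^2)$ and are absorbed once $f(n)=\Omega(n^2)$; the tie-breaking value generation contributes nothing in the query model and, as noted above, can be made $O(nf(n))$ even under a computation-cost accounting by producing values lazily. With that bookkeeping settled, the stated $O(nf(n))$ bound and the two fairness guarantees all follow, and no idea beyond \lemref{lemma:core-neat} and \remref{remark:waste} is required.
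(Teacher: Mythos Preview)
Your proposal is correct and follows exactly the route the paper intends: the paper states this lemma immediately after \remref{remark:waste} without an explicit proof, and your argument---run the Core Protocol $n$ times with distinct cutters, invoke \lemref{lemma:core-neat} for envy-freeness and \remref{remark:waste} for proportionality, and bound each Core call by $O(n^2+f(n))=O(f(n))$---is precisely the implicit justification. The only minor inaccuracy is that the Core Protocol itself does not ask the non-cutter agents for trims before invoking SubCore (those trims occur inside SubCore), but since this only over-counts the non-SubCore overhead it does not affect your $O(nf(n))$ bound.
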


Note that the partial allocation may give disconnected pieces to the agents. 
We now give an additional result concerning partial allocation of the cake in which agents get connected pieces. 

It is possible to use the SubCore Protocol to obtain  envy-free partial allocations guaranteeing that each agent receives a \emph{connected} piece of value $\nicefrac{1}{3n}$ of the original cake. We specify  Algorithm~\ref{algo:Connected} that achieves this objective.
Previously, Segal-Halevi et al.~\cite{HHA15a} showed that in bounded time, one can obtain an envy-free partial allocation in which each agent gets a connected piece of value $1/2^{n-1}$ of the whole cake. Note that our guarantee of $O(\nicefrac{1}{n})$ is optimal up to a constant factor.

\begin{algorithm}[h!]
\caption{ConnectedPieces Protocol}
\label{algo:Connected}
 \begin{algorithmic}
 \scriptsize
 \REQUIRE  Agent set $N$, unallocated cake $R$.
 \ENSURE An envy-free partial allocation of cake for agents in $N$ guaranteeing each agent $\nicefrac{1}{3n}$ of the cake and a connected piece.
 \end{algorithmic}
 \begin{algorithmic}[1]
  \scriptsize

  \FOR{each agent $i\in N$}
      \IF{$i$ is currently not holding a piece}
          \FOR{each connected piece $c_j$ in the residue $R$}
              \STATE Ask $i$ to decompose  $c_j$ into pieces of value $\nicefrac{1}{3n}$ and some leftover piece $\alpha_j$ of value less than $\nicefrac{1}{3n}$.
	   \COMMENT{This does not change the cake except that it has trim marks on it. It is still the same cake with possibly some gaps in it because of agents in the previous iterations holding a piece.}
          \ENDFOR
              \STATE Pick $n$ of the pieces of value $\nicefrac{1}{3n}$ and call the set of those $n$ pieces $R'$
              \STATE Run SubCore($R'$, $N\setminus \{i\}$) and give $i$ one of the unallocated pieces. 
              \STATE Agent $i$ believes the allocation he got from the previous step is of value $\nicefrac{1}{3n}$, if another agent believes that the previous step gave him a bigger piece than what he was previously holding and bigger than $\nicefrac{1}{3n}$ then he keeps that piece, returning any previous piece he was previously holding to the residue. If an agent does not keep the piece given by the SubCore protocol, the piece is returned to the residue $R$.            
      \ENDIF
      
      \COMMENT{At any point of the algorithm, there are agents who do not hold any cake and agents who hold exactly one connected piece. At the start no one holds a piece. Agents who hold a connected piece will keep holding a connected piece. They may exchange their current connected piece with a more preferred connected piece later on. }
  \ENDFOR
  \RETURN envy-free partial allocation (in which each agent gets a connected piece)
 \end{algorithmic}
\end{algorithm}

In each iteration of the for loop in the algorithm, at least one agent who was not holding any piece now holds a piece. In particular agent $i$ who is asked to make divisions of value $\nicefrac{1}{3n}$ can get one of the pieces. 
 At any point in the algorithm we should view the cake as having at most $k<n$ gaps or discontinuities as a result of $k$ agents holding connected pieces from the cake. When a new agent $i$ makes trims to create pieces of value $\nicefrac{1}{3n}$, this does not change the cake or the gaps in it.  The algorithm has $n$ calls of the SubCore Protocol.

We observe that for any agent holding a piece of cake, he already gets value $\nicefrac{1}{3n}$ of the original cake. For any agent not holding a piece of cake, he thinks that no agent who is holding a piece has a piece of value $\nicefrac{1}{3n}$. Hence, he thinks that at least $\nicefrac{2}{3}$ value of the cake is unallocated.


\begin{theorem}
The ConnectedPieces Protocol allocates the cake partially in an envy-free way such that each agent has a connected piece of value at least $\nicefrac{1}{3n}$ of the original cake.
\end{theorem}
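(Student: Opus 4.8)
The plan is to run an induction over the iterations of the single \texttt{for} loop of Algorithm~\ref{algo:Connected}, maintaining three invariants. Call an agent a \emph{holder} if he currently has a piece. The invariants are: (I1) every holder has exactly one piece, it is connected, and he values it at least $\nicefrac{1}{3n}$; (I2) the sub-allocation restricted to the holders is envy-free; and (I3) every non-holder values every currently held piece at most $\nicefrac{1}{3n}$. Two structural facts accompany these: the set of holders never shrinks (by the ``keep or return'' rule an agent offered a piece by SubCore either strictly improves or keeps what he had, and the cutter always ends up with a piece), and each holder's value for his own piece is non-decreasing over time. Since the loop visits every agent and every visited agent is a holder when its visit ends, all $n$ agents are holders when the loop terminates; at that point (I2) is full envy-freeness and (I1) gives the connected $\nicefrac{1}{3n}$-pieces, so it remains only to establish the invariants.

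First I would dispose of the base case (no holders, residue $[0,1]$, all invariants vacuous) and the trivial branch of the inductive step (when the visited agent $i$ already holds a piece nothing changes). The substantive case is when $i$ is visited as a non-holder and becomes the cutter. Here I would first argue the algorithm can execute: by (I1) and (I3) there are at most $n-1$ holders and $i$ values each of their disjoint connected pieces at most $\nicefrac{1}{3n}$, so $i$ values the residue at least $1-(n-1)/(3n)=\nicefrac{2}{3}+\nicefrac{1}{3n}$; and since the residue is $[0,1]$ minus at most $n-1$ disjoint subintervals it has at most $n$ connected components, so decomposing each component into $\nicefrac{1}{3n}$-chunks (with one leftover of value below $\nicefrac{1}{3n}$ per component) yields at least $3n(\nicefrac{2}{3}+\nicefrac{1}{3n})-n=n+1$ full chunks, enough to form $R'$. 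The call \texttt{SubCore}$(R',N\setminus\{i\})$ is made with benchmarks $0$, for which the precondition is trivially satisfied, so by the analysis behind Lemma~\ref{lemma:core-neat} it returns a \emph{neat} allocation of $R'$ among the $n-1$ agents with one chunk left untrimmed and unallocated; that chunk is worth exactly $\nicefrac{1}{3n}$ to $i$, and handing it to $i$ re-establishes (I1) for $i$.

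It then remains to verify that the ``keep or return'' step preserves (I1)--(I3). The lever is neatness: every agent $a\in N\setminus\{i\}$ values his SubCore piece at least as much as any other piece appearing in this round (neatness~(iv)) and at least as much as the untrimmed chunk taken by $i$ (neatness~(iii)), while $i$ values every chunk of $R'$ at exactly $\nicefrac{1}{3n}$. Combined with the upgrade threshold --- which, thanks to (I1), is automatically met whenever a prior holder would want to upgrade --- this gives that after the round every agent's value for his own piece is at least his pre-round value and, if he is not the cutter, at least $\max_{p\in R'}V_a(p)$; plugging this into the pre-round (I2) and (I3) via a short case split (upgraded / kept old piece / cutter) yields the post-round (I2) and (I3). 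I expect the main obstacle to be exactly this bookkeeping: the delicate point is that when an agent $a$ \emph{keeps his old piece}, the simultaneous re-placement of all other agents onto (trimmed) pieces of $R'$ must create no envy at $a$, which is precisely where one uses that $a$'s current piece is worth at least $\max_{p\in R'}V_a(p)$ (it is either his SubCore piece or dominates it because he declined to upgrade) together with the fact that his old piece was already weakly preferred to everything in the pre-round allocation. A secondary point is that (I1) and (I3) must be used jointly to keep the residue above $\nicefrac{2}{3}$ so that the chunking step never runs out of $\nicefrac{1}{3n}$-pieces.
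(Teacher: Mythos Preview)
Your proposal is correct and takes essentially the same approach as the paper: both arguments hinge on (a) a residue-counting step showing the cutter can always produce $n$ chunks of value $\nicefrac{1}{3n}$ (using that a non-holder values every held piece at most $\nicefrac{1}{3n}$, together with the bound on the number of residue components), and (b) the neatness of SubCore plus the upgrade rule to propagate envy-freeness across rounds. Your invariants (I1)--(I3) make explicit what the paper leaves informal, and your observation that (I1) renders the ``bigger than $\nicefrac{1}{3n}$'' clause redundant for prior holders is a nice tightening, but there is no substantive difference in the argument.
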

\begin{proof}
	We show that the algorithm proceeds correctly and terminates with each agent holding one connected piece. The algorithm iteratively calls SubCore with $n$ pieces created by each agent.
The only reason why the algorithm may not proceed is if an agent $i$ cannot produce $n$ divisions of value $\nicefrac{1}{3n}$ of the original cake. When a new agent $i$ is asked to make $n$ divisions of value exactly $\nicefrac{1}{3n}$, we want to show he can make these divisions by appropriate trim marks. Assume that at this point $k$ agents are holding a piece of cake which means that there are $k$ gaps (discontinuities) in the cake. 
So for an agent $i$ who wants to create divisions of value $\nicefrac{1}{3n}$, there can be two problems: (1) too much value of the cake is allocated and (2) the pieces that are allocated lead to discontinuities which prevents the agent from creating $n$ divisions of value $\nicefrac{1}{3n}$. For the first case, less than $\nicefrac{k}{3n}$ value of the cake is allocated from $i's$ perspective. 
Hence at least $\nicefrac{2}{3}$ value of the cake is still unallocated for $i$. This means that there is enough cake to generate $n$ pieces of value $\nicefrac{1}{3n}$.
For the second case, the discontinuities result in at most $\nicefrac{k+1}{3n}$ of the cake not being used. This is because the $k$ gaps divide $R$ into at most $k+1$ pieces and less than $\nicefrac{1}{3n}$ of cake can be lost per piece due to non-divisibility by $\nicefrac{1}{3n}$ of the value of each connected piece. This means that there is still more than $\nicefrac{1}{3}$ of the cake that can be used for the divisions. Since we require $n$ divisions each of value $\nicefrac{1}{3n}$, there is enough cake to obtain these divisions. 

\bigskip

We now argue that the final allocation is envy-free. At some point an agent $i$ gets a piece to hold. He thinks this piece is strictly more preferred than any piece that was held in previous iterations by any agent. In the same iteration $i$ thinks he got at least as valuable a piece as any piece allocated in the iteration because of envy-freeness of SubCore. So $i$ thinks he has a best piece. In case at a latter iteration, he thinks someone got a better piece than his own currently held piece, then due to envy-freeness of SubCore, $i$ can get at least preferred a piece as well. 
\end{proof}

We also note the following.
\begin{lemma}
If the SubCore Protocol runs in time $f(n)$, then there exists an algorithm that takes $O(nf(n))$ time and allocates the cake partially in an envy-free way such that each agent has a connected piece of value at least $\nicefrac{1}{3n}$ of the original cake. 
\end{lemma}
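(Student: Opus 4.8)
The plan is simply to invoke the \emph{ConnectedPieces Protocol} (Algorithm~\ref{algo:Connected}) and to observe that its total query cost is dominated by its $n$ calls to the SubCore Protocol. Correctness --- that the protocol terminates with an envy-free partial allocation in which every agent holds a connected piece of value at least $\nicefrac{1}{3n}$ of the original cake --- is exactly the statement already proved for the ConnectedPieces Protocol, so nothing new is needed there; it only remains to bound the number of Robertson--Webb queries by $O(nf(n))$.

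To do this I would fix one iteration of the outer \textbf{for} loop of Algorithm~\ref{algo:Connected} (there are exactly $n$ such iterations, one per agent) and bound its cost. Suppose agent $i$ is not currently holding a piece, so the iteration does work. At this moment at most $n-1$ agents each hold a single connected piece, hence the residue $R$ is a union of at most $n$ connected subintervals, as noted in the discussion preceding the correctness theorem. Asking $i$ to decompose every connected piece of $R$ into chunks of value $\nicefrac{1}{3n}$ plus a leftover costs $O(n)$ queries in total: the whole cake has value at most $1$ from $i$'s perspective, so there are at most $3n$ chunks of value $\nicefrac{1}{3n}$ altogether, and locating each chunk boundary and each leftover needs a constant number of {\sc Cut} and {\sc Evaluate} queries. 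Choosing the $n$ chunks that form $R'$ uses no queries. The single call to SubCore$(R',N\setminus\{i\})$ costs $f(n)$ queries by hypothesis. Finally, the step in which the remaining agents decide whether to keep the piece handed to them by SubCore requires each of them $O(1)$ {\sc Evaluate} queries, i.e.\ $O(n)$ queries for that iteration. So one iteration uses $O(n)+f(n)=O(n+f(n))$ queries.

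Summing over the $n$ iterations gives $O(n(n+f(n)))=O(n^2+nf(n))$ queries. Since any meaningful run of the SubCore Protocol already passes through its outer \textbf{for} loop over its $n'$ target agents, we have $f(n)=\Omega(n)$, so $n^2=O(nf(n))$ and the total is $O(nf(n))$, as claimed.

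I do not expect a genuine obstacle here: the only two points that need care are (i) verifying that the residue never splits into more than $n$ connected pieces --- immediate, since at most $n-1$ agents hold a (connected) piece at any time --- and (ii) checking that the per-iteration overhead outside the SubCore call is only $O(n)$ and is therefore absorbed into $f(n)$. Both are routine bookkeeping; everything substantive has already been handled by the correctness proof of the ConnectedPieces Protocol.
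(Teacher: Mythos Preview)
Your proposal is correct and follows the same approach as the paper: the paper itself does not give a detailed proof of this lemma but simply states it as an observation immediately after remarking that ``the algorithm has $n$ calls of the SubCore Protocol,'' and your argument fills in exactly the bookkeeping one would expect (bounding the residue's connected components by $n$, the per-iteration non-SubCore work by $O(n)$, and absorbing the $O(n^2)$ overhead into $O(nf(n))$ via $f(n)=\Omega(n)$). The only cosmetic slip is the phrase ``exactly $n$ such iterations''---the outer loop has $n$ iterations but some may be skipped because the agent already holds a piece; this does not affect your upper bound.
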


\section{Discussion}

In this paper we presented the first general discrete and bounded envy-free protocol. 
Now that boundedness has been established, the paper opens the door to new work on finding the optimal bound.  Getting a clearer understanding of the complexity of envy-freeness in terms of the number of queries is an interesting direction for future work.

The Core Protocol we defined can be used as an oracle in other envy-free protocols that allocate a large enough part of the cake. If the Core Protocol is called $n$ times on the updated residue each time with a different agent as cutter, then we obtain an envy-free allocation in which each agent gets $\nicefrac{1}{n}$ of the total value of the cake.  
Hence, after the first $n$ calls of the Core Protocol, our Main Protocol already identifies an allocation that satisfies proportionality (with respect to the whole cake) and envy-freeness.
If the Main Protocol is taking too much time, it can be timed out and still give an envy-free allocation with the proportionality guarantee.


It will also be interesting to use the new techniques presented in the paper for finding fair allocations for other notions of fairness. 

\section*{Acknowledgments}
The authors thank  Sajid Aziz, Simina Br{\^{a}}nzei,
Omer Lev, Abdallah Saffidine, Erel Segal-Halevi, Xin Huang, Jade Tan-Holmes, Tommaso Urli, and the reviewers of FOCS 2016 for comments. They also thank Steven Brams and Ariel Procaccia for pointers to the literature. 
Haris Aziz is supported by a Julius Career Award.
He thanks Ulle Endriss for introducing the subject to him at the COST-ADT Doctoral School on Computational Social Choice, Estoril, 2010.

\setlength{\bibsep}{3pt}
\bibliographystyle{plainnat}


%

\end{document}